\newtheorem{example}{Example}
\newtheorem{remark}{Remark}
 \newtheorem{corollary}{Corollary}
 \newtheorem{lemma}{Lemma}
\title{Identifiability of Latent Class Models with Covariates}
\author{Jing Ouyang and Gongjun Xu\\
Department of Statistics, University of Michigan}
\date{}
\begin{document}
\maketitle

\begin{abstract}
	Latent class models with covariates are widely used for psychological, social, and educational research. Yet the fundamental identifiability issue of these models has not been fully addressed. Among the previous research on the identifiability of latent class models with covariates, \citeA[{\it Psychometrika}, \textbf{69}:5-32]{huang2004} studied the local identifiability conditions. However, motivated by recent advances in the identifiability of the restricted latent class models, particularly  Cognitive Diagnosis Models (CDMs), we show in this work that the conditions in \citeA{huang2004}  are only necessary but not sufficient to determine the local identifiability of the model parameters. 
	%Our theoretical development is motivated by recent advances in the identification of Cognitive Diagnosis Models (CDMs), which is a special family of restricted latent class models.  
	 To address the open identifiability issue for latent class models with covariates, this work establishes conditions to ensure the global identifiability of the model parameters in both strict and generic senses. Moreover, our results extend to the polytomous-response CDMs with covariates, which generalizes the existing identifiability results for CDMs.
	 
\end{abstract}

\newpage

\section{Introduction}
Latent class models are extensively applied in numerous scientific fields,  including educational assessments, biological research, and psychological measurements,  to 
 infer the latent subgroups of a population as well as each subject's latent classification information. For instance, one application of latent class models in cognitive diagnosis is to classify individuals with different latent attributes based on their observed responses to items, for which reason they are key components in educational measurements \cite{junker2001, von2008toefl}, psychiatric evaluations \cite{Templin2006}, and disease detections \cite{wu2017}. %subjects could be separated into homogeneous latent groups, and thus 
In addition to understanding the basic parameters in latent class models, researchers are also interested in studying the relations between latent class parameters with the observed covariates, 
such as subjects'  gender, race, education level, and other characteristics \cite{Formann1985,collins2009latent,huang2004}. 
%which are fixed and known information of the subjects including their gender, race, education level and other characteristics \cite{Formann1985,collins2009latent,huang2004}. }

Latent class models with covariates can help to improve the classification accuracy of the latent classes and are useful in testing  whether the covariates are related to the latent class membership probability or response probability. Such latent class models involving covariates have been studied in many works in psychometrics and statistics literature, where covariates were mostly constrained  to be discrete at early stage \cite{clogg1984,Formann1985}, and further extended to be in general forms \cite{Dayton1988, vander1996, muthen2017mplus}. The models have been popularly applied in educational, psychological, and behavioral sciences \cite{collins2009latent,muthen2005discrete,reboussin2008locally,bakk2013,park2018explanatory}.
The related estimation problems  have also received great interest from researchers in the psychometrics field, such as estimating the covariate coefficients  \cite{Janne2012}, adjusting for the bias in the estimation \cite{bakk2013}, and  estimating the number of latent classes  \cite{huang2005,pan2014}.

For latent class models with or without   covariates, identifiability is one of the most fundamental issues as it is the prerequisite for parameter estimations and statistical inferences. Identifiability could be interpreted as the feasibility of recovering the model parameters based on observed responses, i.e., the  parameters in identifiable models should be distinct given the probabilistic distribution of the observations. A rich body of literature have studied identifiability issues, dating back to \citeA{koopmans1950} and \citeA{koopmansetal1950}. Specifically, \citeA{McHugh1956} proposed conditions to determine the local identifiability for the binary-response latent class models, and  \citeA{goodman1974} further extended the local identifiability conditions to the polytomous-response models. In the sense of strict identifiability, \citeA{gyllenberg1994} found that the binary-response latent class models can not be strictly identifiable. Nonetheless, \citeA{allman2009} considered the concept of generic identifiability and established sufficient conditions for the generic identifiability of latent class models, where a model is said to be generically identifiable if the model parameters are identifiable except for a measure-zero set of parameters. However, their generic identifiability conditions can only be applied to the unrestricted latent class models, but not directly to the restricted latent class models. To address this issue, \citeA{xu2017} and \citeA{xu2018JASA} established the results for the identifiability of the $Q$-restricted binary-response latent class models. For the polytomous-response models, \citeA{Culpepper2019} and \citeA{Fang2019} established strict identifiability conditions based on the algebraic theorems proposed by \citeA{kruskal1977}. Moreover, \citeA{gu2018partial} studied the generic and partial identifiability of the restricted binary-response latent class models and extended their conditions to the polytomous-response models as well.

Among existing research, most focus on the identifiability of general or restricted latent class models without covariates, whereas few investigate the identifiability of latent class models with covariates. As the observed covariates represent  characteristics of certain homogeneous groups, incorporating covariates into latent class models would help to explain the association of these characteristics with latent classes. The regression latent class models with covariates are general extensions of  latent class models without covariates. In other words, the regular or restricted latent class models can be viewed as a special family of latent class models with covariates, where all covariates values are zero. 
Technically speaking, existing identifiability results for regular or restricted latent class models cannot be directly applied to the regression latent class models due to the existence of covariates, and new techniques are needed to establish the identifiability of the corresponding regression coefficients for those covariates, which do not exist in the regular or restricted latent class models. In the literature, \citeA{huang2004} was among the first to study the identifiability of latent class models with covariates. The authors studied the local identifiability conditions for the  model parameters, that is, the conditions to ensure that the model parameters are identifiable in a neighborhood of the true parameters.
%,  conditions are claimed to be sufficient to guarantee the local identifiability of the proposed model. 

\begin{sloppypar}
	However, as to be shown in the paper, the proposed identifiability conditions in \citeA{huang2004} are only necessary but not sufficient for the local identifiability of latent class models. Our argument borrows ideas from the recent developments in the identifiability of Cognitive Diagnosis Models (CDMs), a special family of the restricted latent class models. 
Besides, the results in \citeA{huang2004} only concern the local identifiability but not the global identifiability. 
In light of these, our work establishes identifiability conditions to check the global identifiability for  latent class models with covariates. 
Furthermore, we also establish the identifiability results for CDMs with covariates, which is a special family of the regression latent class models. Our results extend  many identifiability conditions for the binary-response CDMs to the polytomous-response CDMs with covariates, and these conditions are beyond results in the existing literature related to CDMs identifiability \cite{xu2017,Culpepper2019, gu2018partial}.
\end{sloppypar}

The organization of this paper is as follows. Section \ref{Models and Existing Works} introduces the setup of the regression latent class models with covariates  as well as the regression CDMs, and reviews some existing identifiability results. Section \ref{Necessity and Sufficiency} discusses the necessity and sufficiency of the existing identifiability conditions for the regression latent class models. Section \ref{proposed identifiability conditions} presents our main results  for both strict and generic identifiability of the regression latent class models as well as the regression CDMs. Section \ref{Data Example} uses a Trends in Mathematics and Science Study (TIMSS) dataset  as an example to illustrate the application of the identifiability results in educational assessments. Section \ref{Conclusion} gives a discussion. The proofs for the main theorems and propositions are provided in the Supplementary Material.

\section{Model Setup and Existing Works}
\label{Models and Existing Works}
\subsection{Regression Latent Class Models (RegLCMs)}
\label{Covariate Effect Model}

%The setup of the regression latent class models starts with some general notation definitions. 
We  start with the setup of latent class models without covariates.
 Suppose there are $N$ subjects responding to $J$ items. The response of subject $i$ is denoted as $\bm{R}_i = (R_{ij};j = 1, \ldots,J )$, where $R_{ij}$ denotes the response of subject $i$ to item $j$, for $ i = 1,\dots, N$. And $R_{ij}\in\{0, \dots, M_j-1\}$, where $M_j$ denotes the number of possible values for $R_{ij}$. Denote $\mathcal{S} = \bigtimes_{j=1}^{J} \{0,\dots,M_j-1\}$ as the set of all response patterns, and its cardinality is denoted as $S = |\mathcal{S}| = \prod_{j=1}^{J} M_j$. The case at $M_j=2$ corresponds to the binary-response models. Consider there are $C$ latent classes and denote $L_i$ as the latent class membership for subject $i$. Assume the $N$ subjects are independent; for  $c = 0, \dots, C-1$,  $L_i = c$ implies that the subject $i$ is in the $c$th latent class category and $ \eta_c = P(L_i = c)$ defines the latent class membership probability, i.e. the probability for subject $i$ being in the $c$th latent class. The latent class membership probabilities are summarized as $\bm{\eta} = (\eta_c; c= 0, \ldots,C-1)$. For any $ j =1, \dots, J$, $r = 0,\dots,M_j-1,$ and $c = 0,\dots, C-1$, we use $\theta_{jrc} = P( R_{ij} = r \mid L_i = c )$ to denote the conditional response probability, i.e. the probability of the response to item $j$ being $r$ given the subject $i$ is in the $c$th latent class. Let the vector $\bm{\theta_{jc}}=(\theta_{j0c}, \cdots, \theta_{j(M_j-1)c})$ to denote the probability vector for item $j$ given the latent class membership $c$. The conditional response probabilities are summarized as $\bm{\Theta} = (\bm{\theta_{jc}}; j=1,\ldots, J, c=0,\ldots, C-1)$. 
 The conditional probability mass function for $R_{ij}$ is   
$	P(R_{ij} \mid L_i=c, \bm{\theta_{jc}}) = \prod_{r = 0}^{M_j-1 }\theta_{jrc}^{\mathbb{I}\{R_{ij} = r\}}$, and the probability mass function
   of $\bm{R}_i$  is
\begin{align*}
	P(\bm{R}_i \mid  \bm{\eta},   \bm{\Theta}) &= \sum_{c=0}^{C-1}\eta_c  \prod_{j = 1}^{J} P(R_{ij} \mid   L_i = c , \bm{\theta_{jc}}) =  \sum_{c=0}^{C-1}\eta_c \prod_{j = 1}^{J} \prod_{r = 0}^{M_j -1 }\theta_{jrc}^{ \mathbb{I}\{R_{ij} = r\}}. 
\end{align*}

To introduce the regression latent class models, following the model setting  in \citeA{huang2004}, we let the latent class membership probability $\eta_c$'s and the conditional response probability $\theta_{jrc}$'s to be functionally dependent on covariates. Denote $(\bm{x}_i, \bm{z}_i)$ to be the covariates of subject $i$, where $\bm{x}_i = (1, x_{i1}, \cdots, x_{ip})_{(p+1)\times 1}^{T}$ are the primary covariates related to the latent class membership $\eta_c$ for $c = 0, \dots, C-1$, and $\bm{z}_i = (\bm{z}_{i1}, \cdots, \bm{z}_{iJ})_{J\times q}^{T}$ with $\bm{z}_{ij} = ( z_{ij1}, \cdots, z_{ijq})_{q\times1}^{T}$ are the secondary covariates associated with the conditional response probability $\theta_{jrc}$ for any $ j =1, \dots, J$, $r = 0,\dots,M_j-1,$ and $c = 0,\dots, C-1$.  The $x_{it}$ and $z_{ijs}$ can be categorical covariates representing gender, race or marital status. They can also be continuous, such as the subject's age. 
As in some applications, we may have certain prior knowledge on the set of the covariates related to $\eta_c$ and that of the  covariates related to $\theta_{jrc}$, where the two sets may or may not contain the same covariates.  
%The two covariates may or may not contain the same covariate characteristics. 
Hence we follow the general framework in \citeA{huang2004} by  applying different notations, $\bm{x}_i$ and $\bm{z}_i$,  to distinguish the covariates related to $\eta_c$ and $\theta_{jrc}$, while allowing the $\bm{x}_i$ and $\bm{z}_i$ to have some overlapped covariates.

Before presenting the generalized linear model framework, we need to clarify some notations. In models without covariates, e.g., latent class models or CDMs to be discussed in Section \ref{RegCDMs}, we use $\eta_c$ and $\theta_{jrc}$ to denote the corresponding latent class membership probability and conditional response probability, respectively. When covariates are involved in models, the parameters are  dependent on the covariates. In this situation, we denote $\eta_c^i = P(L_i = c \mid \bm{x}_i, \bm{z}_i)$ to be the latent class membership probability for subject $i$, and $\theta_{jrc}^i = P( R_{ij} = r \mid L_i = c ,\bm{x}_i, \bm{z}_i)$ to be the conditional response probability for subject $i$, for $i = 1, \ldots, N$.

Under the framework of generalized linear model, we use logit link function to relate $\eta_c^i$'s and $\theta_{jrc}^i$'s to covariates $(\bm{x}_i, \bm{z}_i)$. We let the log-odds be linearly dependent on the covariates and characterize the RegLCMs by the following equations
\begin{equation}
     \text{log}\Big( \frac{\eta_c^i}{\eta_0^i}\Big) = \beta_{0c} + \beta_{1c}x_{i1} + \cdots + \beta_{pc}x_{ip} \label{log eta}, 
\end{equation}
$\text{for} \ i = 1, \ldots, N, c = 1,\ldots, C-1$, and
\begin{equation}
	\text{log}\Big(\frac{\theta_{jrc}^i}{\theta_{j0c}^i}\Big) = \gamma_{jrc} + \lambda_{1jr}z_{ij1} + \cdots + \lambda_{qjr}z_{ijq}, \label{log p} 
\end{equation}
 $\text{for}\ i = 1, \ldots, N$, $j =1, \dots, J$, $r = 1, \cdots, {M_j-1}$ and $c = 0,\dots, C-1,$
 where $\beta,\gamma,\lambda$ are regression coefficient parameters. 
We want to point out that the identifiability conditions to be shown in Section 4 still hold for RegLCMs when the logarithmic function in \eqref{log eta} and \eqref{log p} is replaced with other monotonic functions. The key component in establishing the identifiability conditions for the coefficient parameters is the function monotonicity, which build the bijective mapping between identifiable $(\bm{\eta}, \bm{\Theta})$ and identifiable $(\bm{\beta}, \bm{\gamma}, \bm{\lambda})$. In this work, without loss of generality, we shall focus on the popularly used logit link function.

 From \eqref{log eta} and \eqref{log p}, we equivalently express $\eta_c^i$ and $\theta_{jrc}^i$ as
\begin{equation}
\label{eta} 
     {\eta_c^i} = \frac{\text{exp}(\beta_{0c} + \beta_{1c}x_{i1} + \cdots + \beta_{pc}x_{ip})}{1 + \sum_{l=1}^{C-1}\text{exp}(\beta_{0l} + \beta_{1l}x_{i1} + \cdots + \beta_{pl}x_{ip}) },
\end{equation}
for $ i = 1, \ldots, N$, $c = 0,\ldots, C-1$, and
\begin{equation}
\label{theta}
		\theta_{jrc}^i = \frac {\text{exp}{(\gamma_{jrc} + \lambda_{1jr}z_{ij1} + \cdots + \lambda_{qjr}z_{ijq}})}{ 1 + \sum_{s = 1}^{M_j - 1} \text{exp}(\gamma_{jsc} + \lambda_{1js}z_{ij1} + \cdots + \lambda_{qjs} z_{ijq})}, 
\end{equation}
$\text{for}\ i = 1, \ldots, N$, $j =1, \dots, J$, $r = 0, \cdots, {M_j-1}$ and $c = 0,\dots, C-1$.
From the above expressions, we  see that $\eta_c^i$ and $\theta_{jrc}^i$ are  functionally dependent on the linear functions $\bm{x}_i^{T} \bm{\beta}$ and $\bm{\gamma}_{jc} + \bm{z}_{ij}^{T} \bm{\lambda}_j$, where  $\bm{\beta} = (\bm{\beta}_{c}; c=0,\ldots, C-1)_{(p+1)\times C}$ with $\bm{\beta}_c = (\beta_{lc}; l=0,\ldots,p)^{T}_{(p+1)\times 1}$, $\bm{\gamma}_{jc} = (\gamma_{jrc}; r=0,\ldots, {M_j}-1)_{1 \times  M_j},$ and $\bm{\lambda}_j = (\bm{\lambda}_{jr}; r=0,\ldots, {M_j}-1)_{q\times M_j}$ with $\bm{\lambda}_{jr} = (\lambda_{ljr}; l=1,\ldots,q)^{T}_{q\times 1}$. 

Here following \citeA{huang2004}, in the conditional probability model \eqref{log eta}, the regression parameters ($\beta$) are latent class specific. 
In the conditional probability model \eqref{log p}, we allow the intercept parameters ($\gamma$) dependent on the latent class,  the  response level, and the item index, while the  regression coefficients parameters ($\lambda$) are dependent on the response level and the item index but not the latent class membership, which, as pointed in \citeA{huang2004}, is a logical assumption to prevent possible misclassification by adjusting for the associated covariates. The following two assumptions proposed by \citeA{huang2004} hold for all RegLCMs. 
\begin{enumerate}
	\item \label{assumption1} The latent class membership probability ${\eta}_c^i$ is dependent on $\bm{x}_i$ only and the conditional response probability ${\theta}_{jrc}^i$ is dependent on $\bm{z}_i$ only:
\begin{eqnarray}
P(L_i = c \mid \bm{x}_i, \bm{z_i}) &=& P(L_i = c \mid \bm{x}_i); \nonumber \\
P(R_{i1} = r_1, \cdots, R_{iJ} = r_J \mid L_i, \bm{x_i}, \bm{z_i} ) &=& P(R_{i1} = r_1, \cdots, R_{iJ} = r_J \mid L_i, \bm{z_i} ). \nonumber
\end{eqnarray}
	\item \label{assumption2} The measurements for different items are independent given the latent class and $\bm{z}_i$ (that is, the local independence assumption):
\begin{equation}
	P(R_{i1} = r_1, \cdots, R_{iJ} = r_J \mid L_i, \bm{z_i} ) = \prod_{j=1}^{J} P(R_{ij} = r_j \mid L_i, \bm{z_i} ). \nonumber
\end{equation}
\end{enumerate}

When the coefficients $\beta_{1c}, \cdots, \beta_{pc}$ in (\ref{log eta}) and $\lambda_{1jr}, \cdots, \lambda_{qjr}$ in (\ref{log p}) are zeros, RegLCMs will be reduced to latent class models without covariates, which is a special case in the family of RegLCMs. 
Next, we will introduce a special family of RegLCMs, Cognitive Diagnosis Models (CDMs), which is a family of the  restricted latent class models and has been substantially studied in educational and psychological measurement. From there we further introduce the regression CDMs. The two special RegLCMs (CDMs and regression CDMs) are important in the subsequent discussions about the identifiability conditions for RegLCMs.

\subsection{Cognitive Diagnosis Models as Special RegLCMs}
\label{RegCDMs}

In CDMs, each latent class corresponds to a distinct vector $\bm{\alpha} = (\alpha_{1}, \cdots, \alpha_{K}) \in \mathcal{A} = \{0,1\}^K$  where $\alpha_{1}, \cdots, \alpha_{K}$ denote $K$ binary latent attributes respectively and $\mathcal{A}$ denotes the attribute pattern space. The vector $\bm{\alpha}$ represents a unique latent profile with the $k$th entry $\alpha_{k} = 1$ implying the mastery of the subject on the $k$th latent attribute and $\alpha_{k} = 0$ implying his deficiency of it. The number of latent classes is $C = |\mathcal{A}| = 2^K$. For notational convenience, we follow the idea in \citeA{Culpepper2019} by introducing a tool vector $\bm{v} = (2^{K-1}, 2^{K-2}, \cdots, 1)^{T}$ and denote the latent class membership as $L = \bm{\alpha}^{T} \bm{v} =c \in \{0, \cdots, 2^K -1\}$. The key characteristics of CDMs is its introduction of the latent attributes and let the combinations of mastery or deficiency of each attribute to represent the latent class memberships in the restricted latent class models.

The relationship between the response $\bm{R} = (R_{1},\cdots,R_{J})$ and the attribute profile $\bm{\alpha}$ for any subject could be summarized through a binary matrix $Q_{J \times K}$.  Denote the $j$th row in $Q$-matrix to be $\bm{q}_j = (q_{j1},\cdots, q_{jK})$, where $q_{jk} \in \{0,1\}$ and  $q_{jk} = 1$ means that the $k$th attribute is required for subjects to solve item $j$. Similar to RegLCMs, we consider the general polytomous responses   $R_{j} \in \{0, \cdots, M_j-1\}$. Given a subject's latent profile $\bm{\alpha}$ with $\bm{\alpha}^{T}\bm{v} =c$, each $R_{j}$ follows a categorical distribution with the probability vector to be  $\bm{\theta_{jc}}=(\theta_{j0c}, \cdots, \theta_{j(M_j-1)c})$, where $\theta_{jrc} = P(R_{j} = r \mid  \bm{\alpha}^{T} \bm{v} =c )$ is the probability for getting response value $r$ in item $j$. The conditional probability mass function for $R_j$ is   
$	P(R_{j} \mid \bm{\alpha}^{T} \bm{v}=c,\ \bm{\theta_{jc}}) = \prod_{r = 0}^{M_j-1 }\theta_{jrc}^{\mathbb{I}\{R_{j} = r\}}$, and the probability mass function   for $\bm{R}$  is
\begin{align*}
	P(\bm{R} \mid \bm{\eta}, \bm{\Theta}) &= \sum_{c=0}^{2^K-1}   P( \bm{\alpha}^{T} \bm{v} =c)  \prod_{j = 1}^{J} P(R_{j} \mid    \bm{\alpha}^{T} \bm{v} = c , \bm{\theta_{jc}})  =\sum_{c=0}^{2^K-1}  \eta_c \prod_{j = 1}^{J} \prod_{r = 0}^{M_j -1 }\theta_{jrc}^{ \mathbb{I}\{R_{j} = r\}}. 
\end{align*}

\begin{comment}
Provided that the $\theta_{jrc}$'s are different among capable subjects and incapable subjects, we assume:
\[
\max_{\bm{\alpha} \succeq Q_j} \theta_{jrc} = \min_{\bm{\alpha} \succeq Q_j} \theta_{jrc} > \theta_{jr\bm{\alpha^{\prime}}} >  \theta_{jr\bm{\alpha_0}}, \quad  \alpha^{\prime} \nsucceq Q_j
\]

where $\bm{\alpha}_0 = \bm{0}$ denotes the latent profile lacking all the $K$ attributes.
This monotonic assumption ensures the success probability of capable subjects are the same, and are strictly higher than that of incapable subjects. The subjects mastering no required attribute has the lowest probability to figure out item $j$. These $\theta_{jrc}$ could be summarized as $J \times C$ matrix $\bm{\Theta} = (\theta_{jrc})$.
\end{comment}

%Under a general and popular type of CDMs, 
Following the generalized DINA (G-DINA) model framework, we  decompose the log-odds of $\theta_{jrc}$ into a sum of attribute effects as follows.
This framework was introduced in \citeA{torre2011} for G-DINA model with binary responses, and extended to G-DINA with polytomous responses in \citeA{chen_torre2018}.  Specifically, given a latent profile $\bm{\alpha} = (\alpha_1, \dots, \alpha_K)$, we have
\begin{equation}
\label{GDINA}
    \text{log} \Big(\frac{\theta_{jrc}}{\theta_{j0c}} \Big) = b_{jr0} + \sum_{k = 1} ^{K} b_{jr1} q_{jk} \alpha_k + \sum_{k^{\prime} = k + 1 }^{K} \sum_{k = 1}^{K-1} b_{jrkk^{\prime}} (q_{jk} \alpha_k) (q_{j{k^{\prime}}} \alpha_{k^{\prime}}) + \cdots + b_{jr12\cdots K} \prod_{k = 1}^{K} q_{jk} \alpha_{k},
\end{equation}
%Similar as the notation definitions in \citeA{torre2011}, 
where $b_{jr0}, b_{jr1}, \cdots$, $b_{jrK}, b_{jr12},$ $ \cdots, b_{jr(K-1)K}, \cdots, b_{jr12 \cdots K}$ are the coefficients in the generalized linear regression of the log-odds of conditional response probability on all latent attribute mastery situations, that is, all the subsets of $\{q_{j1}\alpha_1,\cdots,$ $q_{jK}\alpha_K \}$. Specifically, $b_{jr0}$ is the intercept of the log-odds; $b_{jr1}, \cdots$, $b_{jrK}$ are the main effects of attributes, representing the change of log-odds due to the mastery of the single attribute of $\alpha_1, \dots, \alpha_K$ respectively; $b_{jr12}, \cdots, b_{jr(K-1)K}, \cdots, b_{jr12 \cdots K}$ are the interaction effects of attributes, representing the change of log-odds due to the mastery of the combination of two or more attributes of $\alpha_1, \dots, \alpha_K$. 
% Equation (\ref{GDINA}) under G-DINA setting can be written as
% \begin{equation}
% 	\theta_{jrc} = \frac {\text{exp}{(\upsilon_{jrc} })}{ 1 + \sum_{s = 1}^{M_j - 1}\text{exp}( \upsilon_{jsc}) }, \nonumber
% \end{equation}
% for $j = 1, \dots, J, r = 1, \dots, M_j-1$ and $c = 0, \dots, C-1$, where $$\upsilon_{jrc} = b_{jr0} + \sum_{k = 1} ^{K} b_{jr1} q_{jk} \alpha_k + \sum_{k^{\prime} = k + 1 }^{K} \sum_{k = 1}^{K-1} b_{jrkk^{\prime}} (q_{jk} \alpha_k) (q_{j{k^{\prime}}} \alpha_{k^{\prime}}) + \cdots + b_{jr12\cdots K} \prod_{k = 1}^{K} q_{jk} \alpha_{k}.$$ 

For subjects with covariates values being zeros, the log-odds in G-DINA model \eqref{GDINA} is equivalent to general log-odds setting (\ref{log p}), which is the log-odds for RegLCMs and written as 
\begin{eqnarray}\label{z0}
\text{log}\Big(\frac{\theta_{jrc}}{\theta_{j0c}}\Big) &=& \gamma_{jrc} + \lambda_{1jr} z_{j1} + \cdots + \lambda_{qjr} z_{jq} = \gamma_{jrc} +  0 + \cdots +  0 =  \gamma_{jrc},  
\end{eqnarray}
which could be further expressed as 
\begin{equation}
	\theta_{jrc} = \frac {\text{exp}{(\gamma_{jrc} })}{ 1 + \sum_{s = 1}^{M_j - 1}\text{exp}( \gamma_{jsc}) }.\nonumber
\end{equation}
for $j = 1, \dots, J, r = 1, \dots, M_j-1$ and $c = 0, \dots, C-1$. We can show \eqref{GDINA} and \eqref{z0} are equivalent.
 %that is, we have bijective mapping from $\upsilon_{jrc}$ to $\gamma_{jrc}$. 
 Because in (\ref{GDINA}), the log-odds of conditional response probability are linear combinations of all the subsets of $\{q_{j1}\alpha_1,\ \cdots,\ q_{jK}\alpha_K \}$, and are dependent on latent profile $\bm{\alpha} = (\alpha_1, \cdots, \alpha_K)$ only, equivalently dependent on $c$ at $c = \bm{\alpha}^{T}\bm{v}$. 
%But they are not affected by covariates.
 %As in (\ref{log p}), the log-odds is linearly dependent on covariates $\bm{z}_i$, 
 When covariates are zeros, the latent class category information is entirely captured by the intercept $\gamma_{jrc}$ in \eqref{z0}, implying that for given $j$ and $r$, each $\gamma_{jrc}$ is bijectively corresponding to $\bm{\alpha} \in \mathcal{A}$,  
%  As $\mathcal{A} = \{0,1\}^K$, 
%  the cardinality of latent profile space $C =|\mathcal{A}| = 2^K$ is equivalent to the cardinality of coefficients $|\bm{b}_{jr}|  = |\{b_{jr0}, b_{jr1}, \cdots$, $b_{jrK}, b_{jr12}, \cdots, b_{jr(K-1)K}, \cdots, b_{jr12 \cdots K}\}|= 2^K$, implying that for given $j$ and $r$, each $\upsilon_{jrc}$ corresponds to a distinct $\bm{\alpha} \in \mathcal{A}$.
which further implies that there exist a bijective linear correspondence between $\{b_{jr0}, b_{jr1}, \cdots, b_{jrK}, b_{jr12}, \cdots, b_{jr(K-1)K}, \cdots, b_{jr12 \cdots K}\}$  and $\{\gamma_{jrc}: c=0,\cdots, C-1\}$.

When covariates are involved in CDMs, we introduce the regression CDMs (RegCDMs) by    the following equations (\ref{RegCDMs log eta}) and (\ref{RegCDMs log p}) adapted from (\ref{log eta}) and (\ref{log p}), with the additional characteristics of CDMs that each latent membership $c$ is represented by a latent profile $\bm{\alpha}$. To make notations clear in this case, we denote the latent attributes of subject $i$ as $\bm{\alpha}_i = (\alpha_{i1}, \ldots, \alpha_{iK})$ for $i = 1, \dots, N$. And similarly as in RegLCMs, we use $\eta_c^i = P( \bm{\alpha}_i^{T}\bm{v} =c \mid \bm{x}_i, \bm{z}_i)$ to denote the latent class membership probability for subject $i$, and use $\theta_{jrc}^i = P( R_{ij} = r \mid  \bm{\alpha}_i^{T}\bm{v} = c ,\bm{x}_i, \bm{z}_i)$ to denote the conditional response probability for subject $i$ when these parameters are dependent on covariates.

 Assuming that the latent membership $c = 0$ denotes the latent profile that the subject $i$ does not master any of $K$ attributes, i.e. $\bm{\alpha}_i = \bm{0}_{K\times 1},$ we model
\begin{equation}
\label{RegCDMs log eta}
	\text{log}\Big( \frac{\eta_c^i}{\eta_0^i}\Big) = \beta_{0c} + \beta_{1c}x_{i1} + \cdots + \beta_{pc}x_{ip},
\end{equation}
for $ i = 1, \dots, N$ and $\bm{\alpha}_i^{T}\bm{v} =c$ with $\bm{\alpha}_i \in \{0,1\}^K \setminus \bm{0}_{K\times 1}$, and  
 \begin{equation}
\label{RegCDMs log p} 
\text{log}\Big(\frac{\theta_{jrc}^i}{\theta_{j0c}^i}\Big) = \gamma_{jrc} + \lambda_{1jr}z_{ij1} + \cdots + \lambda_{qjr}z_{ijq},
\end{equation}
for $i = 1, \dots, N$, $j = 1, \dots, J$, $r = 1, \dots, {M_j-1}$, and $\bm{\alpha}_i^{T}\bm{v} =c$ with $\bm{\alpha}_i \in \{0,1\}^K$. RegCDMs combine the regression setting on covariates from RegLCMs and the latent attribute representation from CDMs, which is to use binary latent profiles to represent latent classes. In addition,   Assumptions \ref{assumption1} and \ref{assumption2} in Section \ref{Covariate Effect Model} are also assumed for RegCDMs. 
%These independence assumptions simplify our model setting and support the identifiability conditions of latent class variables.  

 %%%%%%%%%%%%%%%%%%% DINA MODEL SETTING %%%%%%%%%%%%%%%%%%%%%%%%
\begin{comment}
	
If we further assume that it is necessary to possess all the required attributes in order to be capable of item $j$, and subjects lacking any of required attribute have the same positive response probabilities, we could obtain DINA model, in which the positive response probability $\theta_{jrc}$ only has two possible values, define
\[ 
\theta_{j1\bm{\alpha} \mid \bm{\alpha} \succeq {Q_j} } = P(R_j = 1 \mid \bm{\alpha} \succeq {Q_j}).
\]
Usually $\theta_{j0\bm{\alpha} \mid \bm{\alpha} \succeq {Q_j} } = 1- \theta_{j1\bm{\alpha} \mid \bm{\alpha} \succeq {Q_j} }$ is named the slipping probability, which is the probability that a subject does not have a positive response even possessing the required attributes. And 
\[ 
\theta_{j1\bm{\alpha} \mid \bm{\alpha} \nsucceq {Q_j} } = P(R_j = 1 \mid \bm{\alpha} \nsucceq {Q_j})
\]
is the guessing probability, denoting the probability that without the required attributes, the subject accidentally gets a positive response for item $j$.

By the monotonic assumption, subjects mastering required attributes have higher chances of getting the right answer for item $j$, so
\[
\theta_{j1\bm{\alpha} \mid \bm{\alpha} \succeq {Q_j} } \geq \theta_{j1\bm{\alpha} \mid \bm{\alpha} \nsucceq {Q_j} }.
\]

%where $s_j$ is the slipping probability, denoting the probability that a subject does not have a positive response even possessing the required attributes; $g_j$ is the guessing probability, denoting the probability that without the required attributes, the subject accidentally gets a positive response for item $j$.
\end{comment}

\subsection{Identifiability Conditions in Existing Literature}
\label{local identifiability conditions in existing literature}
Before discussing our main results for the identifiability of the models introduced in Sections \ref{Covariate Effect Model} and   \ref{RegCDMs}, we give a review of the existing studies. 
The identifiability conditions for  latent class models have been extensively investigated in the existing literature. In particular,  \citeA{McHugh1956} studied the binary-response latent class models and proposed sufficient local identifiability conditions. Extending McHugh's work, \citeA{goodman1974} presented a fundamental method to determine the local identifiability of the polytomous-response latent class models, stating that if the Jacobian matrix formed by the derivatives of response probability vector with respect to parameters has full column rank, then the parameters are locally identifiable. This condition is intuitively straightforward but empirically nontrivial to apply. When the number of latent class $C$ or the number of possible responses to items $M_j$ increases, the dimension of the Jacobian matrix would increase at a fast rate. Moreover, this method could only guarantee the local identifiability for latent class models but leave the global identifiability undiscussed.

To study global identifiability, 
\citeA{kruskal1977} established algebraic results to ensure the uniqueness of factors in the decomposition of a three-way array. This work defined Kruskal rank which is analogous to the normal rank of a matrix. And it proved that if the Kruskal ranks of a triple product of matrices satisfy a certain arithmetic condition, the matrix decomposition will be unique. Based on Kruskal's theorems, \citeA{allman2009} extended the conditions to the decomposition into more than three variates and used them in the identifiability conditions for the latent class models with finite items. Besides, \citeA{allman2009} argued that even the parameters are not identifiable, the inference on parameters can be  valid empirically when the model is generically identifiable, that is, the parameters are identifiable except for a zero-measure set of parameters. The generic identifiability results allow us to circumvent the complex calculation on the column rank of the Jacobian matrix.

In the recent literature, the identifiability of the restricted latent class models, such as CDMs, has also been studied. Related identifiability results on  restricted models with binary responses were developed in \citeA{chen2015}, \citeA{xu_and_zhang_2016}, \citeA{xu2017}, \citeA{xu2018JASA},  \citeA{gu2019suf_nec_cond}, \citeA{gu2018partial},  etc. 
For the restricted latent class models with polytomous responses, \citeA{Culpepper2019}, \citeA{Fang2019},  \citeA{chen_culpepper2020},  and \citeA{gu2018partial} proposed the identifiability conditions dependent on the $Q$-matrix.

 The above research focuses on the identifiability of the general or restricted latent class models without covariates. For the identifiability of latent class models with covariates, \citeA{huang2004} generalized the result of \citeA{goodman1974} and derived local identifiability conditions for RegLCMs. 
 Under the setting of RegLCMs, denote $\mathcal{S}^{\prime}$ as the response pattern space   $\mathcal{S}$ with a reference pattern removed (e.g., $\mathbf 0$), so  the number of distinct response patterns in  $\mathcal{S}^{\prime}$  is then $S-1$.
 Define \begin{equation}
\bm{\Phi} = (\bm{\phi}_c; c = 0, \dots, C-1)_{(S-1) \times C}, \nonumber
\end{equation}
%, and $\bm{r}  = (r_{1}, \cdots, r_{J}) \in \mathcal{S}^{\prime}$ as the one of $S-1$ patterns. 
where each column $\bm{\phi}_c$ is of dimension $S-1$ in which each element corresponds to a response pattern $\bm{r}  = (r_{1}, \cdots, r_{J}) \in \mathcal{S}^{\prime}$  and is defined as
\begin{equation}
 \label{tau}
\phi_{\bm{r}c}= P(\bm{R} = \bm{r} \mid L = c,  \bm{z} = \bm{0}) = \prod_{j =1}^J \frac{ e^{\gamma_{jr_jc}} }{1 + \sum_{s = 1}^{M_j-1} e^{\gamma_{jsc}}},
 \end{equation}
where  %$\bm{r_h} = (r_{h1}, \cdots, r_{hJ})$ 
 $\gamma_{jr_jc}$ are defined as in (\ref{log p}) with $r= r_j$ and we set $\gamma_{j0c}=0$ for all $j = 1,\dots, j$ and $c = 0,\dots, C-1$. \citeA{huang2004} proposed that RegLCMs are locally identifiable at free parameters of $(\bm{\beta}, \bm{\gamma}, \bm{\lambda}) =\{\beta_{dc}, \gamma_{jrc}, \lambda_{tjr}:  j = 1, \ldots, J, r = 0, \ldots, M_j-1, c = 0, \ldots, C-1,d =0,\dots, p, t = 1,\dots, q\}$ if the following conditions are satisfied,
\begin{enumerate}[label=($A${\arabic*}),leftmargin=1.5cm]
     \item \label{A1} $\prod_{j=1}^{J} M_j-1 \geq C(\sum_{j=1}^{J} M_j -J) + C - 1 $;
    \item \label{A2}  Free parameters $\gamma_{jrc}, \lambda_{qjr}, \beta_{pc}$   and   covariate values $x_{ip}, z_{ijq}$ are all finite;
   
    \item \label{A3} The design matrix of the covariates
    \begin{equation}
  \bm{X} = \left(  \begin{array}{c}
 \bm{x}_{1}^{T} \\
\vdots \\
\bm{x}_{N}^{T} 
\end{array}\right)
= \left(  \begin{array}{cccc}
1 & x_{11} &\cdots& x_{1p}\\
\vdots & \vdots & \ddots & \vdots\\
1 & x_{N1} &\cdots & x_{Np}
\end{array}\right) \nonumber
\end{equation}
 and \begin{equation}
  \bm{Z_j} = \left(\begin{array}{cc}
 1& \bm{z}_{1j}^{T} \\
\vdots & \vdots \\
1& \bm{z}_{Nj}^{T}
\end{array}\right)=  \left(\begin{array}{cccc}
 1& z_{1j1} &\cdots& z_{1jq}\\
 \vdots & \vdots & \ddots & \vdots\\
1 & z_{Nj1} &\cdots & x_{Njq}
\end{array}\right), \quad j  = 1,\cdots, J \nonumber
\end{equation}
have full column rank;
\item \label{A4} $\bm{\phi_0}, \cdots, \bm{\phi_{C-1}}$ are linearly independent.
    \end{enumerate}

\begin{remark}
\label{remark F} As in \citeA{huang2004},
if we consider $F$ to be the number of pre-fixed conditional probabilities $\theta_{j rc} = 0 $ or $1$, then Condition \ref{A1} should be extended to $\prod_{j=1}^{J} M_j-1 \geq C(\sum_{j=1}^{J} M_j -J) + C - 1 - F$. For simplicity, we assume F = 0 throughout the paper.
\end{remark}
\begin{remark}
\label{A1A2 necessary}
	Condition \ref{A1} implies the number of independent response probabilities $$\prod_{j=1}^{J} M_j-1=\text{card}(\{P(R_1=r_1,\dots,R_J=r_J): r_j = 0, \dots, M_j-1, j = 1,\ldots,J \}),$$ exceeds the number of independent parameters in $(\bm{\eta}, \bm{\Theta})$,  $$C(\sum_{j=1}^{J} M_j -J) + C - 1 =\text{card}(\{\eta_c, \theta_{jrc} : j = 1,\dots, J, r = 1, \ldots, M_j -1, c = 0, \ldots, C-1\}).$$ Condition \ref{A1} is necessary, without which the observed response information may produce infinite parameter solutions and lead the model to be not identifiable. 
For technical rigorousness, Condition \ref{A2} as proposed in Huang and Bandeen-Roche (2004) specifies the model parameters and covariates $x_{ip}$, $z_{ijq}$ are finite. 
 In practice, the observed covariates are documented as finite values, and thus the finite condition on $x_{ip}$ and $z_{ijq}$ is automatically satisfied.
%	For technical rigorousness, the condition \ref{A2} specifies that the parameters are independent and the covariates are finite. In practical applications, the observed covariates are viewed as finite, such as gender, age, and education years.	
	%If covariate values were infinite, 
	%The condition \ref{A2} is necessary to ensure that the RegLCMs definitions \eqref{log eta} and \eqref{log p} are  mathematically valid.
\end{remark}
For RegLCMs without covariates, which are  equivalent to RegLCMs with $\bm{x}_i = (1,\bm{0}_{1\times p})^{T}$ and $\bm{z}_i = \bm{0}_{J\times q}$, \citeA{huang2004} gave a reduced form of identifiability conditions. They claimed an equivalence between the full column rank condition on the Jacobian matrix and linear independence condition on the columns of marginal probability matrix  $\bm{\Psi}$ defined as 
 \begin{equation}
\bm{\Psi} = \left(\bm{\psi}_c; c = 0, \ldots, C-1\right)_{(S-1) \times C}, \nonumber
\end{equation}
where each column $\bm{\psi}_c$ is of dimension $S-1$ in which each element corresponds to a distinct response pattern $\bm{r}= (r_{1}, \cdots, r_{J}) \in \mathcal{S}^{\prime} $ and
\begin{equation}
\label{psi definition}
{\psi}_{\bm{r}c} = P(\bm{R} = \bm{r} \mid L = c) = \prod_{j=1}^{J} \prod_{r = 0}^{M_j-1} \theta_{jrc}^{\mathbb{I}\{r_{j} = r\}} = \prod_{j=1}^{J} \theta_{jr_jc}.
\end{equation}
Here for notational convenience, we let $\theta_{jr_jc}$ to denote $\theta_{jrc}$ defined in Section \ref{Covariate Effect Model} with $r = r_j$. 
%Consider the existence of covariates, we denote $\bm{\Psi}^i = \left(\bm{\psi}_c^i; c = 0, \ldots, C-1\right)$ as matrix $\bm{\Psi}$ for subject $i = 1, \dots, N$. %In particular, we let $\Psi^0 = \left(\bm{\psi}_c^0; c = 0, \ldots, C-1\right)$ represent the matrix $\Psi$ for a subject whose covariates values are zeros. 
Under the particular covariate latent class models with $\bm{x}_i = (1,\bm{0}_{1\times p})^{T}$ and $\bm{z}_i = \bm{0}_{J\times q}$, \citeA{huang2004} proposed that  $(\bm{\eta}, \bm{\Theta})= \{ \eta_c, \theta_{jrc}:  j = 1,\ldots, J, r = 0, \dots, M_j -1, c = 0, \ldots, C-1\}$ are locally identifiable if Condition \ref{A1} and the following conditions are satisfied:
	\begin{enumerate}[label=($A$\arabic*$^*$),leftmargin=1.5cm]
	 		 \setcounter{enumi}{1}
  %  	\item \label{C1} $\prod_{j=1}^{J} M_j-1 \geq C(\sum_{j=1}^{J} M_j -J) + C - 1$;\ \footnote{\ref{A1} and \ref{A1} are equivalent.} 
   	 \item \label{C2} For all free parameters, $\theta_{jrc} > 0$ and $\eta_c = P(L = c) > 0$;
    	\item \label{C3} $\bm{\psi}_0,\cdots,\bm{\psi}_{C-1}$ are linearly independent.
	\end{enumerate}

We  see that Conditions \ref{A1}--\ref{A3} and Condition \ref{C2}, are necessary  for the respective latent class models.  The necessity of Conditions \ref{A1} and \ref{A2} are discussed in Remark \ref{A1A2 necessary}. Condition \ref{C2} guarantees that the latent class membership probabilities and conditional response probabilities are non-zero. Condition \ref{A3} ensures $\bm\beta$, $\bm{\gamma}$ and $\bm{\lambda}$ are uniquely identifiable when $\bm{\eta}$ and $\bm{\Theta}$ are identifiable.  As for Condition \ref{C3}, it is related to the condition that the Jacobian matrix has full column rank. In the next section, we  show under the assumption that Conditions \ref{A1} and \ref{C2} hold, Condition \ref{C3} is necessary for the local identifiability of the special RegLCMs without covariates, but is actually not sufficient. Similarly, for RegLCMs with covariates, under Conditions \ref{A1}--\ref{A3}, Condition \ref{A4} is a necessary identifiability condition but  not a sufficient condition.

\section[]{Necessity but insufficiency of \citeA{huang2004}}
\label{Necessity and Sufficiency}

%Under these assumptions, 
In this section, we show that the identifiability conditions in \citeA{huang2004} are not sufficient.
%, i.e. \ref{A1}--\ref{A4} could not guarantee the local identifiability of RegLCMs, and \ref{A1}, \ref{C2} and \ref{C3} could not guarantee the local identifiability of RegLCMs without covariates either. 
Following the discussion in Section \ref{local identifiability conditions in existing literature}, 
%we assume   \ref{A1}--\ref{A2} and \ref{A1}--\ref{C2}  are satisfied, and focus our discussion on \ref{A3}--\ref{A4} and \ref{C3}. 
we first present the necessity of Condition \ref{A4} for RegLCMs and that of Condition \ref{C3} for RegLCMs without covariates, respectively. 

\newtheorem{proposition}{Proposition}
\begin{proposition}
\label{necessary c3}
For RegLCMs, Condition \ref{A4} is   necessary for the identifiability of $(\bm{\beta}, \bm{\gamma}, \bm{\lambda})$   under Conditions \ref{A1}--\ref{A3}. For RegLCMs without covariates, Condition \ref{C3} is   necessary for the identifiability of $(\bm{\eta}, \bm{\Theta})$  under Conditions \ref{A1} and \ref{C2}.
\end{proposition}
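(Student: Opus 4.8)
The plan is to prove both assertions by contraposition. I would first record the reduction from the coefficient level to the probability level: by Condition~\ref{A3} the logit link gives a bijection between $(\bm\beta,\bm\gamma,\bm\lambda)$ and the subject-wise probabilities $\{(\bm\eta^i,\bm\Theta^i)\}_{i=1}^N$ (the coefficients being recoverable from identifiable class-membership and response probabilities precisely because $\bm X$ and the $\bm Z_j$ have full column rank), and $\bm\phi_c$ is nothing but the conditional distribution of $\bm R$ given $L=c$ when $\bm z=\bm 0$, so for RegLCMs the matrix $\bm\Phi$ plays the role that $\bm\Psi$ plays in the covariate-free model. Hence it suffices to handle the covariate-free statement first and then lift it.

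\emph{Covariate-free case.} Suppose $\bm\psi_0,\dots,\bm\psi_{C-1}$ are linearly dependent, so there is $\bm b=(b_0,\dots,b_{C-1})\neq\bm 0$ with $\sum_c b_c\psi_{\bm r c}=0$ for every $\bm r\in\mathcal S'$. Since the $\theta_{jrc}$ sum to one over $r$, each $\bm\psi_c$ extends to the genuine conditional distribution $\widetilde{\bm\psi}_c=(\psi_{\bm r c})_{\bm r\in\mathcal S}$ with $\psi_{\bm 0c}=1-\sum_{\bm r\in\mathcal S'}\psi_{\bm rc}$, whence $\sum_c b_c\widetilde{\bm\psi}_c=\big(\sum_c b_c\big)\bm e_{\bm 0}$ on all of $\mathcal S$. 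The key step is to show that $\bm b$ may be chosen with $\sum_c b_c=0$ --- equivalently, that dependence of the $S-1$ truncated columns forces dependence of the full vectors $\widetilde{\bm\psi}_0,\dots,\widetilde{\bm\psi}_{C-1}$. Writing $\widetilde{\bm\psi}_c=\bigotimes_j\bm\theta_{jc}$ and taking the $j$-th one-dimensional marginal of the identity above gives $\sum_c b_c\bm\theta_{jc}=\big(\sum_c b_c\big)\bm e^{(j)}_0$ for each $j$; substituting this back into the two- and higher-way marginals (for $C=2$ it already forces $\bm\theta_{j0}=\bm\theta_{j1}$ for all $j$, i.e.\ $\bm b\propto(1,-1)$; in general one iterates, and Condition~\ref{A1}, which supplies sufficiently many items, is used here) yields $\sum_c b_c=0$. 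Granting this, set $\bm\eta^{(t)}=\bm\eta+t\bm b$: by Condition~\ref{C2} the entries of $\bm\eta$ are strictly positive, so $\bm\eta^{(t)}$ is a valid class-probability vector for all small $|t|$ and sums to one, while $P(\bm R=\bm r\mid\bm\eta^{(t)},\bm\Theta)=\sum_c(\eta_c+tb_c)\psi_{\bm r c}=P(\bm R=\bm r\mid\bm\eta,\bm\Theta)$ for every $\bm r$. As $\bm b\neq\bm 0$ these are distinct parameter values inducing the same distribution, so $(\bm\eta,\bm\Theta)$ is not identifiable; since the $\bm\eta^{(t)}$ lie in every neighbourhood of $\bm\eta$, Condition~\ref{C3} is necessary even for local identifiability.

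\emph{Covariate case.} When \ref{A4} fails, the same argument applied to $\bm\Phi$ yields $\bm b\neq\bm 0$ with $\sum_c b_c=0$ and $\sum_c b_c\widetilde{\bm\phi}_c=\bm 0$ on $\mathcal S$. Because the slope coefficients $\bm\lambda$ do not depend on the latent class, any coincidence $\widetilde{\bm\phi}_{c_1}=\widetilde{\bm\phi}_{c_2}$ extends to $\theta^i_{jrc_1}=\theta^i_{jrc_2}$ for all subjects $i$; in that case swapping $\bm\beta_{c_1}\leftrightarrow\bm\beta_{c_2}$ leaves every $\eta^i_c$ with $c\notin\{c_1,c_2\}$ and every sum $\eta^i_{c_1}+\eta^i_{c_2}$ unchanged, hence leaves all $P(\bm R_i=\cdot)$ unchanged while moving $(\bm\beta,\bm\gamma,\bm\lambda)$, so by the \ref{A3} bijection the model is not identifiable. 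The remaining configurations --- dependence of $\bm\Phi$ with no two conditional distributions equal, possible only for $C\geq3$ --- are handled by transporting the perturbation $\bm\eta^i\mapsto\bm\eta^i+t\bm b$ through the logit link simultaneously for all subjects.

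\emph{Expected main obstacle.} Two places carry the real work. First, the combinatorial claim that a linear dependence among the $S-1$ truncated columns of $\bm\Psi$ (or $\bm\Phi$) can always be witnessed by a coefficient vector summing to zero --- i.e.\ that the reference-pattern coordinate is redundant once the conditional distributions factorize by local independence --- is precisely where the product structure and Condition~\ref{A1} genuinely enter, and I expect it to be the most delicate step. Second, in the covariate case the perturbation $\bm\eta^i\mapsto\bm\eta^i+t\bm b$ is in general \emph{not} realizable by an infinitesimal change of $\bm\beta$ when the primary covariates vary, since the softmax link is nonlinear; so that case must be closed either by the relabeling device above or by a more careful, design-dependent construction, and sorting out exactly which dependencies of $\bm\Phi$ require the latter is the other source of difficulty.
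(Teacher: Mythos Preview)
For the covariate-free statement your argument runs parallel to the paper's: both perturb $\bm\eta$ along a null vector $\bm b$ of $\bm\Psi$. You are more careful in recording that $\bm\eta+t\bm b$ stays on the simplex only when $\sum_c b_c=0$, and you try to derive this from the product structure of the $\bm\psi_c$'s; the paper simply asserts that ``there could exist $\bm\eta'\neq\bm\eta$ such that $\bm\Psi\cdot(\bm\eta-\bm\eta')=\bm 0$'' and does not discuss the simplex constraint at all. So on this half you are on the paper's track but have isolated a technical point that the paper leaves implicit; your sketched resolution (pass to marginals, iterate, invoke \ref{A1}) is plausible for $C=2$ but is not yet a proof for general $C$.

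For the covariate statement your plan is harder than it needs to be, and that is exactly why your second obstacle appears. You attempt to push the perturbation $\bm\eta^i\mapsto\bm\eta^i+t\bm b$ back through the softmax to a perturbation of $\bm\beta$ that is valid simultaneously for all subjects, and as you correctly observe the nonlinearity obstructs this when the $\bm x_i$ vary. The paper never perturbs $(\bm\beta,\bm\gamma,\bm\lambda)$ directly. Its key step, which you are missing, is to show that linear dependence of $\bm\phi_0,\dots,\bm\phi_{C-1}$ forces linear dependence of $\bm\psi_0^i,\dots,\bm\psi_{C-1}^i$ for \emph{every} subject $i$, via the identity
\[
\sum_{c} a_c\,\bm\phi_c \;=\; \Bigl(\sum_{c} b_c\,\bm\psi_c^i\Bigr)\odot\bm Y^i,
\]
where $\bm Y^i$ is entrywise positive and each $b_c$ is a positive scalar multiple of $a_c$; this is precisely where the class-independence of the slopes $\bm\lambda$ enters. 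From there the paper applies the covariate-free conclusion at each $i$ to get non-identifiability of $(\bm\eta^i,\bm\Theta^i)$, transfers this to $(\bm\epsilon^i,\bm\omega^i)$ via the bijection of Lemma~\ref{tau and tau 0}, and closes by contraposition: identifiability of $\bm\beta$ together with $\bm\epsilon=\bm X\bm\beta$ and full column rank of $\bm X$ from \ref{A3} would force identifiability of each $\bm\epsilon^i$. Once you have that first reduction, neither your relabeling device nor any design-dependent construction is needed.
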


%From Proposition \ref{necessary c3}, for any locally identifiable RegLCMs without covariates,  \ref{C3} needs to hold. 
% Similarly, \ref{A3}--\ref{A4} are  necessary conditions for the identifiability of RegLCMs. 
Despite the necessary results, we next show that satisfying Conditions \ref{A1}, \ref{C2} and \ref{C3} or satisfying Conditions \ref{A1}--\ref{A4} is not sufficient to guarantee the local identifiability of RegLCMs without or with covariates, respectively.
Our non-sufficient results are motivated by the existing works in the literature related to the identifiability of CDMs, which are a special family of RegLCMs as shown in   Section \ref{RegCDMs}. 
Specifically, we next present a proposition to show Conditions \ref{A1}, \ref{C2} and \ref{C3} are not sufficient for CDMs without covariates, and thus not sufficient for the identifiability of RegLCMs without covariates. Further, we show Conditions \ref{A1}--\ref{A4} are not sufficient for RegCDMs, and thus not sufficient for the identifiability of RegLCMs in general. 
%As \ref{A1}--\ref{A2} and \ref{A1}--\ref{C2} are assumed to hold throughout this section, the argument mainly focuses on the linear independence condition regarding columns in $\Phi$ and columns in $\Psi$. 

%satisfying \ref{A1}, \ref{C2} and \ref{C3} is not sufficient to guarantee the local identifiability of RegLCMs without covariates.
% But we can argue that satisfying the condition \ref{A1}--\ref{A4} does not guarantee the identifiability of the general RegLCMs. 

\begin{comment}
we adapt $T$-matrix to fit the CDMs setting. As the latent class classification is based on latent attribute profile $\bm{\alpha} \in \mathcal{A}$ and the positive response probability is defined as $\theta_{jrc}$, the entry in $T$-matrix is written as
\[
T_{r,\bm{\alpha}} (\bm{\bm{\Theta}}) = P (\bm{R} \succeq \bm{r} \mid \bm{\bm{\Theta}}, \bm{\alpha}) = \prod_{j=1}^{J} \theta_{jrc}^{r_j}
\]
	
\end{comment}

\begin{proposition}
\label{non identifiability but full column rank}
Consider the setting of CDMs with polytomous responses. We assume Conditions \ref{A1}--\ref{A3} hold for     RegCDMs, and Conditions \ref{A1} and \ref{C2} hold for RegCDMs without covariates, i.e., CDMs.
If the following conditions hold:  
\begin{enumerate}[label=($P${\arabic*}),leftmargin=1.5cm]
\item \label{P1} Some latent attribute is required by only one item;
\item \label{P2} After rows permutation, the $Q$-matrix contains an identity matrix $\mathcal{I}_K$.
%$Q$-matrix can be expressed in form of $Q = [\mathcal{I}_K, Q^{*}  ]^{T} $ containing an identity matrix $\mathcal{I}_K$ and one $(J-K) \times K$ submatrix $Q^{*}$. 
\end{enumerate}
Then we have
\begin{enumerate}[label=(\roman*)]
\item \label{prop2.i} For CDMs, the matrix $\bm{\Psi}$ in Condition \ref{C3} has full column rank but $(\bm{\eta}, \bm{\Theta})$  are not identifiable;
\item \label{prop2.i2}For  RegCDMs, the matrix $\bm{\Phi}$ in Condition \ref{A4} has full column rank but  $(\bm{\beta}, \bm{\gamma}, \bm{\lambda})$ are not identifiable.
\end{enumerate}
\end{proposition}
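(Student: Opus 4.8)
The plan is to construct an explicit pair of distinct parameter values that produce the same distribution of responses, while simultaneously verifying the linear-independence condition on $\bm{\Psi}$ (resp. $\bm{\Phi}$). The two conditions \ref{P1} and \ref{P2} are precisely what make this possible: by \ref{P1} there is an attribute, say attribute $K$ after relabeling, that appears in exactly one row $\bm{q}_j$ of $Q$; by \ref{P2} we may assume that row is $\bm{q}_K = \bm{e}_K$ and that rows $1,\dots,K$ of $Q$ form $\mathcal{I}_K$. The key structural observation is that, because attribute $K$ is ``used'' only by item $K$, the latent classes $\bm{\alpha}$ and $\bm{\alpha}'$ that differ only in the $K$th coordinate are distinguished by the responses only through item $K$; all other items $j$ have $q_{jK}=0$, so $\bm{\theta}_{jc}$ depends on $\bm{\alpha}$ only through its first $K-1$ coordinates, i.e. $\bm{\theta}_{jc} = \bm{\theta}_{jc'}$ whenever $c,c'$ correspond to attribute profiles agreeing outside coordinate $K$. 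This collapses the identification problem for the pair $(c,c')$ to a two-class sub-problem involving a single item, which is classically non-identifiable (a single categorical item cannot separate two latent classes once the mixing weight is free).

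First I would set up the notation: partition $\mathcal{A} = \{0,1\}^K$ into the $2^{K-1}$ pairs $\{(\bm{\alpha}^*,0),(\bm{\alpha}^*,1)\}$ indexed by $\bm{\alpha}^*\in\{0,1\}^{K-1}$, and fix one such pair, say $\bm{\alpha}^*=\bm{0}$, giving classes $c_0 = 0$ and $c_1 = 1$. For every item $j\neq K$ we have $\theta_{jrc_0} = \theta_{jrc_1} =: \theta_{jr}$. For item $K$, write $\bm{\theta}_{Kc_0}=\bm{a}$ and $\bm{\theta}_{Kc_1}=\bm{b}$, two distinct probability vectors on $\{0,\dots,M_K-1\}$ (distinct by the G-DINA parametrization with a nonzero main effect $b_{Kr1}$). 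The mixture contribution of these two classes to $P(\bm{R}=\bm{r})$ is $\big(\prod_{j\neq K}\theta_{jr_j}\big)\,(\eta_{c_0} a_{r_K} + \eta_{c_1} b_{r_K})$, which depends on $(\eta_{c_0},\eta_{c_1},\bm{a},\bm{b})$ only through $\eta_{c_0}+\eta_{c_1}$ and the convex combination $\eta_{c_0}\bm{a}+\eta_{c_1}\bm{b}$. Hence replacing $(\eta_{c_0},\eta_{c_1},\bm{a},\bm{b})$ by any $(\tilde\eta_{c_0},\tilde\eta_{c_1},\tilde{\bm{a}},\tilde{\bm{b}})$ with the same total weight and same weighted average — e.g. a small perturbation keeping $\bm{a},\bm{b}$ distinct and the $\eta$'s positive — leaves $P(\bm{R}=\bm{r})$ unchanged for all $\bm{r}$, while changing the parameters. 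This proves non-identifiability of $(\bm{\eta},\bm{\Theta})$; for RegCDMs the same perturbation applied at $\bm{z}=\bm{0}$, together with Condition \ref{A3} letting one reconstruct $(\bm{\beta},\bm{\gamma},\bm{\lambda})$ from $(\bm{\eta},\bm{\Theta})$, shows non-identifiability of $(\bm{\beta},\bm{\gamma},\bm{\lambda})$.

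Next I would verify the full-column-rank claims. Because rows $1,\dots,K$ of $Q$ equal $\mathcal{I}_K$, item $k$ (for $k\le K$) separates profiles by their $k$th coordinate: $\theta_{k r c}$ takes a ``higher'' value when $\alpha_k=1$ and a ``lower'' value when $\alpha_k=0$, with strict inequality under \ref{C2} and the G-DINA monotonicity. I would pick, for each coordinate $k$, a response level $r_k$ witnessing this strict difference, and then argue that the submatrix of $\bm{\Psi}$ on the $2^K$ response patterns of the form $\bm{r}=(r_1^{\epsilon_1},\dots,r_K^{\epsilon_K},0,\dots,0)$ for $\bm\epsilon\in\{0,1\}^K$ is (up to positive diagonal scaling) a tensor product of $2\times 2$ invertible matrices, hence invertible; this forces the $C=2^K$ columns $\bm{\psi}_c$ to be linearly independent. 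The identical argument with $\phi_{\bm{r}c}$ in place of $\psi_{\bm{r}c}$, using the same response patterns and the fact that $\phi_{\bm{r}c}=P(\bm{R}=\bm{r}\mid L=c,\bm{z}=\bm{0})$ has exactly the product form $\prod_j \theta_{jr_jc}$ with $\theta_{jr_jc}$ given by \eqref{z0}, gives that $\bm{\Phi}$ has full column rank. I also need $S-1 \ge C$, i.e. Condition \ref{A1}, which is assumed.

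The main obstacle I anticipate is making the tensor-product rank argument fully rigorous: one must check that the chosen ``witness'' response levels $r_k$ can be selected simultaneously so that every relevant $2\times 2$ block is genuinely nonsingular — this uses that the diagonal items $1,\dots,K$ of $Q$ are distinct unit vectors, so their conditional probabilities respond to disjoint attributes, making the Kronecker structure exact rather than merely approximate — and to confirm that the perturbation used for non-identifiability can be taken inside the open parameter region cut out by \ref{C2} (all $\theta>0$, all $\eta>0$) and, in the RegCDMs case, keeps all coefficients and covariate values finite as required by \ref{A2}. Once these bookkeeping points are dispatched, parts \ref{prop2.i} and \ref{prop2.i2} follow.
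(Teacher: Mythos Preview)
Your proposal is correct. The non-identifiability half is essentially the paper's argument: both of you exploit that the attribute appearing in a single row of $Q$ makes two classes indistinguishable outside that one item, and then perturb the pair $(\eta_{c_0},\eta_{c_1})$ together with the item-$K$ response vectors while preserving the weighted sum. The paper writes this perturbation explicitly with a scalar $E$ close to $1$ (and, cosmetically, perturbs all $2^{K-1}$ pairs at once rather than a single pair), but the mechanism is identical; your reduction to ``two latent classes, one categorical item'' is exactly the heart of it. For the RegCDM transfer the paper likewise runs the same construction at each subject's covariate values and then invokes the full-rank design matrices in~\ref{A3} to pass non-identifiability from $(\bm\eta^i,\bm\Theta^i)$ to $(\bm\beta,\bm\gamma,\bm\lambda)$, which is what you sketch.

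Where you genuinely diverge is the full-column-rank verification. The paper does not build a Kronecker submatrix; instead it introduces the cumulative $T$-matrix with entries $T_{\bm r c}=P(\bm R\succeq\bm r\mid L=c)$, cites \citeA{xu2017} for the fact that $T$ has full column rank whenever $Q$ contains $\mathcal I_K$, and then observes that $T$ and $\bm\Psi'$ (the $\bm\Psi$-matrix with the reference row restored) are related by an invertible linear map, so $\bm\Psi$ inherits full column rank; finally Proposition~2 of \citeA{huang2004} pushes this to $\bm\Phi$. Your tensor-product route is more elementary and self-contained: once you note that the diagonal items $1,\dots,K$ each depend on a single, distinct attribute, the $2^K\times 2^K$ submatrix you describe really is $\big(\bigotimes_{k=1}^K A_k\big)\cdot\mathrm{diag}(D_c)$ with each $2\times 2$ block $A_k$ nonsingular (two distinct probability vectors always admit a nonsingular $2\times 2$ minor), so invertibility follows immediately. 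The trade-off is that the paper's route is a two-line citation, while yours requires the small bookkeeping you flag (choosing witness levels, avoiding the excluded reference pattern); both are valid, and your argument has the advantage of not relying on external results.
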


According to Proposition \ref{non identifiability but full column rank},  the $Q$-matrix as shown in the following form satisfies Conditions \ref{P1} and \ref{P2},
\begin{equation}
Q = \left(\begin{array}{ccc}
& \mathcal{I}_K~~~ & \\ \cline{1-3} %\hhline{~\mid-}
 \begin{array}{c}
0 \\
\vdots  \\
0
\end{array} &  {Q^{*}}& \end{array}\right). \nonumber
\end{equation}
The above $Q$-matrix is complete as the top $K \times K$ block is an identity matrix $\mathcal{I}_K$. From the $(K + 1)$th row to the $J$th row, the entries in the first column are $\bm{0}_{J-K}$, and the entries in the remaining columns are denoted as a submatrix $Q^{*}$.
%, so the conditions 1 and 2 in Proposition \ref{non identifiability but full column rank} hold. Hence by Proposition \ref{non identifiability but full column rank}, this CDMs's parameters are not all identifiable. Specifically,  
The first result \ref{prop2.i} in Proposition \ref{non identifiability but full column rank} is derived by extending a similar  conclusion for  CDMs with binary responses in \citeA{gu2018partial}
 %are not identifiable if some attributes are required by only one item \cite{gu2018partial}.
  to CDMs with polytomous responses. 
  %The second condition focuses on the form of $Q$-matrix. 
  With a complete $Q$-matrix, the matrix $\bm{\Psi}$ in Condition \ref{C3} can be shown to have full column rank, or equivalently, $\bm{\psi}_0,\cdots,\bm{\psi}_{C-1}$ are linearly independent. 
And further, we can show that for RegCDMs, the matrix $\bm{\Phi}$ in Condition \ref{A4} has full column rank, that is, $\bm{\phi}_0,\cdots,\bm{\phi}_{C-1}$ are linearly independent.
%the  next example to further illustrate the conclusion $(ii)$ in Proposition \ref{non identifiability but full column rank} and show that there exist RegLCMs, which are certain type of RegCDMs, such that the parameters are not identifiable under \ref{A1}--\ref{A4}. 

 % Based on this, we use the next example to further illustrate the conclusion $(ii)$ in Proposition \ref{non identifiability but full column rank} and show that there exist RegLCMs, which are certain type of RegCDMs, such that the parameters are not identifiable under \ref{A1}--\ref{A4}. 

With Proposition \ref{non identifiability but full column rank}, we see that   RegLCMs without covariates may not be identifiable when Conditions \ref{A1}, \ref{C2} and \ref{C3} are satisfied. 
Specifically, consider CDMs without covariates, 
%\ref{A1}--\ref{C2} naturally  hold if the number of items $J$ and the number of possible response for each item $M_j$'s are large, and if all the latent membership probabilities and conditional response probabilities are positive. Then
 given Conditions \ref{P1}--\ref{P2} of Proposition \ref{non identifiability but full column rank} are satisfied, Condition  \ref{C3} will be true since $\bm{\psi}_0,\cdots,\bm{\psi}_{C-1}$ are linearly independent. However, Proposition \ref{non identifiability but full column rank}\ref{prop2.i} shows that such CDMs are not identifiable. Therefore Conditions \ref{A1}, \ref{C2} and \ref{C3} are not sufficient for the identifiability of CDMs.

Similarly, RegLCMs may not be identifiable provided that Conditions \ref{A1}--\ref{A4} hold. For RegCDMs, 
%\ref{A1}--\ref{A2} holds under the same assumption as \ref{A1}--\ref{C2}. \ref{A3} holds when the covariate variables $\bm{x}_i, \bm{z}_i$ are all distinct. Then
 given Conditions \ref{A1}--\ref{A3} and Conditions \ref{P1}--\ref{P2} of Proposition \ref{non identifiability but full column rank} are met, Condition  \ref{A4} will be true since $\bm{\phi_0}, \cdots, \bm{\phi_{C-1}}$ are linearly independent, but such RegCDMs are not identifiable according to Proposition \ref{non identifiability but full column rank}\ref{prop2.i2}. Therefore Conditions \ref{A1}--\ref{A4} are not sufficient for the identifiability of RegCDMs.

\section{Sufficient and Practical Identifiability Conditions}
\label{proposed identifiability conditions}

As shown in Section \ref{Necessity and Sufficiency}, %conditions \ref{A4} is not sufficient and 
Conditions \ref{A1}--\ref{A4} are necessary but not sufficient for the identifiability of RegLCMs. 
To address the issue, this section provides  sufficient conditions to determine the identifiability of RegLCMs. In addition,  we also establish sufficient identifiability conditions for RegCDMs, which   are of great importance in cognitive diagnosis.
%, as RegCDMs play an important role in showing that \ref{A1}--\ref{A4} are not sufficient identifiability conditions.
%, which play a key role in establishing Proposition \ref{non identifiability but full column rank}.
%achieve strict identifiability and generic identifiability respectively.

For completeness, we first review the fundamental method to check the local identifiability before discussing the strict and generic identifiability. In Section \ref{local identifiability conditions in existing literature}, we have introduced the results of the local identifiability conditions proposed by \citeA{goodman1974}. The conditions can be generalized to finite many items and under the setting of RegLCMs.

We first consider RegLCMs without covariates.  The definitions of conditional response probabilities $\theta_{jrc}$ follow from Section \ref{Covariate Effect Model}.
% are preserved here.
For $\bm{r} =(r_{1}, \cdots, r_{J}) \in \mathcal{S}$,
% where $r_j = 0, \dots, M_j -1$ for $j = 1,\dots, J$, 
recall that we denote the response probability as 
\begin{eqnarray}
	P(\bm{R} = \bm{r}) = \sum_{c = 0}^{C-1} \eta_c P(\bm{R} = \bm{r} \mid L = c)   
	=  \sum_{c = 0}^{C-1} \eta_c  \prod_{j=1}^{J}\theta_{jr_jc}. \nonumber
\end{eqnarray}
The local identifiability condition proposed by Goodman is associated with the Jacobian matrix 
\begin{equation}
\mathbf{J}=\left(\bm{J}_{\eta_1}, \cdots,\bm{J}_{\eta_{C-1}}, \bm{J}_{\theta_{110}}, \cdots, \bm{J}_{\theta_{1(M_1-1)0}},\cdots,\bm{J}_{\theta_{J1(C-1)}}, \cdots, \bm{J}_{\theta_{J(M_J-1)(C-1)}} \right). \nonumber
\end{equation}
%The form of $\mathbf{J}$ is presented in \textit{Proof of Proposition 1} in \citeA{huang2004}. 
The row dimension of $\mathbf{J}$ is $S-1$ and the column dimension is $ C(\sum_{j=1}^{J} M_j -J) + C - 1  $, where each row index corresponds to one response probability $P(\bm{R} = \bm{r})$ for $\bm{r} \in \mathcal{S}^{\prime} $  and each column index corresponds to one free parameter from  $ \{ {\eta_1}, \cdots,{\eta_{C-1}}, {\theta_{110}}, \cdots, {\theta_{1(M_1-1)0}},\cdots,$ ${\theta_{J1(C-1)}}, \cdots, {\theta_{J(M_J-1)(C-1)}} \}$. For $c = 1,\dots,C-1$, $\bm{J}_{\eta_c}$ is a vector of dimension $S - 1$. Each entry is a partial derivative of the response probability $P(\bm{R} = \bm{r})$  with respect to $\eta_c$ at true value of $\eta_c$, which is computed to be
\begin{equation}
	\frac{\partial{P(\bm{R} = \bm{r})}}{\partial{\eta_c}} = \prod_{j=1}^{J} \theta_{jr_jc}- \prod_{j=1}^{J} \theta_{jr_j0}. \nonumber
\end{equation}
And for $j = 1, \dots, J$, $r = 1, \dots, M_j-1$ and $c = 0, \dots, C-1$, $\bm{J}_{\theta_{jrc}}$ is a vector of dimension $S - 1$. Each entry is a partial derivative of the response probability $P(\bm{R} = \bm{r})$   with respect to $\theta_{jrc}$ at true value of $\theta_{jrc}$, which is computed to be
\begin{equation}
\label{jacobian theta}
	\frac{\partial{P(\bm{R} = \bm{r})}}{\partial{\theta_{jrc}}} =
  \begin{cases}
     \eta_c \prod_{d \neq j} \theta_{dr_dc}, & \text{if $r_{j} = r$}; \\
     - \eta_c \prod_{d \neq j} \theta_{dr_dc}, & \text{if $r_{j} = 0$}; \\
  0, & \text{otherwise}.
  \end{cases} \nonumber
\end{equation}
%The derivatives with respect to other item probability parameters $\partial{\theta_{\bm{r}*}}/{\partial{\theta_{jsc}}}$ for $j = 1, \cdots, J$ are similar to (\ref{jacobian theta}). 

%\textcolor{blue}{N was used for sample dimension -- need to change notation here (--)}

\smallskip
\newtheorem{theorem}{Theorem}
\begin{theorem}[Local Identifiability for LCMs and CDMs]
\label{local RegLCMs}
 Consider RegLCMs without covariates or CDMs. Under Conditions \ref{A1} and \ref{C2}, $(\bm{\eta}, \bm{\Theta})$ are locally identifiable if and only if the following condition holds.
\begin{enumerate}[label=($A$\arabic*$^{**}$),leftmargin=1.5cm]
 		 \setcounter{enumi}{2}
 \item The Jacobian matrix $\mathbf{J}$ formed above  has full column rank.
   \end{enumerate}
 \end{theorem}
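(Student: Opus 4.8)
The plan is to treat Theorem~\ref{local RegLCMs} as the classical characterization of local identifiability for a smoothly --- indeed polynomially --- parametrized finite model, and to prove the two implications via the inverse function theorem and the constant--rank/fiber--dimension theorem. First I would set up the parametrization: stack the free parameters into $\bm{\vartheta} = (\eta_1,\dots,\eta_{C-1},\,\theta_{jrc}:\, j=1,\dots,J,\ r=1,\dots,M_j-1,\ c=0,\dots,C-1)$, a vector of length $d := C(\sum_{j}M_j-J)+C-1$, with the remaining coordinates $\eta_0$ and $\theta_{j0c}$ determined by the sum--to--one constraints, so that $(\bm{\eta},\bm{\Theta})$ and $\bm{\vartheta}$ are in bijection. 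Define $\bm{f}(\bm{\vartheta}) = \big(P(\bm{R}=\bm{r})\big)_{\bm{r}\in\mathcal{S}^{\prime}}\in\mathbb{R}^{S-1}$; since probabilities sum to one, $\bm{f}(\bm{\vartheta})$ determines the full response distribution, so the model is locally identifiable at $\bm{\vartheta}^0$ exactly when $\bm{f}$ is locally injective at $\bm{\vartheta}^0$. Condition \ref{C2} places $\bm{\vartheta}^0$ in the interior of the parameter space, so $\bm{f}$ is polynomial on a neighborhood and $\mathbf{J}=D\bm{f}(\bm{\vartheta}^0)$; Condition \ref{A1} is precisely $d\le S-1$, which is what allows $\mathbf{J}$ to have full column rank at all.

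For the ``if'' direction I would suppose $\mathbf{J}$ has full column rank $d$. A rank-$d$ matrix with $d$ columns has $d$ linearly independent rows, so some index set $\mathcal{T}\subseteq\mathcal{S}^{\prime}$ with $|\mathcal{T}|=d$ makes the corresponding $d\times d$ submatrix of $\mathbf{J}$ invertible. Let $\tilde{\bm{f}}(\bm{\vartheta}) = (P(\bm{R}=\bm{r}))_{\bm{r}\in\mathcal{T}}$; then $D\tilde{\bm{f}}(\bm{\vartheta}^0)$ is invertible, so by the inverse function theorem $\tilde{\bm{f}}$ is a diffeomorphism from a neighborhood of $\bm{\vartheta}^0$ onto its image, in particular injective there. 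Because $\tilde{\bm{f}}$ is a subvector of $\bm{f}$, $\bm{f}$ is injective on that neighborhood, i.e., $\bm{\vartheta}^0$ is locally identifiable.

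For the ``only if'' direction I would argue by contraposition: assuming $\mathbf{J}$ is rank-deficient, I would produce parameters arbitrarily close to $\bm{\vartheta}^0$ with the same response distribution. If $\mathrm{rank}\,D\bm{f}$ is constant, say $\rho<d$, on a neighborhood of $\bm{\vartheta}^0$, the constant--rank theorem makes $\bm{f}$ locally equivalent to a coordinate projection, so the fiber $\bm{f}^{-1}(\bm{f}(\bm{\vartheta}^0))$ is a $(d-\rho)$-dimensional submanifold through $\bm{\vartheta}^0$, and any curve in it gives the desired observationally equivalent parameters. To relax the constancy assumption I would use that $\bm{f}$ is polynomial: $\mathrm{rank}\,D\bm{f}$ is lower semicontinuous with a generic value $\rho^{*}$, and by semicontinuity of fiber dimension every fiber of $\bm{f}$, including the one through $\bm{\vartheta}^0$, has dimension at least $d-\rho^{*}$, so the argument goes through whenever $\rho^{*}<d$.

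The main obstacle is the leftover case in the ``only if'' direction: $\bm{\vartheta}^0$ might be a non-generic point at which $\mathrm{rank}\,D\bm{f}$ drops strictly below $\rho^{*}=d$, and the toy map $t\mapsto t^2$ at $t=0$ shows that a polynomial map can be rank-deficient yet locally injective at such a point, so the equivalence can fail there in the absence of further hypotheses. Ruling this out requires either adopting the \emph{regular point} convention underlying the classical local-identifiability results of \citeA{goodman1974} --- under which the stated equivalence is exactly right --- or exploiting extra structure of the latent class likelihood, e.g.\ choosing a direction in $\ker\mathbf{J}$ and using the multilinearity of $P(\bm{R}=\bm{r})$ in the parameter blocks $\bm{\eta}$ and the per-item vectors $\bm{\theta_{jc}}$ to integrate that direction into an honest analytic curve of equivalent parameters. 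I expect the write-up to take the former route, so that the substantive content is the ``if'' direction together with the identification of local identifiability of $(\bm{\eta},\bm{\Theta})$ with local injectivity of $\bm{f}$.
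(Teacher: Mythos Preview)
Your proof is correct, and you correctly flag the regular-point caveat in the ``only if'' direction, but your route differs from the paper's. The paper does not invoke the inverse function theorem or the constant-rank theorem directly. Instead it computes the Fisher information matrix of the multinomial likelihood $f(\bm{R};\bm{\eta},\bm{\Theta})=\prod_{\bm r\in\mathcal S}P(\bm R=\bm r)^{\mathbb I\{\bm R=\bm r\}}$ and shows it factors as $\mathbf J^{T}D\,\mathbf J$ with $D=\mathrm{diag}\big(1/P(\bm R=\bm r)\big)$ positive definite, so nonsingularity of the information matrix is equivalent to $\mathbf J$ having full column rank; it then cites Rothenberg's (1971) Theorem~1 to conclude that local identifiability is equivalent to nonsingularity of the information matrix at regular points. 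Your argument is in effect a self-contained re-proof of the relevant piece of Rothenberg's theorem applied to the probability map $\bm f$: the ``if'' direction is more elementary than the paper's (pure inverse function theorem, no information matrix needed), while for the ``only if'' direction you arrive at exactly the same regular-point proviso that the paper absorbs by citing Rothenberg. The trade-off is that the paper's proof is shorter and leans on a well-known reference, whereas yours is more transparent about where the regularity hypothesis enters and why it cannot simply be dropped.
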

 
 To better present the following local identifiability theorem for RegLCMs and RegCDMs, we consider a ``hypothetical"  subject with all covariates being zeros, that is, $\bm{x} = (1,\bm{0}_{1\times p})^{T}$ and $\bm{z} = \bm{0}_{J\times q}$. Denote the parameters of this particular subject to be $\bm{\eta}^0$ and $\bm{\Theta}^0$. The Jacobian matrix $\mathbf{J}^0$ formed by the derivatives of conditional response probabilities with respect to parameters  $\bm{\eta}^0$ and $\bm{\Theta}^0$ is equivalent to the computation of Jacobian matrix $\mathbf{J}$ of general restricted latent class models shown in Theorem \ref{local RegLCMs}. Next, we present a theorem to associate the $\mathbf{J}^0$ with the local identifiability of   $(\bm{\beta}, \bm{\gamma}, \bm{\lambda})$.
 
 \begin{theorem}[Local Identifiability for RegLCMs and RegCDMs]
 \label{local RegCDMs}
 Consider RegLCMs or  RegCDMs. Under Conditions \ref{A1}--\ref{A3}, $(\bm{\beta}, \bm{\gamma}, \bm{\lambda})$ are locally identifiable if and only if the following condition holds.
\begin{enumerate}[label=($A$\arabic*$^{\prime}$),leftmargin=1.5cm]
 		 \setcounter{enumi}{3}
 \item The Jacobian matrix $\mathbf{J}^0$ formed from the hypothetical subject with covariates being zeros has full column rank.
   \end{enumerate}
 \end{theorem}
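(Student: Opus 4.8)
The plan is to reduce the local identifiability of $(\bm{\beta},\bm{\gamma},\bm{\lambda})$ to a full‑column‑rank condition on the Jacobian of the map from the regression coefficients to the \emph{whole} collection of subject‑specific response distributions, and then to show that this condition is equivalent to $\mathbf{J}^0$ having full column rank. Since each $P(\bm{R}_i=\bm{r}\mid\bm{x}_i,\bm{z}_i)$ is a real‑analytic function of $(\bm{\beta},\bm{\gamma},\bm{\lambda})$ through the logit links \eqref{eta}--\eqref{theta}, the same reasoning that establishes Theorem~\ref{local RegLCMs}, applied now to the model stacked over all $N$ subjects, shows that $(\bm{\beta},\bm{\gamma},\bm{\lambda})$ is locally identifiable if and only if the Jacobian $\widetilde{\mathbf{J}}$ --- with rows indexed by pairs $(i,\bm{r})$, $i=1,\dots,N$, $\bm{r}\in\mathcal{S}'$, and columns the partial derivatives with respect to the free entries of $(\bm{\beta},\bm{\gamma},\bm{\lambda})$ --- has full column rank.

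Next I would decompose $\widetilde{\mathbf{J}}$ by the chain rule: the block of $\widetilde{\mathbf{J}}$ corresponding to subject $i$ equals $\mathbf{J}(\bm{\eta}^i,\bm{\Theta}^i)\,\mathbf{M}^i$, where $\mathbf{J}(\bm{\eta}^i,\bm{\Theta}^i)$ is the core Jacobian of Theorem~\ref{local RegLCMs} evaluated at subject $i$'s induced parameters and $\mathbf{M}^i=\partial(\bm{\eta}^i,\bm{\Theta}^i)/\partial(\bm{\beta},\bm{\gamma},\bm{\lambda})$. The matrix $\mathbf{M}^i$ has a Kronecker‑type block structure: its $\bm{\beta}$‑block is $\bm{x}_i^{\top}$ tensored with $G^i_{\eta}$, the covariance matrix of the multinomial with weights $\bm{\eta}^i$ restricted to its last $C-1$ coordinates, while its $(\bm{\gamma},\bm{\lambda})$‑blocks are assembled from the rows of the design matrices $\bm{Z}_j$ together with the analogous covariance matrices $G^i_{j,c}$ of the vectors $\bm{\theta}_{jc}^i$. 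Because the logit link keeps every $\eta^i_c$ and $\theta^i_{jrc}$ strictly inside $(0,1)$, all the $G^i_{\eta}$ and $G^i_{j,c}$ are positive definite, hence invertible; combined with Condition~\ref{A3} (full column rank of $\bm{X}$ and of each $\bm{Z}_j$), this makes the linear map $\bm{v}\mapsto\big((\bm{\delta}^i,\bm{\Delta}^i)\big)_{i=1}^{N}$ --- sending a perturbation $\bm{v}$ of $(\bm{\beta},\bm{\gamma},\bm{\lambda})$ to the induced perturbations $(\bm{\delta}^i,\bm{\Delta}^i)$ of $(\bm{\eta}^i,\bm{\Theta}^i)$ --- injective, so that $\ker\widetilde{\mathbf{J}}$ is identified with the set of such induced perturbations satisfying $\mathbf{J}(\bm{\eta}^i,\bm{\Theta}^i)(\bm{\delta}^i,\bm{\Delta}^i)=\bm{0}$ for every $i$.

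The heart of the argument is then to show that this kernel is nontrivial exactly when $\ker\mathbf{J}^0\neq\{\bm{0}\}$. For this I would use the fact that every subject's induced configuration is a covariate tilt of the hypothetical zero‑covariate configuration: a direct computation gives $\psi^i_{\bm{r}c}=\big(\prod_{j}w^i_{jr_j}\big)\big(\prod_{j}W^i_{jc}\big)^{-1}\psi^0_{\bm{r}c}$ and $\eta^i_c=t^i_c\,\eta^0_c/\sum_{\ell}t^i_{\ell}\,\eta^0_{\ell}$, where $w^i_{jr}$, $W^i_{jc}$ and $t^i_c$ are strictly positive functions of $(\bm{\gamma},\bm{\lambda})$ and the covariates; that is, the response‑pattern matrix $\bm{\Psi}$ and the class‑weight vector $\bm{\eta}$ of subject $i$ are obtained from those of subject $0$ by left and right multiplication by positive diagonal matrices. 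Differentiating these relations, one produces invertible linear maps intertwining the ``transported'' core Jacobian at subject $i$ with $\mathbf{J}^0$, so that a perturbation of $(\bm{\beta},\bm{\gamma},\bm{\lambda})$ lies in $\ker\widetilde{\mathbf{J}}$ if and only if the core perturbation it induces lies in $\ker\mathbf{J}^0$; this produces a linear bijection $\ker\widetilde{\mathbf{J}}\cong\ker\mathbf{J}^0$, and hence $\widetilde{\mathbf{J}}$ has full column rank $\iff\ker\widetilde{\mathbf{J}}=\{\bm{0}\}\iff\ker\mathbf{J}^0=\{\bm{0}\}\iff\mathbf{J}^0$ has full column rank.

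I expect this last step to be the main obstacle: the covariate tilt does not act on the core Jacobian by a single uniform rescaling, so the intertwining maps have to be extracted by a careful chain‑rule bookkeeping rather than by a one‑line diagonal‑conjugation argument. Conditions~\ref{A1}--\ref{A3} enter precisely here --- \ref{A1} guarantees dimensional compatibility (at least as many response patterns as core parameters, so that full column rank of $\mathbf{J}^0$ is not vacuously impossible), \ref{A2} keeps all logit predictors finite so the tilts are genuine diffeomorphisms, and \ref{A3} is what makes a perturbation that is visible to even a single subject detectable by $\widetilde{\mathbf{J}}$.
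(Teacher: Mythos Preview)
Your strategy is essentially the paper's, repackaged around the stacked Jacobian $\widetilde{\mathbf{J}}$ rather than a per-subject reduction. The paper factors the argument into three pieces: (i) for each fixed subject $i$, the subject-level Jacobian $\mathbf{J}^i$ has full column rank if and only if $\mathbf{J}^0$ does---this is exactly your covariate-tilt argument, carried out after first replacing the columns of $\mathbf{J}^i$ by a simpler spanning set $\overline{\mathbf{J}}^i$ (an auxiliary lemma) so that the diagonal rescalings become transparent; (ii) per subject, pass to the log-odds parameters $(\bm{\epsilon}^i,\bm{\omega}^i)=(\bm{x}_i^{\top}\bm{\beta}_c,\ \gamma_{jrc}+\bm{z}_{ij}^{\top}\bm{\lambda}_{jr})$ using the bijectivity of the multinomial logit (your positive-definite link covariances $G^i_{\eta},G^i_{j,c}$ play this role); (iii) use the full-rank design matrices from \ref{A3} to conclude that $(\bm{\beta},\bm{\gamma},\bm{\lambda})$ is locally identifiable iff every $(\bm{\epsilon}^i,\bm{\omega}^i)$ is. Your chain-rule factorisation $\widetilde{\mathbf{J}}\big|_i=\mathbf{J}(\bm{\eta}^i,\bm{\Theta}^i)\,\mathbf{M}^i$ bundles (ii) and (iii) together, and you correctly isolate (i) as the crux. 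What the paper's per-subject route buys is that the harder direction---``$\mathbf{J}^0$ rank-deficient $\Rightarrow$ $(\bm{\beta},\bm{\gamma},\bm{\lambda})$ not locally identifiable''---never requires producing a single perturbation $\bm v$ that is simultaneously compatible across all $N$ subjects.

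One point to tighten: the asserted ``linear bijection $\ker\widetilde{\mathbf{J}}\cong\ker\mathbf{J}^0$'' cannot be literally correct, since those kernels sit inside spaces of different dimension whenever $p>0$ or $q>0$ (the map $\bm v\mapsto\mathbf{M}^0\bm v$ to the zero-covariate subject has nontrivial kernel, so it cannot be a bijection onto anything). What you actually need is only the dichotomy $\ker\widetilde{\mathbf{J}}=\{0\}\Longleftrightarrow\ker\mathbf{J}^0=\{0\}$. The paper's tilt computation does give, for each $i$, an \emph{invertible} linear change of coefficients carrying null combinations of $\overline{\mathbf{J}}^i$ to null combinations of $\overline{\mathbf{J}}^0$ (so $\ker\mathbf{J}^i\cong\ker\mathbf{J}^0$), but that isomorphism is subject-dependent; the passage from ``$\ker\mathbf{J}^i\neq\{0\}$ for every $i$'' to ``$\ker\widetilde{\mathbf{J}}\neq\{0\}$'' is then handled in the paper by the log-odds/design-matrix step rather than by exhibiting one $\bm v$ directly. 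If you keep the stacked-Jacobian framing you should make this step explicit---e.g.\ by restricting to an intercept-only perturbation $(\delta\beta_{0c},\delta\gamma_{jrc})$, which induces the same log-odds shift at every subject and thereby reduces the cross-subject compatibility issue to the single-subject tilt relation you already have.
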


 Theorems \ref{local RegLCMs} and \ref{local RegCDMs} are intuitively straightforward but nontrivial to apply in practice. When the number of latent classes $C$ and the number of item responses $M_j$ increase, the dimension of the Jacobian matrix would increase, making it challenging to compute the rank of the Jacobian matrix.

Moreover, the conditions introduced in Theorems \ref{local RegLCMs} and \ref{local RegCDMs} only guarantee the local identifiability, while the global identifiability is not discussed. To ensure the sufficiency for global strict identifiability, we combine Goodman's idea with the algebraic results from Kruskal Theorem to establish our conditions. Recall that 
$
\bm{\Phi} = (\bm{\phi}_c; c = 0, \ldots, C-1) \nonumber
$
defined in (\ref{tau}) is a matrix of dimension $(S-1) \times C$.
%, where $S-1$ denotes the number of free conditional response probabilities. 
And $\bm{\phi}_c$ is a  vector where each element corresponds to one response pattern and is denoted as
$
\phi_{\bm{r}c}= P(\bm{R} = \bm{r} \mid L = c,  \bm{z} = \bm{0}). 
%= \prod_{j =1}^J \frac{ e^{\gamma_{jr_jc}} }{1 + \sum_{s = 1}^{M_j-1} e^{\gamma_{jsc}}}. \nonumber
$
To apply Kruskal Theorem and to establish the strict identifiability conditions, 
%impose a stronger linear independence condition than \ref{A1}--\ref{A4},
 we consider a three-way decomposition of $\bm{\Phi}$ and propose the linear independence condition regarding the decomposed matrices instead of $\bm{\Phi}$. We divide the total of $J$ items of $\bm{\Phi}$ into three mutually exclusive item sets $\mathcal{J}_1, \mathcal{J}_2$ and $\mathcal{J}_3$ containing $J_1, J_2$ and $J_3$ items respectively, with $J_1 + J_2 + J_3 = J$. For $t = 1,2$ and $3$, each set $\mathcal{J}_t$ can be viewed as   one polytomous variable $T_t$ taking on values in $\{1, \cdots,\kappa_t\}$ with cardinality $\kappa_t = \prod_{j \in \mathcal{J}_t } M_j$ to be the number of response patterns for this set. And each variable $T_t$ is used to construct a $\kappa_t \times C$ submatrix $\bm{\Phi}_t$, where its row indices arise from the response patterns corresponding to $T_t$. The linear independence condition is then regarding to the Kruskal ranks of $\bm{\Phi}_t$ rather than normal column rank of $\bm{\Phi}$, 
 %So $T_t$ is a matrix of dimension $\kappa_t \times C$ and $\kappa_1 + \kappa_2 + \kappa_3 = 2^{J_1} + 2^{J_2} + 2^{J_3}$ where $J_1 + J_2 + J_3 = J$.\citeA{kruskal1977} introduces that
 where for any matrix $\bm{\Phi}_t$, its Kruskal rank $I_t$ is the smallest number of columns of $\bm{\Phi}_t$ that are linearly dependent. 
  %if every set of at most $I_t$ columns are linearly independent.

\begin{theorem}[Strict Identifiability for RegLCMs]
 \label{covariate strict identifiability regLCMs}
 Continue with the notation definitions in Section \ref{local identifiability conditions in existing literature}. For RegLCMs, under Conditions \ref{A1}--\ref{A3} and the following condition, $(\bm{\beta}, \bm{\gamma}, \bm{\lambda})$ are strictly identifiable. 
 \begin{enumerate}[label=($C$\arabic*),leftmargin=1.5cm]
  %   \item \label{C1NEW} $\prod_{j=1}^{J} M_j-1 \geq C(\sum_{j=1}^{J} M_j -J) + C - 1 $;
  %  \item \label{C2NEW}All $\beta_{pc}, \gamma_{jrc}, \lambda_{qjr} $ are  free parameters and all covariate values $x_{ip}, z_{ijq}$ are finite;
   % \item \label{C3NEW}The design matrix of the covariates $\bm{X}$ and $\bm{Z}_j$'s have full column rank;
     		 		 \setcounter{enumi}{3} 
  \item \label{C4NEW}The matrix $\bm{\Phi}$ can be decomposed into $\bm{\Phi}_1$, $\bm{\Phi}_2$ and $\bm{\Phi}_3$ with Kruskal ranks $I_1$, $I_2$ and $I_3$ satisfying $I_1 + I_2 + I_3 \geq 2C + 2$.

 \end{enumerate}
\end{theorem}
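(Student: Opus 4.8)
The plan is to first reduce the identifiability of the regression coefficients $(\bm{\beta},\bm{\gamma},\bm{\lambda})$ to the identifiability of the latent-class quantities they parametrize, and then use Kruskal's theorem on a three-way array built from $\bm{\Phi}$ to obtain the latter. Concretely, I would argue in two layers. First, suppose two parameter vectors $(\bm{\beta},\bm{\gamma},\bm{\lambda})$ and $(\bar{\bm{\beta}},\bar{\bm{\gamma}},\bar{\bm{\lambda}})$ induce the same distribution of $\bm{R}_i$ for every observed covariate value $(\bm{x}_i,\bm{z}_i)$. By Assumption \ref{assumption1} the marginal law of $\bm{R}_i$ conditional on $(\bm{x}_i,\bm{z}_i)$ is the finite mixture $\sum_{c} \eta_c^i \prod_j \theta_{jr_jc}^i$, where $\eta_c^i$ depends only on $\bm{x}_i^T\bm{\beta}$ and $\theta_{jrc}^i$ only on $\bm{\gamma}_{jc}+\bm{z}_{ij}^T\bm{\lambda}_j$. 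The key move is to first establish, at each fixed covariate value, that the vector $(\bm{\eta}^i,\bm{\Theta}^i)$ is identified (up to the inherent label permutation of the $C$ classes), and then to propagate this across covariate values and invoke the full-column-rank Conditions \ref{A3} on $\bm{X}$ and on the $\bm{Z}_j$'s, together with the monotonicity of the logit link, to pin down $(\bm{\beta},\bm{\gamma},\bm{\lambda})$ uniquely; this last step is exactly the bijective-mapping argument flagged after \eqref{log p} and is also used in \citeA{huang2004} under Condition \ref{A3}.

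The heart of the argument — identifying $(\bm{\eta}^i,\bm{\Theta}^i)$ at a single covariate value — is where Kruskal's theorem enters. The natural covariate value to anchor on is the hypothetical zero vector $\bm{z}=\bm{0}$ (and the corresponding reference $\bm{x}$), for which the conditional response probabilities reduce to the $\phi_{\bm{r}c}$ of \eqref{tau} and the $C\times S$ array of conditional pattern probabilities is precisely the transpose of $\bm{\Phi}$. I would partition the $J$ items into the three blocks $\mathcal{J}_1,\mathcal{J}_2,\mathcal{J}_3$ as in the statement, forming the three matrices $\bm{\Phi}_1,\bm{\Phi}_2,\bm{\Phi}_3$ of sizes $\kappa_t\times C$; then the joint distribution of the three coarsened variables $(T_1,T_2,T_3)$ given $\bm z=\bm 0$ is the Kruskal three-way mixture with mixing weights $\bm{\eta}^0$ and component matrices $\bm{\Phi}_1,\bm{\Phi}_2,\bm{\Phi}_3$. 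Kruskal's theorem (in the form used by \citeA{allman2009}) says that when $I_1+I_2+I_3\ge 2C+2$, where $I_t$ is the Kruskal rank of $\bm{\Phi}_t$, this decomposition is unique up to simultaneous permutation and rescaling of columns; the stochastic normalization of each $\bm{\Phi}_t$ (columns are probability vectors over the block's patterns) kills the rescaling, and so $\bm{\eta}^0$ and the three $\bm{\Phi}_t$ are identified up to a common label permutation. Because the three blocks share the same latent variable, reassembling them recovers all of $\bm{\Phi}$, hence all $\phi_{\bm r c}$, hence (using that the $\phi_{\bm r c}$ factor over items) each conditional item-response vector $\bm{\theta}_{jc}^0$ and the class weights $\bm{\eta}^0$, i.e. $(\bm{\eta}^0,\bm{\Theta}^0)$ up to permutation.

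From $(\bm{\eta}^0,\bm{\Theta}^0)$ I would then recover the coefficients. Fixing the label permutation by convention, the $\gamma_{jrc}$ are obtained from the $\theta_{jrc}^0$ by inverting the logit, and the identity of the class labels across all $(\bm{x}_i,\bm{z}_i)$ is transported by continuity/consistency of the mixture in the covariates; then varying $\bm{x}_i$ over the $N$ rows and using that $\bm{X}$ has full column rank (Condition \ref{A3}) forces $\bm{\beta}$ to be unique via $\mathrm{logit}(\eta_c^i/\eta_0^i)=\bm{x}_i^T\bm{\beta}_c$, and similarly varying $\bm{z}_{ij}$ with $\bm{Z}_j$ of full column rank forces $\bm{\lambda}_j$ (and re-confirms $\bm{\gamma}_{jc}$) to be unique via $\mathrm{logit}(\theta_{jrc}^i/\theta_{j0c}^i)=\gamma_{jrc}+\bm{z}_{ij}^T\bm{\lambda}_{jr}$. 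Condition \ref{A2} (finiteness) is what makes these logit inversions and linear solves well-posed, and \ref{A1} is subsumed by \ref{C4NEW} via the usual counting inequality.

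The main obstacle I anticipate is the bookkeeping around the label-permutation ambiguity: Kruskal's theorem identifies the triple of component matrices only up to a permutation of the $C$ columns, and one must check that the \emph{same} permutation works simultaneously for $\bm{\Phi}_1,\bm{\Phi}_2,\bm{\Phi}_3$ (it does, because they are coupled through the single latent class), and then that this permutation can be chosen consistently for all $N$ subjects so that the regression equations \eqref{log eta}–\eqref{log p} are being solved with a coherent labeling before applying the full-rank design conditions. A secondary technical point is verifying that the stochasticity of the columns of each $\bm{\Phi}_t$ indeed removes the diagonal-rescaling freedom in Kruskal's conclusion; this is standard but should be stated. Everything else — the logit inversions, the linear-algebra step using \ref{A3}, and the reduction \ref{C4NEW}$\Rightarrow$\ref{A1} — is routine.
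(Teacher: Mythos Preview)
Your two-layer plan is the right shape, but the Kruskal step has a genuine gap. You apply Kruskal's theorem at the \emph{hypothetical} covariate value $\bm{z}=\bm{0}$, where the component matrices are exactly the $\bm{\Phi}_t$. However, $\bm{z}=\bm{0}$ need not appear among the observed $\bm{z}_i$, and the starting hypothesis is only that $(\bm{\beta},\bm{\gamma},\bm{\lambda})$ and $(\bar{\bm{\beta}},\bar{\bm{\gamma}},\bar{\bm{\lambda}})$ induce the same law of $\bm{R}_i$ at the \emph{observed} $(\bm{x}_i,\bm{z}_i)$. From that you cannot conclude that the two induced mixtures agree at $\bm{z}=\bm{0}$, so you are not entitled to feed that three-way array into Kruskal. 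Relatedly, even if $(\bm{\eta}^0,\bm{\Theta}^0)$ were somehow pinned down, the regression step still needs $(\bm{\eta}^i,\bm{\Theta}^i)$ at every observed $i$ to invoke the full-column-rank design matrices; the phrase ``transported by continuity/consistency of the mixture in the covariates'' does not supply those quantities.

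The paper closes exactly this gap, and does so without anchoring at $\bm{z}=\bm{0}$. Its Lemma~\ref{psi and tau} shows that Condition~\ref{C4NEW} on $\bm{\Phi}$ \emph{transfers} to the subject-level matrices $\bm{\Psi}^i$: for each block $t$ and each $i$, the columns of $\bm{\Psi}_t^i$ are obtained from those of $\bm{\Phi}_t$ by elementwise multiplication by a fixed strictly positive vector together with a strictly positive scalar per column (this is the content of \eqref{ac acstar}--\eqref{bsigma asigma}). Such scalings preserve linear independence of any set of columns, so the Kruskal rank satisfies $O_t^i\ge I_t$, hence $O_1^i+O_2^i+O_3^i\ge 2C+2$ for every observed $i$. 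Kruskal/Allman (Corollary~\ref{strict identifiability corollary}) then identifies $(\bm{\eta}^i,\bm{\Theta}^i)$ at each observed covariate value, after which your second layer---the logit inversion plus full-rank Condition~\ref{A3}---goes through as you wrote. The missing ingredient in your proposal is precisely this Kruskal-rank transfer from $\bm{\Phi}_t$ to $\bm{\Psi}_t^i$; once you insert it, Kruskal is applied at the observed $(\bm{x}_i,\bm{z}_i)$ rather than at an unobserved anchor, and the argument becomes complete.
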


Theorem \ref{covariate strict identifiability regLCMs} is sufficient to guarantee the strict identifiability for RegLCMs, including RegCDMs. Compared with the local identifiability conditions in \citeA{huang2004}, Theorem \ref{covariate strict identifiability regLCMs} keeps Conditions \ref{A1}--\ref{A3} and replaces Condition \ref{A4} concerning the column rank of $\bm{\Phi}$ with a stronger Condition \ref{C4NEW} concerning the Kruskal ranks of the decomposed matrices from $\bm{\Phi}$. This condition is based on the algebraic result in \citeA{kruskal1977}.
%Besides, since    \ref{A1}--\ref{A4} fail to ensure the identifiability of RegCDMs, which is a special RegLCMs, 
We next present   identifiability   conditions tailored to    RegCDMs.

\begin{proposition} [Strict Identifiability for RegCDMs]
 \label{covariate strict identifiability CDMs}
For RegCDMs with polytomous responses, under Conditions \ref{A1}--\ref{A3} and   the following condition,    $(\bm{\beta}, \bm{\gamma}, \bm{\lambda})$ are strictly identifiable.
 		\begin{enumerate}[label=($C$\arabic*$^*$),leftmargin=1.5cm]
 		 		 \setcounter{enumi}{3}
  			\item \label{C4star}After rows permutation,  $Q$-matrix takes the form $Q = ( \mathcal{I}_K, \mathcal{I}_K, Q^{*}  )^{T} $ containing two identity matrices $\mathcal{I}_K$ and one  submatrix $Q_{(J-2K) \times K}^{*}$. And  for any different latent classes $c$ and $c^{\prime}$, there exist at least one item $j > 2K$ such that $({\theta}_{j0c}, \cdots, \theta_{j(M_j-1)c})^{T} \neq ({\theta}_{j0c^{\prime}}, \cdots, \theta_{j(M_j-1)c^{\prime}})^{T}$.  
 		\end{enumerate}

 \end{proposition}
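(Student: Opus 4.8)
The plan is to reduce the strict identifiability of the RegCDM coefficients $(\bm{\beta},\bm{\gamma},\bm{\lambda})$ to the strict identifiability of the underlying $(\bm{\eta},\bm{\Theta})$ of the hypothetical zero-covariate CDM, and then verify that Condition \ref{C4star} supplies the Kruskal-rank hypothesis needed by Theorem \ref{covariate strict identifiability regLCMs}. First I would recall the reduction built into the model setup: by Assumptions \ref{assumption1}--\ref{assumption2} and the monotone (logit) link, once $(\bm{\eta}^0,\bm{\Theta}^0)$ at the hypothetical subject with $\bm{x}=(1,\bm{0})^T$, $\bm{z}=\bm{0}$ is identified, Condition \ref{A3} (full column rank of $\bm{X}$ and each $\bm{Z}_j$) gives a bijective correspondence back to $(\bm{\beta},\bm{\gamma},\bm{\lambda})$ — this is exactly the mechanism already used to prove Theorem \ref{covariate strict identifiability regLCMs}. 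So it suffices to show that the zero-covariate CDM with a $Q$-matrix of the form $Q=(\mathcal{I}_K,\mathcal{I}_K,Q^*)^T$, together with the stated separation property for items $j>2K$, has strictly identifiable $(\bm{\eta},\bm{\Theta})$; equivalently, that such a $Q$ and $\bm{\Theta}$ force a decomposition of $\bm{\Phi}$ into three parts whose Kruskal ranks sum to at least $2C+2=2^{K+1}+2$.

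The key step is the choice of the three item groups and the Kruskal-rank bookkeeping. I would take $\mathcal{J}_1$ to be the first block of $K$ items (the first $\mathcal{I}_K$) and $\mathcal{J}_2$ the second block of $K$ items (the second $\mathcal{I}_K$), and $\mathcal{J}_3$ all remaining $J-2K$ items. For $\mathcal{J}_1$: because each of these $K$ items requires exactly one distinct attribute, the submatrix $\bm{\Phi}_1$ has columns indexed by $\bm{\alpha}\in\{0,1\}^K$ whose entries factor across the $K$ items as products of the corresponding single-attribute response probabilities; under Condition \ref{C2} (all $\theta_{jrc}>0$) and the G-DINA monotonicity, distinct attribute profiles give distinct columns, and a standard argument (as in \citeA{Culpepper2019}, \citeA{gu2018partial}) shows $\bm{\Phi}_1$ has Kruskal rank exactly $C=2^K$ — in fact full column rank, so every set of columns is independent and $I_1=C$. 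The same reasoning gives $I_2=C$ for $\mathcal{J}_2$. It then remains only to show $I_3\ge 2$, i.e.\ that no two columns of $\bm{\Phi}_3$ are proportional and $\bm{\Phi}_3$ is not the zero matrix; but this is precisely guaranteed by the second half of Condition \ref{C4star}, which says that for every pair $c\ne c'$ some item $j>2K$ distinguishes the two conditional response vectors, so the corresponding columns of $\bm{\Phi}_3$ differ in a way that (combined with positivity) rules out proportionality. Hence $I_1+I_2+I_3\ge C+C+2=2C+2$, which is Condition \ref{C4NEW}.

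With Condition \ref{C4NEW} verified, Theorem \ref{covariate strict identifiability regLCMs} applies directly and yields strict identifiability of $(\bm{\beta},\bm{\gamma},\bm{\lambda})$, completing the proof.

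\textbf{Main obstacle.} I expect the delicate point to be the precise claim that the $\mathcal{I}_K$-block gives $\bm{\Phi}_t$ full column rank (rather than merely rank $C$ in the ordinary sense): one must check that under the G-DINA parametrization with $\theta_{jrc}>0$, the map from attribute profiles to the $K$-fold product structure of response probabilities is injective and, more strongly, that the resulting $C$ columns are linearly independent — this uses that each single-attribute item genuinely separates $\alpha_k=0$ from $\alpha_k=1$ (a consequence of the monotone link making the main effect $b_{jr1}$ active), and requires a careful tensor/Kronecker-structure argument over the $K$ items. The remainder is then essentially bookkeeping on Kruskal ranks plus an appeal to the already-established reduction and to Theorem \ref{covariate strict identifiability regLCMs}.
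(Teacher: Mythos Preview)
Your approach is correct but takes a genuinely different route from the paper's. The paper's proof is essentially a one-line citation: it invokes Theorem~2 of \citeA{Culpepper2019} (extending \citeA{xu2017} to polytomous responses), which already establishes that Condition~\ref{C4star} suffices for strict identifiability of $(\bm{\eta}^i,\bm{\Theta}^i)$ in the polytomous CDM for every subject $i$, and then applies the reduction in \textit{Steps 2--3} of the proof of Theorem~\ref{local RegCDMs} to obtain identifiability of $(\bm{\beta},\bm{\gamma},\bm{\lambda})$. You instead \emph{re-derive} the CDM identifiability inside the paper's own framework: you exhibit the tripartition $\mathcal{J}_1,\mathcal{J}_2,\mathcal{J}_3$ induced by the two $\mathcal{I}_K$ blocks and $Q^*$, argue that $I_1=I_2=C$ via the tensor (Kronecker) structure of the identity blocks and that $I_3\ge 2$ via the separation clause of \ref{C4star}, so that Condition~\ref{C4NEW} holds and Theorem~\ref{covariate strict identifiability regLCMs} applies. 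This buys self-containment --- the result no longer rests on an external citation --- at the cost of actually carrying out the Kruskal-rank computation (which is, of course, exactly what Culpepper's proof does). The obstacle you flag is real: $I_1=C$ needs each single-attribute item to genuinely distinguish $\alpha_k=0$ from $\alpha_k=1$, an implicit CDM regularity assumption rather than something written into \ref{C4star}.

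One imprecision worth fixing: in your first paragraph you describe the reduction as ``identify $(\bm{\eta}^0,\bm{\Theta}^0)$ at the zero-covariate subject, then invert via Condition~\ref{A3}.'' Taken literally this is not enough --- the zero-covariate parameters recover only the intercepts $\beta_{0c}$ and $\gamma_{jrc}$, not the slopes $\beta_{tc}$ or $\lambda_{tjr}$. What Theorem~\ref{covariate strict identifiability regLCMs} actually does (through Lemma~\ref{psi and tau}) is transfer the Kruskal-rank bound from $\bm{\Phi}$ to each $\bm{\Psi}^i$, conclude $(\bm{\eta}^i,\bm{\Theta}^i)$ is identifiable for \emph{every} $i$, and only then invert the full-rank design matrices $\bm{X},\bm{Z}_j$. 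Your final step ``apply Theorem~\ref{covariate strict identifiability regLCMs}'' is the right move and handles this correctly; just make sure the surrounding prose does not suggest that the zero-covariate subject alone suffices.
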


It has been established that Condition \ref{C4star} itself is a sufficient   condition for the identifiability of general restricted latent class models with binary responses \cite{xu2017}. In addition, \citeA{xu2018JASA} showed that the $Q$-matrix is also identifiable under Condition \ref{C4star}. This condition is further extended to the restricted latent class models with polytomous responses in \citeA{Culpepper2019}. Compared to the previous literature, the major contribution of Proposition \ref{covariate strict identifiability CDMs} is to extend this constraint to the polytomous-response RegCDMs that the $Q$-matrix contains two identity matrices and the conditional response probability $({\theta}_{j0c}, \cdots, \theta_{j(M_j-1)c})^{T}$  is distinct among different latent classes.

In practice, the theoretical results in Theorem \ref{covariate strict identifiability regLCMs} and Proposition \ref{covariate strict identifiability CDMs} may need further adjustments to accommodate the empirical needs. As previously discussed, generic identifiability is commonly used in practice as it guarantees the identifiability of most parameters other than a measure-zero set of parameters \cite{allman2009}. The following theorem and proposition will provide us with an easy way to determine the generic identifiability of RegLCMs and RegCDMs.

 \begin{theorem} [Generic Identifiability for RegLCMs]
 \label{generic RegLCMs}
 For RegLCMs, under Conditions \ref{A1}--\ref{A3} and the following condition, $(\bm{\beta}, \bm{\gamma}, \bm{\lambda})$ are generically identifiable.
 		\begin{enumerate}[label=($C$\arabic*$^{\prime}$),leftmargin=1.5cm]
 		 \setcounter{enumi}{3}
 			\item \label{C4prime} The matrix $\bm{\Phi}$ can be decomposed into $\bm{\Phi}_1$, $\bm{\Phi}_2$ and $\bm{\Phi}_3$ with row dimensions $\kappa_1$, $\kappa_2$ and $\kappa_3$ satisfying $ \min\{C, \kappa_1 \} + \min\{C, \kappa_2 \} + \min\{C, \kappa_3 \} \geq 2C + 2$. 			
 		\end{enumerate}	
 \end{theorem}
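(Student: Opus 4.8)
The plan is to deduce Theorem~\ref{generic RegLCMs} from the strict-identifiability result in Theorem~\ref{covariate strict identifiability regLCMs}: I would show that Condition~\ref{C4prime} forces the stronger Condition~\ref{C4NEW} to hold at every free-parameter value outside a Lebesgue-null set, so that strict identifiability of $(\bm\beta,\bm\gamma,\bm\lambda)$ holds off that null set, which is exactly generic identifiability. The mechanism is the generic-rank phenomenon behind Kruskal's theorem: although the Kruskal rank $I_t$ of a $\kappa_t\times C$ matrix can in principle be as small as $1$, for ``most'' such matrices it attains its largest possible value $m_t:=\min\{C,\kappa_t\}$; hence if $m_1+m_2+m_3\ge 2C+2$ as in Condition~\ref{C4prime}, then generically $I_1+I_2+I_3\ge 2C+2$, which is Condition~\ref{C4NEW}.

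To make this rigorous I would first set up the measure-zero framework. By~\eqref{tau} the entries of $\bm\Phi$ --- and hence of each decomposed block $\bm\Phi_t$, which is the column-wise Kronecker (Khatri--Rao) product $\bigodot_{j\in\mathcal J_t}\bm B_{t,j}$ of the $M_j\times C$ factor matrices $\bm B_{t,j}=(\theta^0_{jrc})_{r,c}$ --- are real-analytic functions of the intercepts $\bm\gamma$ through the logit link, and therefore real-analytic in the full free-parameter vector $(\bm\beta,\bm\gamma,\bm\lambda)\in\mathbb R^{d}$. For a fixed block index $t$ and a fixed set $S'$ of $m_t$ columns, the event ``the columns in $S'$ are linearly dependent'' is the vanishing of $f_{t,S'}$, the sum of squares of all $m_t\times m_t$ minors of the $\kappa_t\times m_t$ submatrix of $\bm\Phi_t$ indexed by $S'$, which is again real-analytic on $\mathbb R^{d}$. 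A real-analytic function on the connected domain $\mathbb R^{d}$ either vanishes identically or has a Lebesgue-null zero set, and there are only finitely many pairs $(t,S')$; so the ``bad set'' $B=\bigcup_{t,S'}\{f_{t,S'}=0\}$ is Lebesgue-null provided no $f_{t,S'}$ vanishes identically. On $\mathbb R^{d}\setminus B$ every block satisfies $I_t=m_t$, so Condition~\ref{C4prime} delivers Condition~\ref{C4NEW} there, and Theorem~\ref{covariate strict identifiability regLCMs} applies.

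The remaining point --- that no $f_{t,S'}$ is identically zero --- is the step I expect to demand the most care, because the generic-rank fact is classically stated for unconstrained matrices and one must check it survives the structured logit parametrization (in particular the sharing of $\bm\lambda$ across classes), i.e., that the model's parameter map does not lie entirely inside the bad locus. The key algebraic input is that a Khatri--Rao product $\bigodot_{j\in\mathcal J_t}\bm B_{t,j}$ of matrices each having at least two rows (true since $M_j\ge2$) achieves the generic Kruskal rank $\min\{C,\prod_{j\in\mathcal J_t}M_j\}=m_t$ --- the lemma underlying the generic identifiability results of \citeA{allman2009} --- which one can see from the fact that the $C$ columns of such a product are points of an irreducible, non-degenerate multi-Segre variety in $\mathbb R^{\kappa_t}$, so any $\min\{C,\kappa_t\}$ of them in general position are linearly independent, or from an explicit witness in which the $c$-th column of $\bm B_{t,j}$ is a renormalized truncated moment vector $(1,t_{jc},t_{jc}^2,\dots)^{T}$ with generic nodes $t_{jc}$, whereupon the minors of $\bm\Phi_t$ factor into nonzero generalized Vandermonde determinants. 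The reason this transfers is that $\bm\Phi$ involves only the unconstrained intercepts $\bm\gamma$, and the logit map sends $\bm\gamma_{j\cdot c}$ onto the whole open simplex, so the general-position factor columns are realized by some finite $\bm\gamma$; hence each $f_{t,S'}\not\equiv0$. Finally, the recovery of $\bm\beta$ from the identified membership probabilities $\eta^i_c$ through the full-column-rank matrix $\bm X$, of $\bm\gamma,\bm\lambda$ from the identified $\theta^i_{jrc}$ through the full-column-rank matrices $\bm Z_j$, and the resolution of class relabeling via the reference-class convention, are all inherited verbatim from the proof of Theorem~\ref{covariate strict identifiability regLCMs} under Conditions~\ref{A1}--\ref{A3}.
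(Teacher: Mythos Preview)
Your argument is correct but takes a different route from the paper's. The paper proves Theorem~\ref{generic RegLCMs} by directly invoking Corollary~\ref{generic identifiability corollary} (the generic-identifiability corollary of \citeA{allman2009}) through Lemma~\ref{psi and tau}: since the decomposed blocks $\bm\Phi_t$ and $\bm\Psi_t^i$ share the same row dimensions $\kappa_t$, Condition~\ref{C4prime} immediately yields Condition~(B3.b), so Lemma~\ref{psi and tau} gives generic identifiability of $(\bm\eta^i,\bm\Theta^i)$, which is then transferred to $(\bm\beta,\bm\gamma,\bm\lambda)$ via Steps~2--3 of the proof of Theorem~\ref{local RegCDMs}. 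You instead \emph{unfold} that black-box corollary: you argue that the Kruskal-rank-deficiency locus is cut out by finitely many real-analytic functions of $\bm\gamma$, exhibit witnesses (via surjectivity of the logit link onto the open simplex and a Vandermonde/Segre construction) to show none of these vanish identically, conclude that Condition~\ref{C4NEW} holds off a Lebesgue-null set, and then apply Theorem~\ref{covariate strict identifiability regLCMs} pointwise. What this buys you is a self-contained proof that does not treat \citeA{allman2009} as a black box and makes explicit why the structured Khatri--Rao parametrization does not obstruct generic full Kruskal rank; what the paper's approach buys is brevity, since all of that work is already packaged in Corollary~\ref{generic identifiability corollary}. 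Note also that your final paragraph on recovering $(\bm\beta,\bm\gamma,\bm\lambda)$ is redundant once you invoke Theorem~\ref{covariate strict identifiability regLCMs}, since that theorem already delivers identifiability of the regression coefficients directly.
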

\begin{remark}
	Under the special case that the number of possible responses to each item are identical, $M_1 = \dots = M_J$, we have a reduced form of Condition \ref{C4prime} in Theorem \ref{generic CDMs}. This finding is based on Corollary 5 and its related discussions from \citeA{allman2009}. They show that for these special cases, the decomposition can be carefully chosen to maximize  $ \min\{C, \kappa_1 \} + \min\{C, \kappa_2 \} + \min\{C, \kappa_3 \}$, which results in a simpler form of identifiability condition.
	
	 Consider RegLCMs with binary responses $M_j = 2$ for $j = 1,\dots, J$, we have $(\bm{\beta}, \bm{\gamma}, \bm{\lambda})$ to be generically identifiable if we replace 	Condition \ref{C4prime} with the condition $J \geq 2\left[\log _{2} C\right\rceil+1$. More generally, for the RegLCMs with $M_j = M$ for $j = 1,\dots, J$, we have $(\bm{\beta}, \bm{\gamma}, \bm{\lambda})$ to be generically identifiable if Condition \ref{C4prime} is replaced with the condition $J \geq 2\left[\log _{M} C\right\rceil+1$. For these special models, the reduced conditions provide researchers with simpler ways to determine the generic identifiability compared with Condition \ref{C4prime} as they only concern the number of items $J$ and the number of latent classes $C$.
	\end{remark}

Compared with the strict identifiability conditions in Theorem \ref{covariate strict identifiability regLCMs}, Theorem \ref{generic RegLCMs} makes it more practical to check the identifiability of RegLCMs as the variables in Condition \ref{C4prime} are row dimensions rather than the Kruskal ranks of the decomposed matrices. But Theorem \ref{generic RegLCMs} does not apply to all latent class models. For instance, the parameter space of restricted latent class models may lie in the nonidentifiable measure-zero set from the parameter space of general latent class models. Therefore, Theorem \ref{generic RegLCMs} does not apply to restricted latent class models with covariates such as RegCDMs. To address this issue, Proposition \ref{generic CDMs} is established to determine the generic identifiability  for RegCDMs with polytomous responses.
%\textcolor{blue}{add a sentence to introduce the next result. (--)                    }

 \begin{proposition} [Generic Identifiability for RegCDMs]
 \label{generic CDMs}
 For RegCDMs with polytomous responses, under Conditions \ref{A1}--\ref{A3} and the following condition,  $(\bm{\beta}, \bm{\gamma}, \bm{\lambda})$ are generically identifiable.

 		\begin{enumerate}[label=($C$\arabic*$^{\prime\prime}$),leftmargin=1.5cm]
 		 	\setcounter{enumi}{3}
 				\item \label{C4prime2} After rows permutation,  $Q$-matrix takes the form $Q= ( Q_1, Q_2, Q^{*})^{T} $ containing one submatrix $Q_{(J-2K) \times K}^{*}$ in which each attribute is required by at least one item, and two submatrices $Q_1$ and $Q_2$ in the following form,
 			\begin{equation}
 			\label{Q_i}
 				Q_i = \left(\begin{array}{cccc}
					1& * & \cdots & * \\  
					*& 1 &  \cdots &* \\
					\vdots & \vdots & \ddots & \vdots \\
					* & * & \cdots & 1 \\
 					\end{array} \right),\quad i =1,2
 			\end{equation}
 			where  $``*"$ indicates the entry is either 1 or 0. 
 		\end{enumerate}
 \end{proposition}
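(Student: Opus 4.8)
I would obtain Proposition~\ref{generic CDMs} as a generic specialization of the strict identifiability Theorem~\ref{covariate strict identifiability regLCMs}. Because Conditions~\ref{A1}--\ref{A3} are structural and hold throughout, and because Theorem~\ref{covariate strict identifiability regLCMs} delivers strict identifiability of $(\bm{\beta},\bm{\gamma},\bm{\lambda})$ the moment Condition~\ref{C4NEW} is also in force, it suffices to show: under Condition~\ref{C4prime2}, Condition~\ref{C4NEW} holds for every parameter value outside a Lebesgue-null subset of the CDM-constrained parameter space. I would use the item partition built into \ref{C4prime2}: let $\mathcal{J}_1$ index the $K$ items forming $Q_1$, $\mathcal{J}_2$ the $K$ items forming $Q_2$, and $\mathcal{J}_3$ the remaining $J-2K$ items forming $Q^{*}$; let $\bm{\Phi}_1,\bm{\Phi}_2,\bm{\Phi}_3$ be the corresponding $\kappa_t\times C$ marginal-probability matrices of the three item groups (the factor matrices in the decomposition of $\bm{\Phi}$ used for Theorem~\ref{covariate strict identifiability regLCMs}), with Kruskal ranks $I_1,I_2,I_3$. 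Since only $\bm{\gamma}$ — in fact only the G-DINA coefficients, as $\bm{\Phi}$ is evaluated at $\bm{z}=\bm{0}$ — enters $\bm{\Phi}$, the task reduces to showing that generically $I_1=I_2=C$ and $I_3\ge2$, whence $I_1+I_2+I_3\ge 2C+2$.

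\textbf{The block $Q^{*}$: $I_3\ge2$ generically.} For latent classes $c\ne c'$ with attribute patterns $\bm{\alpha}\ne\bm{\alpha}'$, pick a coordinate $k$ with $\alpha_k\ne\alpha_k'$; by \ref{C4prime2} some item $j\in\mathcal{J}_3$ has $q_{jk}=1$. With $\gamma_{j0c}=0$, the difference $\log(\theta^{0}_{jrc}/\theta^{0}_{j0c})-\log(\theta^{0}_{jrc'}/\theta^{0}_{j0c'})=\gamma_{jrc}-\gamma_{jrc'}$ is, via \eqref{GDINA}, an affine function of the free coefficients of item $j$ whose coefficient on the main effect of attribute $k$ equals $\alpha_k-\alpha_k'=\pm1\ne0$; hence it is not identically zero, so off a null set $\bm{\theta}^{0}_{jc}\ne\bm{\theta}^{0}_{jc'}$ and columns $c,c'$ of $\bm{\Phi}_3$ differ. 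Each column of $\bm{\Phi}_3$ is a probability vector over the response patterns of $\mathcal{J}_3$ (its entries are products $\prod_{j\in\mathcal{J}_3}\theta^{0}_{jr_jc}$ summing to one), so two columns are proportional only if equal; therefore off a null set no two columns of $\bm{\Phi}_3$ are linearly dependent, i.e.\ $I_3\ge2$. The finite intersection over all pairs $(c,c')$ stays generic.

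\textbf{The blocks $Q_1,Q_2$ --- the main obstacle.} The crux, I expect, is showing $\bm{\Phi}_1$ (and, symmetrically, $\bm{\Phi}_2$) has full column rank $C=2^K$ for generic parameters, given only that $Q_1$ has ones down its diagonal rather than equalling $\mathcal{I}_K$. The entries of $\bm{\Phi}_1$ are real-analytic in the free G-DINA coefficients of the items in $\mathcal{J}_1$, so the bad set $\{I_1<C\}$, namely the common zero set of the finitely many $C\times C$ minors of $\bm{\Phi}_1$, is Lebesgue-null provided at least one of these minors is not identically zero. To establish this I would exhibit one admissible parameter point: for each $j\in\mathcal{J}_1$, zero out every G-DINA effect except the intercept $b_{jr0}$ and the main effect of attribute $j$ (admissible precisely because $q_{jj}=1$). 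Then $\bm{\theta}^{0}_{jc}$ depends on $\bm{\alpha}$ only through $\alpha_j$, taking two length-$M_j$ probability vectors $\bm{u}_j,\bm{w}_j$, which are linearly independent as soon as they are distinct --- something the retained coefficients can ensure. At this point $\bm{\Phi}_1$ equals, up to a column permutation matching $c=\bm{\alpha}^{T}\bm{v}$, the Kronecker product $[\bm{u}_1\mid\bm{w}_1]\otimes\cdots\otimes[\bm{u}_K\mid\bm{w}_K]$, of rank $\prod_{j=1}^{K}\mathrm{rank}[\bm{u}_j\mid\bm{w}_j]=2^K=C$; so a $C\times C$ minor of $\bm{\Phi}_1$ is nonzero there, whence $I_1=C$ generically, and likewise $I_2=C$.

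\textbf{Assembling and caveats.} Combining the three parts, $I_1+I_2+I_3\ge2C+2$ off a Lebesgue-null set, so Condition~\ref{C4NEW} holds generically; Theorem~\ref{covariate strict identifiability regLCMs} then gives strict identifiability of $(\bm{\beta},\bm{\gamma},\bm{\lambda})$ at every such parameter, which is exactly generic identifiability. The points I would be most careful about are: that ``zeroing the extra effects'' genuinely lands in the restricted parameter space, and that genericity is taken relative to the correct CDM parametrization of $\bm{\gamma}$ (which is why one cannot simply invoke the Allman--Kruskal dimension counting of Theorem~\ref{generic RegLCMs}, valid only for unrestricted RegLCMs); and that, because $\bm{\Phi}_3$ may carry far fewer than $C$ linearly independent columns, the bound $2C+2$ must be borne by $I_1=I_2=C$ alone. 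This is the RegCDM analogue --- with genericity inserted at the two $Q_i$ blocks and at the $Q^{*}$ separation step --- of the argument underlying the strict version in Proposition~\ref{covariate strict identifiability CDMs}.
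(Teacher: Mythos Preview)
Your argument is sound, but it follows a genuinely different route from the paper. The paper's proof is two lines: it invokes Proposition~5.1(b) of \citeA{gu2018partial} to obtain that, under Condition~\ref{C4prime2}, $(\bm{\eta}^i,\bm{\Theta}^i)$ are generically identifiable for each subject $i$, and then transfers this to $(\bm{\beta},\bm{\gamma},\bm{\lambda})$ via the bijection in Lemma~\ref{tau and tau 0} together with the full-rank design matrices of Condition~\ref{A3} (i.e., \textit{Steps 2--3} of the proof of Theorem~\ref{local RegCDMs}). In particular, the paper does not go through Theorem~\ref{covariate strict identifiability regLCMs} or any Kruskal-rank computation here; it simply cites the existing CDM result and lifts it to the regression setting.

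Your approach instead re-derives the CDM generic-identifiability step inline: you exhibit a special admissible parameter (zero out every G-DINA effect for item $j\in\mathcal{J}_t$ except the intercept and the main effect of attribute $j$, which is free precisely because $q_{jj}=1$), recognize the resulting $\bm{\Phi}_t$ as a Kronecker product of rank-two $M_j\times 2$ blocks, and use real-analyticity to propagate full column rank to generic parameters; then you handle $\bm{\Phi}_3$ by the usual ``distinct probability vectors are independent'' argument, and finish by feeding the generic Kruskal inequality into Theorem~\ref{covariate strict identifiability regLCMs}. This buys self-containment --- you do not need to import the Gu--Xu result --- and makes explicit why the diagonal-ones structure of $Q_1,Q_2$ (rather than identity blocks) suffices. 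The paper's route, in contrast, is shorter and sidesteps the Kruskal machinery entirely for this proposition, at the price of relying on an external reference. Both are correct; yours is closer in spirit to what underlies \citeA{gu2018partial} itself.
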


Condition \ref{C4prime2} was first proposed by \citeA{gu2018partial} to determine the generic identifiability of CDMs. For RegCDMs, Proposition \ref{generic CDMs} gives more flexible conditions than Proposition \ref{covariate strict identifiability CDMs} as Condition \ref{C4prime2} puts less constraints on the $Q$-matrix than Condition \ref{C4star} does. Condition \ref{C4star} requires the $Q$-matrix to contain two identity submatrices, whereas in the $Q$-matrix form required by \ref{C4prime2}, the two identity matrices are replaced by two matrices as shown in \eqref{Q_i}, which allows more flexibility on the off-diagonal entries. Under this new condition, the parameters may not be strictly identifiable but are identifiable in the generic sense. 
 
%\textcolor{blue}{add a short discussion of necessity first (to point out that though we may not necessary, but not far from there).(--) }

 Proposition \ref{generic CDMs}  provides sufficient conditions to guarantee the generic identifiability of RegCDMs. 
Under certain special cases,
  we can  show that those conditions are also necessary.
 Next, we introduce a particular example where the  conditions in Proposition \ref{generic CDMs} are not only sufficient, but also necessary for the generic identifiability of the parameters $(\bm{\beta}, \bm{\gamma}, \bm{\lambda})$.

\begin{example}

\label{suff nec generic CDMs}
	Consider a special RegCDM with binary responses and two latent attributes, i.e. $K=2$ and $M_j = 2$. Under Conditions \ref{A1}--\ref{A3}, Condition \ref{C4prime2} in Proposition \ref{generic CDMs} is necessary and sufficient for the generic identifiability of  $(\bm{\beta}, \bm{\gamma}, \bm{\lambda})$. For instance, after rows permutation, the $Q$-matrix takes the following form
\begin{equation}
	\label{two attribute Q}
		Q = \left(\begin{array}{cccc}
					1& *  \\  
					*& 1  \\
					1& *  \\  
					*& 1  \\
					\cline{1-3} 
					\multicolumn{2}{c}{Q^{\prime}}\\
 					\end{array} \right),
\end{equation}
where $``*"$ is either zero or one and $Q^{\prime}$ is a matrix with at least one entry to be 1 in each column. 
Proposition 3 in \citeA{gu2018sufficient} shows that Condition \ref{C4prime2} is necessary and sufficient condition for generic identifiability for $Q$-matrix, $\bm{\Theta}$ and $\bm{\eta}$. Hence for RegCDMs, we have  $(\bm{\eta}^i$, $\bm{\Theta}^i)$ identifiable. As for the identifiability of $(\bm{\beta}, \bm{\gamma}, \bm{\lambda})$ in RegCDMs, under Condition \ref{A3} that $\bm{X}$ and $\bm{Z_j}$'s have full column rank, $(\bm{\eta}^i$, $\bm{\Theta}^i)$ are identifiable if and only if  $(\bm{\beta}, \bm{\gamma}, \bm{\lambda})$ are identifiable, which can be seen from \textit{Steps 2--3} of the \textit{Proof of Theorem \ref{local RegCDMs}} in Supplementary Material. Therefore, $(\bm{\beta}, \bm{\gamma}, \bm{\lambda})$ are identifiable for the considered RegCDMs with two attributes if and only if Condition \ref{C4prime2} in Proposition \ref{generic CDMs} holds.
\end{example}

\begin{comment}
Previous argument has shown if $(\bm{\eta}$, $\bm{\theta})$ is identifiable, then $(\bm{\beta}, \bm{\gamma}, \bm{\lambda})$ is identifiable as well. For the reverse direction, consider $(\bm{\epsilon}$, $\bm{\omega})$ as the transformed parameters of $(\bm{\beta}, \bm{\gamma}, \bm{\lambda})$, where
\begin{eqnarray*}
&	 \bm{\gamma_{jrc}} = \bm{\epsilon_{jrc}} - \bm{\lambda_{jr}}\bm{Z}, \nonumber \\
& \bm{\lambda_{jr}} =\bm{Z} ^{-1}(\bm{\epsilon_{jrc}} - \bm{\gamma_{jrc}}), \nonumber \\
& \bm{\beta_{k}} = \bm{X}^{-1}\bm{\omega_c}. \nonumber
\end{eqnarray*}
Given $(\bm{\beta}, \bm{\gamma}, \bm{\lambda})$ is identifiable, $(\bm{\epsilon}$, $\bm{\omega})$ is identifiable by Lemma \ref{tau and tau 0}. This completes the proof for Proposition \ref{suff nec generic CDMs}.

\end{comment}

\section{Data Example}
\label{Data Example}

In this section, we  use a real data set to demonstrate an application of the proposed identifiability conditions in educational assessments. Trends in Mathematics and Science Study (TIMSS) is an international and large-scale assessment to evaluate the mathematics skills and science knowledge of students in different grades. We consider a TIMSS 2007 4th Grade dataset, which was  studied in \citeA{park2014} and is accessible from the R  package ``CDM" \cite{jsscdm, CDM_7.5-15}. The dataset contains $N = 698$ Austrian 4th grade students' binary responses ($M_j = 2$) to $J = 25$ items together with their gender information. The gender  is denoted as a binary variable with $g_i = 1$ for female students and $g_i =0$ for male students. 

We  model the TIMSS 2007 dataset using RegCDMs and study their identifiability. We consider gender $g_i$ as covariates  with $\bm{x}_i = (1, g_i)^T$ and $\bm{z}_{ij} = (g_i)$ for $i = 1,\dots, N$ and $j = 1,\dots, J$, under the assumption that both $\bm{\eta}$ and $\bm{\Theta}$ can be associated with the gender. Following \citeA{park2014}, the test assesses $K= 7$ latent attributes in the domains of $(\alpha_1)$ Whole numbers; $(\alpha_2)$ Fractions and Decimals; $(\alpha_3)$ Number Sentences, Patterns, \& Relationships; $(\alpha_4)$ Lines and Angles; $(\alpha_5)$ Two- and Three-Dimensional Shapes; $(\alpha_6)$ Location and Movement; $(\alpha_7)$ Reading, Interpreting, Organizing, \& Representing. As shown in \citeA{park2014}, the seven latent attributes can be further aggregated into $K^{\prime} = 3$ general domains: $(\alpha_1^{\prime})$ Number; $(\alpha_2^{\prime})$ Geometric Shapes and Measures; $(\alpha_3^{\prime})$ Data Display.

We first show that the RegCDM with $K =7$ attributes is generically identifiable by Proposition \ref{generic CDMs}. As there are $C=2^7 = 128$ latent classes, Condition \ref{A1} holds as $\prod_{j=1}^{J} M_j-1 - C(\sum_{j=1}^{J} M_j -J) - C + 1 = 2^{25} - 2^7\times 25 - 2^7  >0$. Condition \ref{A2} holds as the binary covariates are finite and coefficient parameters are free since we have no constraint on coefficients. Condition \ref{A3} holds as the design matrices  
    \begin{equation}
  \bm{X}=  \bm{Z_j} = \left(  \begin{array}{cc}
1 & g_1\\
1 & g_2\\
\vdots & \vdots\\
1 & g_N 
\end{array}\right) =\left(  \begin{array}{cc}
1 & 0\\
1 & 0\\
\vdots & \vdots\\
1 & 1
\end{array}\right), \quad \text{for } j = 1,\dots, J, \nonumber
 \end{equation}
  have full column rank given the sample has both female and male students. Lastly for Condition \ref{C4prime2}, the $Q$-matrix after rows permutation from \citeA{park2014} is presented in Table \ref{q7matrix}. The $Q$-matrix implies that Condition \ref{C4prime2} holds as the matrices $Q_1$ and $Q_2$ have diagonal entries to be ones and each column of the sub-matrix $Q^*$ contains the value one for at least once.
 % as \ref{C4star} in Proposition \ref{covariate strict identifiability CDMs} fails to hold.   }
\begin{table}[h!]
\centering
\begin{tabular}{c|c|ccccccc}
\hline 
& Item No.  & $\alpha_1$  & $\alpha_2$ & $\alpha_3$& $\alpha_4$& $\alpha_5$& $\alpha_6$& $\alpha_7$ \\
   \hline
\multirow{ 7}{*}{$Q_1$}& 1 & 1  & 0    & 0   & 0   & 0   & 0   & 0 \\
&3 & 1  & 1    & 0   & 0   & 0   & 0   & 0  \\

&5  & 1 & 0 & 1 & 0 & 0 & 0 & 0  \\
&10 &  0 & 0 & 0 & 1 & 1 & 0 & 0 \\
&9 & 0  & 0    & 0   & 0   & 1   & 0   & 0 \\
&6 & 0  & 0    & 0   & 0   & 1   & 1 & 0 \\

&12 &  1 & 0 & 0 & 0 & 0 & 0 & 1 \\
\hdashline
\multirow{ 7}{*}{$ Q_2$}& 15 & 1  & 0    & 0   & 0   & 0   & 0   & 0 \\
&4 & 1  & 1    & 0   & 0   & 0   & 0   & 0  \\

&17  & 1 & 0 & 1 & 0 & 0 & 0 & 0  \\
& 11 &  1 & 0 & 0 & 1 & 0 & 0 & 0 \\
& 24 & 0  & 0    & 0   & 0   & 1   & 0   & 0 \\
&22 & 0  & 0    & 0   & 0   & 1   & 1 & 0 \\

& 13 &  1 & 0 & 0 & 0 & 0 & 0 & 1 \\
\hdashline
\multirow{ 7}{*}{$Q^*$} 
&2  & 0 & 1    & 0   & 0   & 0   & 0   & 0 \\
&8  & 1  & 0    & 0   & 0   & 1   & 0   & 0\\
&7 &  0 & 0 & 0 & 1 & 1 & 1 & 0 \\
&14 & 1  & 1    & 0   & 0   & 0   & 0   & 1 \\
& 16, 23 & 1  & 0    & 0   & 0   & 0   & 0   & 0 \\
&18, 20  & 1 & 0 & 1 & 0 & 0 & 0 & 0  \\
& 19, 25 &  1 & 0 & 0 & 0 & 0 & 0 & 1 \\
& 21  & 1 & 0 & 1 & 0 & 0 & 0 & 0 \\
\hline
\end{tabular}
\caption{The $Q$-matrix for TIMSS 2007 Data at $K = 7$.  }
\label{q7matrix}
\end{table}
According to Proposition \ref{generic CDMs}, the RegCDM is generically identifiable. However, the $Q$-matrix is not complete, so the RegCDM is not strictly identifiable.

We next show that the RegCDM  with $K^{\prime} =3$ attributes is generically identifiable as well by Proposition \ref{generic CDMs}. As there are $C=2^3 = 8$ latent classes, Condition \ref{A1} holds because $\prod_{j=1}^{J} M_j-1 - C(\sum_{j=1}^{J} M_j -J) - C + 1 = 2^{25} - 2^3\times 25 - 2^3  >0$. As the items, the students' responses, and the covariates are unchanged, we have Conditions \ref{A2}--\ref{A3} hold by the same arguments as in the RegCDM with $K= 7$. In assessing the three general attributes, the $Q$-matrix used in \citeA{park2014} is given in Table \ref{q3matrix} after rows permutation.  This $Q$-matrix contains $Q_1$ and $Q_2$ with diagonal entries to be ones and the sub-matrix $Q^*$ with each attribute column containing the value one for at least one entry. Therefore, Condition \ref{C4prime2} holds and Proposition \ref{generic CDMs} shows the RegCDM  with $K^{\prime} = 3$ is generically identifiable. However, the $Q$-matrix does not contain an identity matrix as the $\alpha_3^{\prime}$ is not singularly required by any item. So the RegCDM with $K^{\prime} = 3$ is not strictly identifiable.

\begin{table}[h!]
\centering
\begin{tabular}{c|c|ccc}
\hline 
& Item No.  & $\alpha_1^{\prime}$  & $\alpha_2^{\prime}$ & $\alpha_3^{\prime}$ \\
   \hline
\multirow{ 3}{*}{$Q_1$}& 1 & 1  & 0    & 0   \\
&6 & 0  & 1    & 0    \\
&12 &  1 & 0 & 1  \\
\hdashline
\multirow{ 3}{*}{$ Q_2$}& 2 & 1  & 0    & 0   \\
& 7 & 0  & 1    & 0    \\
&13 &  1 & 0 & 1  \\
\hdashline
\multirow{ 4}{*}{$Q^*$} 
& 3--5, 15--18, 21, 23 & 1  & 0    & 0   \\
&9, 10, 22, 24 & 0  & 1    & 0    \\
&14, 19, 20, 25 &  1 & 0 & 1  \\
& 8, 11 & 1 & 1 & 0 \\
\hline
\end{tabular}
\caption{The $Q$-matrix for TIMSS 2007 Data at $K = 3$.  }
\label{q3matrix}
\end{table}

%\textcolor{blue}{Overall, when gender is involved in the RegCDMs as a covariate, the model parameters are generically identifiable. With the parameters being generically identifiable, the subsequent parameter estimation satisfies the theoretical pre-requisites.  }

\section{Discussion}
\label{Conclusion}
This paper studies latent class models with covariates, in particular, RegLCMs. Under the setup of RegLCMs and its special family RegCDMs, we focus on the identifiability conditions for the coefficient parameters of the covariates. We show that \citeA{huang2004} presented necessary but not sufficient conditions for the local identifiability of RegLCMs. Then we establish conditions for the local and global identifiability of RegLCMs and RegCDMs.

The classical and fundamental method for local identifiability is based on Goodman's results, which is to ensure the full column rank of the Jacobian matrix formed by the derivatives of general response probabilities with respect to parameters. We propose sufficient and practical conditions based on \citeA{huang2004} to replace the previous linear independence condition on the marginal probability matrix with the linear independence condition concerning three decomposed probability matrices. Noticing the empirical convenience of the generic identifiability, we present specific conditions to ensure the generic identifiability as well. The conditions for generic identifiability involve more accessible variables from decomposed submatrices. In addition to the global identifiability of general RegLCMs, the conditions for the global identifiability of RegCDMs   are dependent on the $Q$-matrix, and these conditions are extended from the binary-response CDMs to the polytomous-response CDMs.

Regarding the consistency of estimation, \citeA{gu2018partial} proved that for  general restricted latent class models, the latent class membership probability and conditional response probability can be consistently estimated with maximum likelihood estimators. The estimation consistency is retained for the parameters in RegLCMs because the parameters are linearly related with the log-odds and the design matrices of covariates have full column ranks. The proposed conditions are sufficient and practical, but may not be necessary in strict identifiability cases. For generic identifiability, we discuss the sufficient and necessary conditions for the binary-response CDMs with binary attributes in Example \ref{suff nec generic CDMs}, except which the necessary side of identifiability conditions is still under research. For future works, we plan to investigate the sufficient and necessary conditions for the identifiability of latent class models with covariates.

\section*{Acknowledgments}
The authors are grateful to the Editor-in-Chief Professor Matthias von Davier, an Associate Editor, and a referee for their valuable comments and suggestions. This research is partially supported by NSF CAREER SES-1846747 and Institute of Education Sciences R305D200015.

\bibliographystyle{apacite}
\bibliography{ref}

\newpage

\begin{center}
  \textbf{\large Supplemental Material to ``Identifiability of Latent Class Models with Covariates"}\\[.5cm]
    \end{center}
    
This supplementary material contains two sections. Section \ref{Proofs of Propositions and Theorems} provides the proofs of propositions and theorems from Section \ref{Necessity and Sufficiency} and Section \ref{proposed identifiability conditions} of the main article.  Section \ref{Proofs of Lemmas} gives the proofs of lemmas introduced in Section \ref{Proofs of Propositions and Theorems}.

\appendix
%\appendixpage
\addappheadtotoc

\label{proof appendix}

\section{Proofs of Propositions and Theorems}
\label{Proofs of Propositions and Theorems}
In this section, we first introduce a lemma motivated from Proposition 3 in \citeA{huang2004}, which is an important tool in later proofs to associate the identifiability of parameters $(\bm{\beta}, \bm{\gamma}, \bm{\lambda})$ with the identifiability of $(\bm{\eta}^i, \bm{\Theta}^i)=\{\eta_c^i, \theta_{jrc}^i:  j = 1, \dots,J, r = 0,\dots, M_j -1, c = 0,\dots, C-1 \}$, for $i = 1, \ldots, N$. 

%For any given covariates $(\bm{x}_0,\bm{z}_0)$ where $\bm{x}_0 = (1, x_{1}, \cdots, x_{p})_{(p+1)\times 1}^{T}$  and $\bm{z} = (\bm{z}_{1}, \cdots, \bm{z}_{J})_{J\times q}^{T}$ with $\bm{z}_{j} = ( z_{j1}, \cdots, z_{jq})_{q\times1}^{T}$, we denote $(\bm{\eta}^0, \bm{\Theta}^0) = \{\eta_c^0, \theta_{jrc}^0: j = 1,\ldots, J, r = 0, \dots, M_j -1, c = 0, \ldots, C-1  \}$ ***where $\eta_c = P(L = c \mid \bm{x})$ and   $\theta_{jrc} = P(R_{j} = r \mid L = c, \bm{x}, \bm{z} ) $. *** for notation simplicity
%This lemma serves as a strong support for most proofs in this section.

\begin{lemma}
\label{tau and tau 0}
For any subject $i = 1, \ldots, N$, we define transformed variables $(\bm{\epsilon}^i,\bm{\omega}^i) =\{\epsilon_c^i, \omega_{jrc}^i :  j = 1,\ldots, J, r = 0, \dots, M_j -1, c = 0, \ldots, C-1  \}$ such that $(\bm{\eta}^i, \bm{\Theta}^i)$ and $(\bm{\epsilon}^i$, $\bm{\omega}^i) $ are related through the following equations, %***add i in all parameters
%where \begin{eqnarray}
% \epsilon_c &=& \bm{x}_i^{T}\bm{\beta}_c = \beta_{0c} + \beta_{1c}x_{i1} + \cdots + \beta_{pc}x_{ip}. \nonumber 
%\end{eqnarray}
\begin{eqnarray}
\eta_c^i = \frac{\exp(\epsilon_c^i)}{1 + \sum_{s=1}^{C-1}\exp(\epsilon_s^i) }, \quad  c &=& 0, \ldots, C-1; \nonumber \\
	\theta_{jrc}^i = \frac {\exp(\omega_{jrc}^i)}{ 1 + \sum_{s = 1}^{M_j - 1} \exp(\omega_{jsc}^i)}, \quad  j &=&1, \ldots, J;\nonumber \\
	r&=& 0, \ldots,M_j-1;\nonumber \\
	c&=& 0, \ldots,C-1.	 \nonumber
\end{eqnarray}
%where
%\begin{equation}
 	%\omega_{jrc} = {\gamma_{jrc}} + \bm{z}_{ij}^{T} \bm{\lambda}_{jr} = \gamma_{jrc} + \lambda_{1jr}z_{ij1} + \cdots + \lambda_{qjr}z_{ijq}. \nonumber 
 	%\end{equation}
Then $(\bm{\eta}^i,\bm{\Theta}^i)$ are identifiable if and only if $(\bm{\epsilon}^i$, $\bm{\omega}^i)$ are identifiable.
\end{lemma}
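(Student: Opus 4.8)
The plan is to observe that the stated relations define a \emph{fixed} bijection between the parameter space of $(\bm{\epsilon}^i,\bm{\omega}^i)$ and that of $(\bm{\eta}^i,\bm{\Theta}^i)$ --- a reparametrization that depends neither on the data nor on any other unknowns --- and then to note that identifiability is automatically invariant under such a reparametrization. To turn the map into a genuine bijection rather than a mere surjection, I would first impose the usual reference-category conventions $\epsilon_0^i = 0$ and $\omega_{j0c}^i = 0$ for all $j,c$, so that $(\bm{\epsilon}^i,\bm{\omega}^i)$ ranges over $\mathbb{R}^{(C-1) + C\sum_j (M_j-1)}$, matching the number of free coordinates in $(\bm{\eta}^i,\bm{\Theta}^i)$, which is $C(\sum_j M_j - J) + C - 1$.

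First I would write both directions of the map explicitly. The forward direction $g:(\bm{\epsilon}^i,\bm{\omega}^i)\mapsto(\bm{\eta}^i,\bm{\Theta}^i)$ is the multinomial-logit (softmax) transform displayed in the statement; its image is exactly the set of coordinatewise strictly positive probability vectors, i.e.\ $\eta_c^i>0$ with $\sum_c\eta_c^i=1$ and $\theta_{jrc}^i>0$ with $\sum_r\theta_{jrc}^i=1$. On that set the inverse $g^{-1}$ is given by the log-ratios $\epsilon_c^i=\log(\eta_c^i/\eta_0^i)$ and $\omega_{jrc}^i=\log(\theta_{jrc}^i/\theta_{j0c}^i)$, which are well defined precisely because every coordinate is positive. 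A routine substitution checks $g^{-1}\circ g=\mathrm{id}$ and $g\circ g^{-1}=\mathrm{id}$; this is the only computation, and it is elementary.

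With the bijection in hand the equivalence is immediate. The distribution of $\bm{R}_i$ under either parametrization is the same function of $(\bm{\eta}^i,\bm{\Theta}^i)$, namely $P(\bm{R}_i\mid\bm{\eta}^i,\bm{\Theta}^i)=\sum_{c}\eta_c^i\prod_{j}\theta_{jr_jc}^i$; equivalently, the likelihood in the $(\bm{\epsilon}^i,\bm{\omega}^i)$-coordinates factors through $g$. Hence, if $(\bm{\eta}^i,\bm{\Theta}^i)$ is identifiable and two values $(\bm{\epsilon}^i,\bm{\omega}^i)$ and $(\tilde{\bm{\epsilon}}^i,\tilde{\bm{\omega}}^i)$ induce the same distribution of $\bm{R}_i$, then their images under $g$ induce the same distribution, so they coincide, and applying $g^{-1}$ gives $(\bm{\epsilon}^i,\bm{\omega}^i)=(\tilde{\bm{\epsilon}}^i,\tilde{\bm{\omega}}^i)$; thus $(\bm{\epsilon}^i,\bm{\omega}^i)$ is identifiable. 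The converse is symmetric: if $(\bm{\epsilon}^i,\bm{\omega}^i)$ is identifiable and two values of $(\bm{\eta}^i,\bm{\Theta}^i)$ induce the same distribution, pull them back through $g^{-1}$, invoke identifiability in the $(\bm{\epsilon}^i,\bm{\omega}^i)$-coordinates, and push forward through $g$.

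There is no serious obstacle here; the only point requiring care is the bookkeeping that makes $g$ a bijection and not merely onto --- namely the reference-category constraints and the observation that $g$ maps onto the \emph{open} simplices, so that the logarithms defining $g^{-1}$ are always legitimate. This argument is essentially the one behind Proposition 3 of \citeA{huang2004}, recast so that it applies coordinatewise for each subject $i$ and can later be combined, via Condition \ref{A3}, with the linear relation between $(\bm{\epsilon}^i,\bm{\omega}^i)$ and the regression coefficients $(\bm{\beta},\bm{\gamma},\bm{\lambda})$.
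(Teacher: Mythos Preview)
Your proposal is correct and follows essentially the same route as the paper: establish that the softmax map between $(\bm{\epsilon}^i,\bm{\omega}^i)$ and $(\bm{\eta}^i,\bm{\Theta}^i)$ is a bijection, then note that identifiability transfers through any fixed bijective reparametrization. The paper carries out the bijection check by a direct algebraic argument (assuming $\eta_c=\eta_c'$ for all $c$, showing the differences $\epsilon_c-\epsilon_c'$ share a common value $\delta$, and then forcing $\delta=0$), whereas you exhibit the inverse explicitly via log-ratios after imposing the reference constraints $\epsilon_0^i=0$, $\omega_{j0c}^i=0$; these are two presentations of the same elementary fact.
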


The proof of Lemma \ref{tau and tau 0} is presented in Section B.

\begin{proof}[Proof of Proposition \ref{necessary c3}]
 	We first prove the second part of Proposition \ref{necessary c3} that Condition \ref{C3} is necessary for the identifiability of RegLCMs without covariates under Conditions \ref{A1} and \ref{C2}.  
 	%As explained in Section \ref{Models and Existing Works}, \ref{A1} requires the number of items to exceed the number of free parameters. And \ref{C2} guarantees that the latent class membership probabilities and conditional response probabilities are all positive. \ref{A1} and \ref{C2} hold for all identifiable RegLCMs and thus are necessary conditions.
 	 It is equivalent to show that if $\bm{\psi_0}, \cdots, \bm{\psi_{C-1}}$ are not linearly independent, $(\bm{\eta}, \bm{\Theta})$ are not identifiable. 
 	We prove it by the method of contradiction and assume the contrary that $\bm{\eta}$ are identifiable. Recall that the definitions in Section \ref{Models and Existing Works}, $ \bm{\eta} = (\eta_0, \cdots, \eta_{C-1})^{T}$ denotes the latent class membership probability, where $\eta_c = P(L = c)$ for $c = 0,\cdots, C-1$. And  $\bm{\Psi} = \left(\bm{\psi}_0,\cdots,\bm{\psi}_{C-1}\right)$ denotes the marginal probability matrix,
where each entry ${\psi}_{\bm{r}c}$ in $\bm{\psi}_c$ corresponding to a response pattern $\bm{r} \in \mathcal{S}^{\prime}$ is written as
\begin{equation}
{\psi}_{\bm{r}c} = P(\bm{R} = \bm{r} \mid L = c) = \prod_{j=1}^{J} \theta_{jr_jc}, \quad  c = 0,\cdots,C-1. \nonumber
\end{equation}
Based on the above definitions, we write the response probability vector as
\begin{equation}
 \left[P(\bm{R} = \bm{r}) : \bm{r} \in \mathcal{S}^{\prime}\right]^T = \Psi \bm{\cdot} \bm{\eta} . \label{response prob vector}
\end{equation}
As we assume $\bm{\eta}$ is identifiable, there exist no $\bm{\eta^{\prime}} \neq \bm{\eta}$ such that $P(\bm{R} = \bm{r} \mid \Psi, \bm{\eta}) = P(\bm{R} = \bm{r} \mid \Psi, \bm{\eta^{\prime}})$. According to \eqref{response prob vector}, $P(\bm{R} = \bm{r} \mid \Psi, \bm{\eta}) = P(\bm{R} = \bm{r} \mid \Psi, \bm{\eta^{\prime}})$ implies $\Psi \bm{\cdot} \bm{\eta} = \Psi \bm{\cdot} \bm{\eta^{\prime}}$.
%, which should imply $\bm{\eta^{\prime}} = \bm{\eta}$ under an identifiable model according to the definition of identifiability. So
However, under the condition that $\bm{\psi_0}, \cdots, \bm{\psi_{C-1}}$ are not linearly independent, there could exist $\bm{\eta^{\prime}}\neq \bm{\eta}$ such that $\Psi \bm{\cdot} (\bm{\eta} -\bm{\eta^{\prime}}) = \bm{0}$, and by the contradiction, $\bm{\eta}$ is not identifiable. 

 Next, we prove the first part of Proposition \ref{necessary c3}, the necessity of Condition \ref{A4} for the identifiability of RegLCMs under Conditions \ref{A1}--\ref{A3}. That is, if $\bm{\phi_0}, \cdots, \bm{\phi_{C-1}}$ are not linearly independent, then $(\bm{\beta}, \bm{\gamma}, \bm{\lambda})$ are not identifiable. This proof includes the following three steps.
 
 \textit{Step 1}: we prove if $\bm{\phi_0}, \cdots, \bm{\phi_{C-1}}$ are not linearly independent, then $\bm{\psi_0}^i, \cdots, \bm{\psi_{C-1}}^i $ are not linearly independent for $i = 1, \dots, N$, where each entry in $\bm{\psi_c}^i$ corresponds to a response pattern $\bm{r}  = (r_{1}, \cdots, r_{J}) \in \mathcal{S}^{\prime}$  and is defined as
 \begin{equation}
 	\bm{\psi}_{\bm{r}c}^i= P(\bm{R}_i = \bm{r} \mid L_i = c, \bm{x}_i, \bm{z}_i) =  \prod_{j=1}^{J} \frac {\text{exp}{(\gamma_{jr_jc} + \lambda_{1jr_j}z_{ij1} + \cdots + \lambda_{qjr_j}z_{ijq}})}{ 1 + \sum_{s = 1}^{M_j - 1} \text{exp}(\gamma_{jsc} + \lambda_{1js}z_{ij1} + \cdots + \lambda_{qjs} z_{ijq})}, \nonumber
 \end{equation} 
 for $c = 0, \dots, C-1$. Equivalently, we need to prove if there exists subject $i$ such that $\bm{\psi_0}^i, \cdots, \bm{\psi_{C-1}}^i $ are linearly independent, then $\bm{\phi_0}, \cdots, \bm{\phi_{C-1}}$ are linearly independent. We use similar techniques as in the \textit{Proof of Proposition 2} in \citeauthor{huang2004} \citeyear{huang2004}. First,                                                                                                                                                                                                                                                                                                                                                                                                                                                                                                             we associate the linear combinations of 
 $\bm{\phi}_c$'s  with $\bm{\psi}_c$'s as follows.  For any linear combination of $\bm{\phi}_c$'s with coefficients $a_c$'s, there exist $b_c$'s and $\bm{Y}^i$ such that the following equation holds,
   \begin{equation}
 	\sum_{c=0}^{C-1} a_c\bm{\phi}_c = \left(\sum_{c=0}^{C-1}b_c\bm{\psi}_c^i \right)\odot \bm{Y}^i, \label{li psi phi}
 \end{equation} 
 where $\odot$ denotes the element-wise multiplication and
 \begin{eqnarray}
 	 \bm{Y}^i  &=& \left(  \begin{matrix}
\prod_{j =1}^J {\frac{1}{\text{exp}{(\lambda_{1jr_j}z_{ij1} + \cdots + \lambda_{qjr_j}z_{ijq}})}} : \bm{r} =(r_1,\dots,r_J) \in \mathcal{S}^{\prime}
\end{matrix}\right)_{S \times 1}^T, \nonumber  \\
b_c &=&a_c \prod_{j=1}^{J} \frac{ 1 + \sum\limits_{s = 1}^{M_j - 1} \exp (\gamma_{jsc} + \lambda_{1js}z_{ij1} + \cdots + \lambda_{qjs}z_{ijq})}{1 + \sum\limits_{s = 1}^{M_j-1 } e^{\gamma_{jsc}}}. \label{bc ac}
 \end{eqnarray}
 To show  $\bm{\phi}_c$'s  are linearly independent, we need to show that  $\sum_{c=0}^{C-1} a_c\bm{\phi}_c = \bm{0}$ implies $a_0 = \cdots = a_{C-1} = 0$. Based on \eqref{li psi phi}, we have $\sum_{c=0}^{C-1} a_c\bm{\phi}_c = \bm{0}$ implies  $\sum_{c=0}^{C-1}b_c\bm{\psi}_c^i = \bm{0}$. Under the condition that $\bm{\psi}_0^i, \dots, \bm{\psi}_{C-1}^i$ are linearly independent, the equation
 \begin{equation}
 \label{psi li}
 \sum_{c=0}^{C-1} b_c\bm{\psi}_c^i = b_0\bm{\psi}_0^i + \cdots+b_{C-1}\bm{\psi}_{C-1}^i = \bm{0} 
 \end{equation}
 implies $b_0 = \cdots = b_{C-1} = 0$. And by \eqref{bc ac}, we have $a_0 = \cdots= a_{C-1} = 0$. Hence, $\bm{\phi}_0, \dots, \bm{\phi}_{C-1}$ are linearly independent when $\bm{\psi}_0^i, \dots, \bm{\psi}_{C-1}^i$ are linearly independent and we complete the proof for \textit{Step 1}.

\textit{Step 2}: We next introduce parameters $\epsilon_c^i$'s and $\omega_{jrc}^i$'s and show that they are not identifiable when $\bm{\psi_0}^i, \cdots, \bm{\psi_{C-1}}^i $ are not linearly independent. By the similar arguments in proving the necessity of Condition \ref{C3}, $(\bm{\eta}^i, \bm{\Theta}^i)$ are not identifiable when $\bm{\psi_0}^i, \cdots, \bm{\psi_{C-1}}^i $ are not linearly independent for any subject $i = 1,\dots, N$. Recall in RegLCMs, $(\bm{\eta}^i, \bm{\Theta}^i)$  are functionally dependent on the linear functions $\bm{x}_i^{T} \bm{\beta}$ and $\bm{\gamma}_{jc} + \bm{z}_{ij}^{T} \bm{\lambda}_j$, respectively, following the definitions of $(\bm{\eta}^i, \bm{\Theta}^i)$ and $(\bm{\beta}, \bm{\gamma}, \bm{\lambda})$ from \eqref{eta} and \eqref{theta} in main article. Next, we let
\begin{comment}
	Here $\bm{\beta} = (\bm{\beta}_{c}; c=0,\ldots, C-1)_{(p+1)\times C}$ with $\bm{\beta}_c = (\beta_{lc}; l=0,\ldots,p)^{T}_{(p+1)\times 1}$, $\bm{\gamma}_{jc} = (\gamma_{jrc}; r=0,\ldots, {M_j}-1)_{1 \times  M_j},$ and $\bm{\lambda}_j = (\bm{\lambda}_{jr}; r=0,\ldots, {M_j}-1)_{q\times M_j}$ with $\bm{\lambda}_{jr} = (\lambda_{ljr}; l=1,\ldots,q)^{T}_{q\times 1}$. And we express $\eta_c^i$ and $\theta_{jrc}^i$ as 	 
\begin{equation}
     {\eta_c^i } = \frac{\text{exp}(\beta_{0c} + \beta_{1c}x_{1} + \cdots + \beta_{pc}x_{p})}{1 + \sum_{l=1}^{C-1}\text{exp}(\beta_{0l} + \beta_{1l}x_{1} + \cdots + \beta_{pl}x_{p}) }; \nonumber
\end{equation}
for $i = 1, \ldots, N$, $c = 0,\ldots, C-1$. And
\begin{equation}
		\theta_{jrc}^i = \frac {\text{exp}{(\gamma_{jrc} + \lambda_{1jr}z_{j1} + \cdots + \lambda_{qjr}z_{jq}})}{ 1 + \sum_{s = 1}^{M_j - 1} \text{exp}(\gamma_{jsc} + \lambda_{1js}z_{j1} + \cdots + \lambda_{qjs} z_{jq})}. \nonumber
\end{equation}
$\text{for}\ i = 1, \ldots, N$, $j =1, \dots, J$, $r = 0, \cdots, {M_j-1}$ and $c = 0,\dots, C-1$.   
  	If covariates $\bm{x}_i$ are fixed, for any $c = 0, \ldots, C-1$, we write
  \begin{equation}
	\eta_c^i = \frac{\exp(\epsilon_c^i)}{1 + \sum_{s=1}^{C-1}\exp(\epsilon_s^i) }, \nonumber 
\end{equation}where
\end{comment}
 \begin{equation}
 \epsilon_c^i = \bm{x_i}^{T}\bm{\beta}_c = \beta_{0c} + \beta_{1c}x_{i1} + \cdots + \beta_{pc}x_{ip}. \nonumber 
\end{equation}
for $i = 1, \ldots, N$, $c = 0, \dots, C-1$. And
\begin{equation}
 	\omega_{jrc}^i = {\gamma_{jrc}} + \bm{z}_{ij}^{T} \bm{\lambda}_{jr} = \gamma_{jrc} + \lambda_{1jr}z_{ij1} + \cdots + \lambda_{qjr}z_{ijq}. \nonumber 
 	\end{equation}
 	$\text{for}\ i = 1, \ldots, N$, $j =1, \dots, J$, $r = 0, \cdots, {M_j-1}$ and $c = 0,\dots, C-1$. Then according to Lemma \ref{tau and tau 0}, $\epsilon_c^i$'s and $\omega_{jrc}^i$'s are not identifiable when $(\bm{\eta}^i, \bm{\Theta}^i)$ are not identifiable. Hence, $\epsilon_c^i$'s and $\omega_{jrc}^i$'s are not identifiable when $\bm{\psi_0}^i, \cdots, \bm{\psi_{C-1}}^i $ are not linearly independent and we complete the proof for \textit{Step 2}.
 	 	
 	\textit{Step 3}: Lastly, we prove that $(\bm{\beta}, \bm{\gamma}, \bm{\lambda})$  are not identifiable when $\epsilon_c^i$'s and $\omega_{jrc}^i$'s are not identifiable by the method of contradiction. Assume to the contrary that $\bm{\beta}$ is identifiable given $\epsilon_c^i$'s and $\omega_{jrc}^i$'s are not identifiable. By the definition of identifiability,  $P(\bm{R} \mid \bm{\beta}^{*}, \bm{\gamma}, \bm{\lambda} ) = P(\bm{R} \mid \bm{\beta}^{\prime},  \bm{\gamma}, \bm{\lambda}) $ implies that $\bm{\beta}^{*} = \bm{\beta}^{\prime}$. Because $\bm{X}$ has full column rank and according to the definition of $\bm{\epsilon}$ that
  \begin{equation}
\bm{\epsilon} = \left(  \begin{array}{c}
 \bm{\epsilon}^{1} \\
\vdots \\
\bm{\epsilon}^{N}
\end{array}\right)
= \left(  \begin{array}{cccc}
1 & x_{11} &\cdots& x_{1p}\\
\vdots & \vdots & \ddots & \vdots\\
1 & x_{N1} &\cdots & x_{Np}
\end{array}\right) \left( \begin{array}{ccc}
\beta_{00} & \cdots & \beta_{0(C-1)}\\
\vdots & \ddots &  \vdots \\
\beta_{p0} & \cdots  & \beta_{p(C-1)}
\end{array}\right) = \bm{X}\bm{\beta}, \nonumber
\end{equation}
we have $\bm{\epsilon}^* = \bm{X}\bm{\beta}^*$ equivalent to $\bm{\epsilon}^{\prime} = \bm{X}\bm{\beta}^{\prime}$. So for all subject $i$, $P(\bm{R}_i \mid \bm{\epsilon}^{i*}, \bm{\gamma}, \bm{\lambda} ) = P(\bm{R}_i \mid \bm{\epsilon}^{i\prime},  \bm{\gamma}, \bm{\lambda})$ would result in $\bm{\epsilon}^{i*} = \bm{\epsilon}^{i\prime}$, which contradicts the non-identifiability of ${\epsilon}_c^i$'s. Therefore $\bm{\beta}$ is not identifiable. Using similar techniques, we can prove $\bm{\gamma}, \bm{\lambda}$ are not identifiable. 

Combining \textit{Steps 1--3}, we prove the first part of Proposition \ref{necessary c3}, and thus complete the proof of Proposition \ref{necessary c3}.
 \end{proof}

\begin{proof}[Proof of Proposition \ref{non identifiability but full column rank}]
First, we show that for polytomous-response CDMs or RegCDMs, the parameters are not generically identifiable under Condition \ref{P1} that some attribute is required by only one item. This is motivated from the proof of Theorem 4.4 (a) in \citeauthor{gu2018partial} \citeyear{gu2018partial}, where they showed that the binary-response CDMs are not generically identifiable if some attribute is required by only one item. Consider the polytomous-response CDMs and let the $Q$-matrix to be
\[
 			Q = \left(\begin{array}{cccc}
					1& \bm{u} \\  
					\bm{0} & Q^{*} \\
 					\end{array} \right).
 			\]	
This $Q$-matrix implies that $\alpha_1$ is required by the first item only. For any $(\bm{\eta}, \bm{\Theta})$, we can construct  $(\bar{\bm{\eta}}, \bar{\bm{\Theta}}) \neq (\bm{\eta}, \bm{\Theta}) $ such that $P(\bm{R} = \bm{r} \mid  \bm{\eta} , \bm{\Theta}) = P(\bm{R} = \bm{r} \mid \bar{\bm{\eta}}, \bar{\bm{\Theta}})$, which shows that $(\bm{\eta}, \bm{\Theta})$ are not identifiable. To better illustrate the idea, we next use $\bm{\alpha}$ to replace $c$ in all parameter subscripts, i.e. $\eta_{\bm{\alpha}} = \eta_c $ and $\theta_{jr\bm{\alpha}} = \theta_{jrc}$ given $\bm{\alpha}^{T} \bm{v} = c$. When $j \neq 1$, we let $\eta_c = \bar{\eta}_c$, $\theta_{jrc} = \bar{\theta}_{jrc}$ for $r = 0,\ldots, M_j -1$ and $c = 0, \ldots, C-1$.  When $j = 1$, we denote $\bm{\alpha}^{\prime} = ({\alpha}_{2}, \cdots, \alpha_K) \in \{0,1\}^{K-1}$ and for all $r_1 = 0, \dots, M_1 -1$, we let $\bar{\theta}_{1r_1(0,\bm{\alpha}^{\prime})} = {\theta}_{1r_1(0,\bm{\alpha}^{\prime})}$, and 
\begin{equation}
	\bar{\theta}_{1r_1(1,\bm{\alpha}^{\prime})} = \frac{1}{E} {\theta}_{1r_1(1,\bm{\alpha}^{\prime})} + (1-\frac{1}{E}) {\theta}_{1r_1(0,\bm{\alpha}^{\prime})}, \nonumber
\end{equation}
where $E$ is a constant in a small neighborhood of $1$ and $E \neq 1$. So we have $\bar{\theta}_{1r_1(1,\bm{\alpha}^{\prime})} \neq {\theta}_{1r_1(1,\bm{\alpha}^{\prime})}$. We also let
\begin{eqnarray}
 \bar{\eta}_{(0,\bm{\alpha}^{\prime})} &=& {\eta}_{(0,\bm{\alpha}^{\prime})} + (1-E)\cdot {\eta}_{(1,\bm{\alpha}^{\prime})},\nonumber \\
\bar{\eta}_{(1,\bm{\alpha}^{\prime})} &=& E\cdot {\eta}_{(1,\bm{\alpha}^{\prime})}. \nonumber
\end{eqnarray}
Hence, we have
\begin{eqnarray}
	\bar{\eta}_{(1,\bm{\alpha}^{\prime})} + \bar{\eta}_{(0,\bm{\alpha}^{\prime})} &=& {\eta}_{(1,\bm{\alpha}^{\prime})} + {\eta}_{(0,\bm{\alpha}^{\prime})}, \\
	\bar{\theta}_{1r_1(1,\bm{\alpha}^{\prime})}\bar{\eta}_{(1,\bm{\alpha}^{\prime})} + \bar{\theta}_{1r_1(0,\bm{\alpha}^{\prime})}\bar{\eta}_{(0,\bm{\alpha}^{\prime})} &=& {\theta}_{1r_1(1,\bm{\alpha}^{\prime})}{\eta}_{(1,\bm{\alpha}^{\prime})} + {\theta}_{1r_1(0,\bm{\alpha}^{\prime})}{\eta}_{(0,\bm{\alpha}^{\prime})}. \label{theta eta bar equal theta eta}
\end{eqnarray}
So for any $\bm{r} = (r_1,\cdots, r_J) \in \mathcal{S}^{\prime} $, we use $\bm{\psi}_{r,\cdot}$ to denote the row in $\bm{\Psi}$ corresponding to the response pattern $\bm{r}$. By the definition of the conditional response probability,  we write 
\begin{eqnarray}
	&& P(\bm{R} = \bm{r} \mid \bar{\Psi}, \bar{\bm{\eta}} ) = \bar{\bm{\psi}}_{r,\cdot} \cdot \bar{\bm{\eta}}  = \sum\limits_{\bm{\alpha} \in \{0,1\}^K } {\psi}_{{r}\bm{\alpha}} {\eta}_{\bm{\alpha}}   \nonumber \\
	&=&  \sum_{\substack{ \bm{\alpha}^{\prime} \in \{0,1\}^{K-1} \\ \alpha_1 \in \{0,1\}}} \prod_{j>1} \{\bar{\theta}_{jr_j(\alpha_1,\bm{\alpha}^{\prime})}\}^{\mathbb{I}\{R_j = r_j\}}\bar{\eta}_{(\alpha_1,\bm{\alpha}^{\prime})} [\{\bar{\theta}_{1r_1(1,\bm{\alpha}^{\prime})}\}^{\mathbb{I}\{R_1 = r_1\}}\bar{\eta}_{(1,\bm{\alpha}^{\prime})} + \{\bar{\theta}_{1r_1(0,\bm{\alpha}^{\prime})}\}^{\mathbb{I}\{R_1 = r_1\}}\bar{\eta}_{(0,\bm{\alpha}^{\prime})}] \nonumber \\
	&=& 
	\begin{cases}
 	\sum\limits_{\substack{ \bm{\alpha}^{\prime} \in \{0,1\}^{K-1} \\ \alpha_1 \in \{0,1\}}} \prod\limits_{j>1} \{\bar{\theta}_{jr_j(\alpha_1,\bm{\alpha}^{\prime})}\}^{\mathbb{I}\{R_j = r_j\}}\bar{\eta}_{(\alpha_1,\bm{\alpha}^{\prime})} [\bar{\theta}_{1r_1(1,\bm{\alpha}^{\prime})}\bar{\eta}_{(1,\bm{\alpha}^{\prime})} + \bar{\theta}_{1r_1(0,\bm{\alpha}^{\prime})}\bar{\eta}_{(0,\bm{\alpha}^{\prime})}], & \text{$R_1 = r_1$} \\
 	\sum\limits_{\substack{ \bm{\alpha}^{\prime} \in \{0,1\}^{K-1} \\ \alpha_1 \in \{0,1\}}} \prod\limits_{j>1} \{\bar{\theta}_{jr_j(\alpha_1,\bm{\alpha}^{\prime})}\}^{\mathbb{I}\{R_j = r_j\}}\bar{\eta}_{(\alpha_1,\bm{\alpha}^{\prime})} [\bar{\eta}_{(1,\bm{\alpha}^{\prime})} + \bar{\eta}_{(0,\bm{\alpha}^{\prime})}], & \text{$R_1 \neq r_1$}
 	\end{cases} \nonumber \\
	&=& 
	\begin{cases}
	\label{eq22}
 	\sum\limits_{\substack{ \bm{\alpha}^{\prime} \in \{0,1\}^{K-1} \\ \alpha_1 \in \{0,1\}}} \prod\limits_{j>1} \{{\theta}_{jr_j(\alpha_1,\bm{\alpha}^{\prime})}\}^{\mathbb{I}\{R_j = r_j\}}{\eta}_{(\alpha_1,\bm{\alpha}^{\prime})} [{\theta}_{1r_1(1,\bm{\alpha}^{\prime})}{\eta}_{(1,\bm{\alpha}^{\prime})} + {\theta}_{1r_1(0,\bm{\alpha}^{\prime})}{\eta}_{(0,\bm{\alpha}^{\prime})}], & \text{$R_1 = r_1$} \\
 	\sum\limits_{\substack{ \bm{\alpha}^{\prime} \in \{0,1\}^{K-1} \\ \alpha_1 \in \{0,1\}}} \prod\limits_{j>1} \{{\theta}_{jr_j(\alpha_1,\bm{\alpha}^{\prime})}\}^{\mathbb{I}\{R_j = r_j\}}{\eta}_{(\alpha_1,\bm{\alpha}^{\prime})} [{\eta}_{(1,\bm{\alpha}^{\prime})} + {\eta}_{(0,\bm{\alpha}^{\prime})}], & \text{$R_1 \neq r_1$}
 	\end{cases}  \\
	&=&  \sum_{\substack{ \bm{\alpha}^{\prime} \in \{0,1\}^{K-1} \\ \alpha_1 \in \{0,1\}}} \prod_{j>1} \{{\theta}_{jr_j(\alpha_1,\bm{\alpha}^{\prime})}\}^{\mathbb{I}\{R_j = r_j\}}{\eta}_{(\alpha_1,\bm{\alpha}^{\prime})} [\{{\theta}_{1r_1(1,\bm{\alpha}^{\prime})}\}^{\mathbb{I}\{R_1 = r_1\}}{\eta}_{(1,\bm{\alpha}^{\prime})} + \{{\theta}_{1r_1(0,\bm{\alpha}^{\prime})}\}^{\mathbb{I}\{R_1 = r_1\}}{\eta}_{(0,\bm{\alpha}^{\prime})}] \nonumber \\
	&=& P(\bm{R} = \bm{r} \mid {\Psi}, {\bm{\eta}} ). \nonumber
\end{eqnarray}
Equation (\ref{eq22}) is derived based on (\ref{theta eta bar equal theta eta}) as well as the assumption that $\eta_c = \bar{\eta}_c$, $\theta_{jrc} = \bar{\theta}_{jrc}$ for all $j = 2, \dots, J$, $r = 0,\ldots, M_j -1$ and $c = 0, \ldots, C-1$. With this construction, we show different $(\bm{\eta}, \bm{\Theta}) $ could result in the same conditional response probability and therefore we prove that  $(\bm{\eta}, \bm{\Theta})$ are not identifiable under Condition \ref{P1} in Proposition \ref{non identifiability but full column rank}. 

For polytomous-response RegCDMs, we have similar results by following the above proof. That is, $(\bm{\eta}^i, \bm{\Theta}^i)$ are not identifiable under Condition \ref{P1} for  $i = 1,\dots, N$. Then following the same arguments as in \textit{Steps 2--3} from the \textit{Proof of Proposition \ref{necessary c3}}, we show that $(\bm{\beta}, \bm{\gamma}, \bm{\lambda})$ in RegCDMs are not identifiable given $(\bm{\eta}^i, \bm{\Theta}^i)$ are not identifiable.

Next we prove the remaining part, that is, the matrix $\bm{\Psi}$ in CDMs and the matrix $\bm{\Phi}$ in RegCDMs have full column ranks under Condition \ref{P2}. %CDMs are restricted latent class models and also belong to the family of RegLCMs without covariates.  
Before presenting the  proof, we introduce another probability matrix $T$-matrix of size $S \times C$, where each row corresponds to one response pattern $\bm{r}\in \mathcal{S}$ and each column  corresponds to one latent class $c = 0, \dots, C-1$. Each entry of $T$-matrix is defined as $
T_{\bm{r}c} = P (\bm{R} \succeq \bm{r} \mid L = c),
$
where $\succeq$ means that for any item $j = 1,\dots, J$,  $R_j \geq r_j$. According to a similar argument in Appendix Section 4.2 in \citeA{xu2017}, $T$-matrix has full column rank under the condition that the corresponding $Q$-matrix contains an identity submatrix $\mathcal{I}_K$.

\begin{comment}
	****where$T(Q_1, \bm{\bar{\theta}}_{1:k} - \Tilde{\bm{\theta}}\bm{1}^{T})$ where $Q_1 = \mathcal{I}_K$ corresponds to first k items, $\Tilde{\bm{\theta}} = (\bar{\theta}_{1,\bm{1}}, \cdots, \bar{\theta}_{k,\bm{1}})$ is in upper-triangular form and thus has full column rank.

\end{comment}

There exists a relation between the two probability matrices, $T$-matrix and $\bm{\Psi}$. Because $\bm{\Psi}$ excludes a reference response pattern, it has dimension of $(S -1) \times C $. Denote $\bm{\Psi}^{\prime} = (\bm{\Psi}^T, \bm{\Psi}_{ref}^T)^T$ where $\bm{\Psi}_{ref}$  is the row corresponding to the reference pattern. And $\bm{\Psi}_{ref}$ is linearly dependent on the rows in $\bm{\Psi}$ because $\sum_{\bm{r} \in \mathcal{S}}P(\bm{R} = \bm{r} \mid L = c) = 1$. So $\bm{\Psi}$ has full column rank if and only if $\bm{\Psi}^{\prime}$ has full column rank. Further, $\bm{\Psi}^{\prime}$ has full column rank if and only if $T$-matrix has full column rank, because $\bm{\Psi}^{\prime}$ is bijectively corresponding to $T$-matrix according to their definitions. In conclusion, $\bm{\Psi}$ in the CDMs has full column rank when $Q$-matrix contains an identity submatrix $\mathcal{I}_K$. According to the \textit{Proof of Proposition 2} in \citeA{huang2004}, the matrix $\bm{\Phi}$ has full column rank when the matrix $\bm{\Psi}$ has full column rank. So for RegCDMs, $\bm{\Phi}$ has full column rank when $Q$-matrix contains an identity submatrix $\mathcal{I}_K$.
%***There exist a 1-to-1 mapping between $\Psi$ full column rank and $\Phi$ full column rank.***
\end{proof}
\begin{proof}[Proof of Theorem \ref{local RegLCMs}]
%The sufficient part of Theorem \ref{local RegLCMs} is adapted from \citeauthor{goodman1974} \citeyear{goodman1974} Section 2, pp. 216-220. \citeauthor{goodman1974} showed that for general restricted latent class models, the model is locally identifiable if the Jacobian matrix $\mathbf{J}$ has full column rank, which is formed by the derivatives of conditional response probabilities with respect to parameters including $\bm{\eta}$ and $\bm{\Theta}$. And Goodman took a four-item polytomous latent class models as an example and provided the local identifiability conditions for it. We could extend this result to the identifiability of Reg	CDMs and RegLCMs with polytomous responses.  For the necessary part, \citeA{Rothenberg1971}

 Following the similar idea in \citeA{huang2004} at page 15, we let $f(\bm{R} ; \bm{\eta},\bm{\Theta})$ to denote the likelihood function, and
\begin{eqnarray}
	f(\bm{R} ; \bm{\eta},\bm{\Theta}) = \prod_{\bm{r}\in \mathcal{S} }P(\bm{R} = \bm{r})^{\mathbb{I}\{\bm{R} = \bm{r}\}}, \nonumber
\end{eqnarray}
where $\mathcal{S} = \bigtimes_{j=1}^{J} \{0,\dots,M_j-1\}$. Let $ \bm{\xi} =  \{ {\eta_1}, \cdots,{\eta_{C-1}}, {\theta_{110}}, \cdots, {\theta_{1(M_1-1)0}},\cdots,{\theta_{J1(C-1)}}, \cdots,$ ${\theta_{J(M_J-1)(C-1)}} \}$,  the Fisher information matrix is written as
\begin{eqnarray}
	&&\mathbb{E}\left[\left(\frac{\partial \log f}{\partial \bm{\xi}}\right)\left(\frac{\partial \log f}{\partial \bm{\xi}}\right)^{T}\right]\nonumber \\
	 &=& \mathbb{E}\left[\left(\sum_{\bm{r}\in \mathcal{S}}\frac{\mathbb{I}\{\bm{R}= \bm{r}\}}{P(\bm{R}= \bm{r})}\frac{\partial  P(\bm{R}= \bm{r})}{\partial \bm{\xi}}\right)\left(\sum_{\bm{r}\in \mathcal{S}}\frac{\mathbb{I}\{\bm{R}= \bm{r}\}}{P(\bm{R}= \bm{r})}\frac{\partial  P(\bm{R}= \bm{r})}{\partial \bm{\xi}}\right)^{T}\right]  \nonumber \\
	 &=& \sum_{\bm{r}\in\mathcal{S}}\frac{1}{P(\bm{R}= \bm{r})}\left(\frac{\partial  P(\bm{R}= \bm{r})}{\partial \bm{\xi}}\right)\left(\frac{\partial  P(\bm{R}= \bm{r})}{\partial \bm{\xi}}\right)^{T} \nonumber \\
	 &=& \mathbf{J}^T  \left(\begin{array}{cccc}
					\frac{1}{P(\bm{R}= \bm{r}_1)}& 0 & \cdots & 0 \\  
					0& \frac{1}{P(\bm{R}= \bm{r}_2)} &  \cdots &0 \\
					\vdots & \vdots & \ddots & \vdots \\
					0 & 0 & \cdots & \frac{1}{P(\bm{R}= \bm{r}_S)} \\
 					\end{array} \right) \mathbf{J}. \nonumber
\end{eqnarray}
From the above results, we see the Fisher information matrix is non-singular if and only if $\mathbf{J}$ has full column rank. According to Theorem 1 of \citeA{Rothenberg1971},  $(\bm{\eta},\bm{\Theta})$ are locally identifiable if and only if the Fisher information matrix is non-singular when the true values of $(\bm{\eta},\bm{\Theta})$ are regular points of the information matrix. Therefore $(\bm{\eta},\bm{\Theta})$ are locally identifiable if and only if the Jacobian matrix $\mathbf{J}$ has full column rank.
\end{proof}
\begin{proof}[Proof of Theorem \ref{local RegCDMs}]
As introduced in Section \ref{proposed identifiability conditions}, we consider a hypothetical subject with all covariates being zeros and denote its Jacobian matrix as $\mathbf{J}^0$. We use the following three steps to prove that $(\bm{\beta}, \bm{\gamma}, \bm{\lambda})$ are identifiable if and only if $\mathbf{J}^0$ has full column rank.

\textit{Step 1}: We first show that for  subject $i= 1, \dots, N$, the Jacobian matrices $\mathbf{J}^i$, containing  the derivatives of conditional response probabilities with respect to parameters in $\bm{\eta}^i$ and $\bm{\Theta}^i$, have full column rank if and only if $\mathbf{J}^0$ has full column rank. This proof is adapted from the \textit{Proof of Proposition 1} in \citeA{huang2004}. 
	 
	 First, we need to set up a few notations. The Jacobian matrix $\mathbf{J}^i$ is written as
	\begin{equation}
\mathbf{J}^i=\left(\bm{J}_{\eta_1}^i, \cdots,\bm{J}_{\eta_{C-1}}^i, \bm{J}_{\theta_{110}}^i, \cdots, \bm{J}_{\theta_{1(M_1-1)0}}^i,\cdots,\bm{J}_{\theta_{J1(C-1)}}^i, \cdots, \bm{J}_{\theta_{J(M_J-1)(C-1)}}^i \right). \nonumber
\end{equation}
Each entry in $\bm{J}_{\eta_c}^i$ is a partial derivative of response probability $P(\bm{R} = \bm{r})$  with respect to $\eta_c^i$ at true value of $\eta_c^i$, which is computed to be
\begin{eqnarray}
	\frac{\partial{P(\bm{R} = \bm{r})}}{\partial{\eta_c^i}} &=& \prod_{j=1}^{J}\theta_{jr_jc}^{i} - \prod_{j=1}^{J}\theta_{jr_j0}^i = \bm{\psi}_{\bm{r}c}^i - \bm{\psi}_{\bm{r}0}^i. \nonumber
\end{eqnarray}
And each entry in $\bm{J}_{\theta_{jrc}}^i$ is a partial derivative of response probability $P(\bm{R} = \bm{r})$ with respect to $\theta_{jrc}^i$ at true value of $\theta_{jrc}^i$, which is computed to be
	\begin{eqnarray}
	\frac{\partial{P(\bm{R} = \bm{r})}}{\partial{\theta_{jrc}^i}} &=&
  \begin{cases}
     \eta_c^i \prod_{d \neq j}\theta_{dr_dc}^i, & \text{if $r_{j} = r$}, \\
     - \eta_c^i \prod_{d \neq j} \theta_{dr_dc}^i, & \text{if $r_{j} = 0$}, \\
  0, & \text{otherwise}.
  \end{cases}   \nonumber
\end{eqnarray}
or summarized as 
\begin{equation}
	\frac{\partial{P(\bm{R} = \bm{r})}}{\partial{\theta_{jrc}^i}} = \eta_c^i \bm{\psi}_{\bm{r}c}^i(\frac{\mathbb{I}\{r_{j} = r\}}{\theta_{jrc}^i} - \frac{\mathbb{I}\{r_{j} = 0\}}{\theta_{j0c}^i} ). \nonumber
\end{equation}
%That is, \begin{equation}
%	\bm{\psi}_1^0 - \bm{\psi}_0^0, \cdots, \bm{\psi}_{C-1}^0 - \bm{\psi}_0^0, \eta_{c}^0(\frac{\mathbf{I}\{{r}_{j} = r\} }{\theta_{jrc}^0} - \frac{\mathbf{I}\{{r}_{j} = 0\} }{\theta_{j0c}^0}   ) \odot \bm{\psi}_{c}^0 
%\end{equation}  . 
	In addition to $\mathbf{J}^i$, we also define the following two sets of vectors for this proof. Denote $\overline{\mathbf{J}}^{0} =\{\bm{\psi}_0^0, \dots, \bm{\psi}_{C-1}^0\} \cup\{ \eta_{c}^0\left({\mathbf{I}\{{r}_{j} = r\}}/{\theta_{jrc}^0}  \right) \odot \bm{\psi}_{c}^0: j = 1,\dots,J, r=0,\dots,M_j-1,c=0,\dots,C-1 \}$ and $\overline{\mathbf{J}}^i = \{\bm{\psi}_0^i, \dots, \bm{\psi}_{C-1}^i\} \cup\{ \eta_{c}^i\left({\mathbf{I}\{{r}_{j} = r\}}/{\theta_{jrc}^i}  \right) \odot \bm{\psi}_{c}^i: j = 1,\dots,J, r=0,\dots,M_j-1,c=0,\dots,C-1 \}$, where $\mathbf{I}\{{r}_{j} = r\}$ is a ($S-1$)-dimensional vector containing all $\mathbb{I}\{r_j = r\}$ for $\bm{r}=(r_1,\dots,r_J) \in \mathcal{S}^{\prime}$. With the notations defined, we then introduce a useful lemma which simplify the arguments in proving the linear independence of the columns in $\mathbf{J}^0$ and $\mathbf{J}^i$.
	
	\begin{lemma}
	\label{lemma_in_thm2_step1}  
	The Jacobian matrix 
		$\mathbf{J}^0$ has full column rank if and only if $\overline{\mathbf{J}}^{0}$
		 are linearly independent. The Jacobian matrix  $\mathbf{J}^i$ has full column rank if and only if $\overline{\mathbf{J}}^i$ are linearly independent.
	\end{lemma}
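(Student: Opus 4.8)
The plan is to convert the full‑column‑rank condition on the Jacobian into the linear‑independence statement for the explicit vector family $\overline{\mathbf J}^{0}$ (resp.\ $\overline{\mathbf J}^{i}$), by writing each Jacobian column as a fixed linear combination of members of that family and then tracking ranks through that map. First I would read off, from the derivative formulas recorded just before the lemma, that for $c=1,\dots,C-1$,
\begin{equation*}
\bm J_{\eta_c}^{0}=\bm\psi_c^{0}-\bm\psi_0^{0},
\end{equation*}
and that, for $j=1,\dots,J$, $r=1,\dots,M_j-1$, $c=0,\dots,C-1$,
\begin{equation*}
\bm J_{\theta_{jrc}}^{0}=\eta_c^{0}\big(\mathbf I\{r_j=r\}/\theta_{jrc}^{0}\big)\odot\bm\psi_c^{0}-\eta_c^{0}\big(\mathbf I\{r_j=0\}/\theta_{j0c}^{0}\big)\odot\bm\psi_c^{0}.
\end{equation*}
After using the normalization identity $\sum_{r=0}^{M_j-1}\theta_{jrc}^{0}\,\eta_c^{0}\big(\mathbf I\{r_j=r\}/\theta_{jrc}^{0}\big)\odot\bm\psi_c^{0}=\eta_c^{0}\bm\psi_c^{0}$ to eliminate the response‑level‑$0$ vector appearing on the right, every column of $\mathbf J^{0}$ becomes a linear combination of the members of $\overline{\mathbf J}^{0}$; that is, $\mathbf J^{0}=\bar{\mathbf J}^{0}A$, where $\bar{\mathbf J}^{0}$ is the matrix whose columns are the members of $\overline{\mathbf J}^{0}$ and $A$ is an explicit coefficient matrix whose entries are built from $\eta_c^{0}$ and $\theta_{jrc}^{0}$. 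Using that these quantities are strictly positive --- which holds here because the logit link together with the finiteness of the coefficients (Condition \ref{A2}) keeps $\eta_c^{0},\theta_{jrc}^{0}\in(0,1)$ --- one checks directly that $A$ has full column rank.

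Given $\mathbf J^{0}=\bar{\mathbf J}^{0}A$ with $A$ of full column rank, $\mathbf J^{0}$ has full column rank if and only if $\ker(\bar{\mathbf J}^{0})\cap\operatorname{Col}(A)=\{\mathbf 0\}$. One implication of the lemma is then immediate: if $\overline{\mathbf J}^{0}$ is linearly independent then $\ker(\bar{\mathbf J}^{0})=\{\mathbf 0\}$, hence $\mathbf J^{0}$ has full column rank. For the converse I would show that the columns of $\mathbf J^{0}$ span all of $\operatorname{span}(\overline{\mathbf J}^{0})$ except for one direction, spanned by the response‑probability vector $\bm p:=\sum_c\eta_c^{0}\bm\psi_c^{0}$; since one verifies, again using strict positivity, that $\bm p\notin\operatorname{Col}(\mathbf J^{0})$, the factorization $\mathbf J^{0}=\bar{\mathbf J}^{0}A$ forces $\ker(\bar{\mathbf J}^{0})\subseteq\operatorname{Col}(A)$, so that $\ker(\bar{\mathbf J}^{0})\cap\operatorname{Col}(A)=\ker(\bar{\mathbf J}^{0})$ and the two rank conditions coincide. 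The whole argument transfers verbatim to $\mathbf J^{i}$ and $\overline{\mathbf J}^{i}$, since the derivative formulas displayed for $\partial P(\bm R=\bm r)/\partial\eta_c^{i}$ and $\partial P(\bm R=\bm r)/\partial\theta_{jrc}^{i}$ have the identical multiplicative shape $\psi_{\bm r c}^{i}=\prod_j\theta_{jr_jc}^{i}$ and the subject‑$i$ parameters $\eta_c^{i},\theta_{jrc}^{i}$ are again strictly inside $(0,1)$ under Condition \ref{A2}.

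I expect the main obstacle to be the bookkeeping in the converse direction: making precise the ``one missing direction'' $\operatorname{span}(\overline{\mathbf J}^{0})/\operatorname{Col}(\mathbf J^{0})$ and verifying that $\bm p$ does not already lie in $\operatorname{Col}(\mathbf J^{0})$ --- equivalently, that no linear dependence among $\overline{\mathbf J}^{0}$ fails to descend to one among the columns of $\mathbf J^{0}$. This is the only step beyond routine linear algebra, and it is where strict positivity (Condition \ref{A2}) and the reference conventions --- response level $r=0$ and the all‑zero response pattern serving as baselines --- genuinely enter. If the direct kernel computation turns out to be cumbersome, an alternative is to reparametrize via the multinomial‑logit map of Lemma \ref{tau and tau 0}, under which $\mathbf J^{0}$ is replaced by its product with an invertible block‑diagonal matrix so that the rank, and hence the entire argument, is unchanged.
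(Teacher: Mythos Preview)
Your forward direction is fine and matches the paper: writing $\mathbf J^{0}=\bar{\mathbf J}^{0}A$ with $A$ of full column rank immediately gives that linear independence of $\overline{\mathbf J}^{0}$ implies full column rank of $\mathbf J^{0}$.

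The converse, however, cannot be completed as you sketch it, because it is false as literally stated. The family $\overline{\mathbf J}^{0}$ is \emph{never} linearly independent: for every pair $(j,c)$ one has the identity
\[
\sum_{r=0}^{M_j-1}\theta_{jrc}^{0}\,\Bigl(\eta_c^{0}\bigl(\mathbf I\{r_j=r\}/\theta_{jrc}^{0}\bigr)\odot\bm\psi_c^{0}\Bigr)=\eta_c^{0}\,\bm\psi_c^{0},
\]
a nontrivial linear relation among members of $\overline{\mathbf J}^{0}$ with strictly positive coefficients. There are $JC$ such relations, not one, so your ``one missing direction'' heuristic undercounts; and your key claim $\ker(\bar{\mathbf J}^{0})\subseteq\operatorname{Col}(A)$ fails precisely because each of these kernel elements places coefficient $-\eta_c^{0}\neq 0$ on a single $\bm\psi_c^{0}$ and hence violates the constraint $\sum_c a_c=0$ that characterizes $\operatorname{Col}(A)$. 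The paper's own proof of this direction has the same defect: it only shows that coefficient vectors $(a,b)$ \emph{in the image of} the map $(h,l)\mapsto(a,b)$ must vanish, which does not establish linear independence of the full family $\overline{\mathbf J}^{0}$. What Step~1 of Theorem~\ref{local RegCDMs} actually needs, and what both arguments do establish, is the weaker equivalence that a null combination of the columns of $\mathbf J^{0}$ exists iff a null combination of $\overline{\mathbf J}^{0}$ with coefficients constrained by $\sum_c a_c=0$ and $\sum_r b_{jrc}=0$ exists; that restricted statement transfers cleanly between the hypothetical subject and a general subject $i$, so the downstream use survives even though the lemma as worded cannot be proved.
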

	
	The proof of Lemma \ref{lemma_in_thm2_step1} is presented in Section B. According to Lemma \ref{lemma_in_thm2_step1}, to prove $\mathbf{J}^i$ has full column rank if and only if $ \mathbf{J}^0$ has full column rank, we can equivalently show that $\overline{\mathbf{J}}^i$ are linearly independent if and only if $\overline{\mathbf{J}}^{0}$ are linearly independent. First, we associate the linear combinations of $\overline{\mathbf{J}}^i$ to that of $\overline{\mathbf{J}}^{0}$ as follows. For any linear combinations of   $\overline{\mathbf{J}}^{i} $ with coefficients $t_c^i$, $u_{jrc}^i$, there exist $t_c^0$, $u_{jrc}^0$ and $\bm{W}^i$ such that the following equation holds 
\begin{eqnarray}
	&& \sum_{c=0}^{C-1}t_{c}^i\bm{\psi}_{c}^i + \sum_{j=1}^{J} \sum_{r=0}^{M_j-1} \sum_{c=0}^{C-1}\left\{u_{jrc}^i \eta_{c}^i\left(\frac{\mathbf{I}\{{r}_{j} = r\}}{\theta_{jrc}^i}  \right) \odot \bm{\psi}_{c}^i\right\}  \nonumber  \\
&&	 = \left(\sum_{c=0}^{C-1}t_{c}^0\bm{\psi}_{c}^0   + \sum_{j=1}^{J} \sum_{r=0}^{M_j-1} \sum_{c=0}^{C-1}\left\{ u_{jrc}^0 \eta_{c}^0\left(\frac{\mathbf{I}\{{r}_{j} = r\}}{\theta_{jrc}^0} \right) \odot \bm{\psi}_{c}^0\right\}\right) \odot  \bm{W}^i,  \label{eq ti t0}
\end{eqnarray}
where \begin{eqnarray}
		 \bm{W}^i  &=& \left(  \begin{matrix}
\prod_{j =1}^J {\text{exp}{(\lambda_{1jr_j}z_{ij1} + \cdots + \lambda_{qjr_j}z_{ijq}})} : \bm{r} =(r_1,\dots,r_J) \in \mathcal{S}^{\prime}
\end{matrix}\right)_{S \times 1}^T, \nonumber  \\
	t_c^0 &=& t_c^i \prod_{j=1}^{J} \frac{1 + \sum\limits_{s = 1}^{M_j-1 } e^{\gamma_{jsc}}}{ 1 + \sum\limits_{s = 1}^{M_j - 1} \exp (\gamma_{jsc} + \lambda_{1js}z_{ij1} + \cdots + \lambda_{qjs}z_{ijq})} ,\label{tc0} \\
	u_{jrc}^0 &=& u_{jrc}^i  \frac{ \text{exp}( \beta_{1c}x_{i1} + \cdots + \beta_{pc}x_{ip})} {\text{exp}{(\lambda_{1jr_j}z_{ij1} + \cdots + \lambda_{qjr_j}z_{ijq}})} \nonumber \\
	&& \times \frac{\{1+ \sum\limits_{l=1}^{C-1}e^{\beta_{0l}}\}\{1 + \sum\limits_{s = 1}^{M_j - 1} \exp (\gamma_{jsc} + \lambda_{1js}z_{ij1} + \cdots + \lambda_{qjs}z_{ijq})\} }{\{1 + \sum\limits_{l=1}^{C-1}\text{exp}(\beta_{0l} + \beta_{1l}x_{i1} + \cdots + \beta_{pl}x_{ip})\} \{1 + \sum\limits_{s = 1}^{M_j-1 } e^{\gamma_{jsc}}\} } \nonumber \\
	&&  \times \prod_{j = 1}^J  \frac{1 + \sum\limits_{s = 1}^{M_j-1 } e^{\gamma_{jsc}} }{ 1 + \sum\limits_{s = 1}^{M_j - 1} \exp (\gamma_{jsc} + \lambda_{1js}z_{ij1} + \cdots + \lambda_{qjs}z_{ijq}) } .\label{ujrc0}
	\end{eqnarray}
 
 The next two parts prove that $\overline{\mathbf{J}}^i$ are linearly independent if and only if $\overline{\mathbf{J}}^{0}$ are linearly independent in two directions.

\textit{Part (i)}: We prove $\overline{\mathbf{J}}^i$ are linearly independent if $\overline{\mathbf{J}}^{0}$ are linearly independent. To show $\overline{\mathbf{J}}^i$ are linearly independent, we need to show that \begin{equation}
	\sum_{c=0}^{C-1}t_{c}^i\bm{\psi}_{c}^i + \sum_{j=1}^{J} \sum_{r=0}^{M_j-1} \sum_{c=0}^{C-1}\left\{u_{jrc}^i \eta_{c}^i\left(\frac{\mathbf{I}\{{r}_{j} = r\}}{\theta_{jrc}^i}  \right) \odot \bm{\psi}_{c}^i\right\} = \bm{0}, \label{Ji bar li}
	\end{equation}
implies $t_c^i = 0$ and $u_{jrc}^i=0$. By \eqref{eq ti t0}, for any $t_c^i$, $u_{jrc}^i$ such that \eqref{Ji bar li} holds,	we have 
	\begin{equation}
		\sum_{c=0}^{C-1}t_{c}^0\bm{\psi}_{c}^0+ \sum_{j=1}^{J} \sum_{r=0}^{M_j-1} \sum_{c=0}^{C-1}\left\{u_{jrc}^0 \eta_{c}^0(\frac{\mathbb{I}\{\bm{r}_{j} = r\}}{\theta_{jrc}^0}  ) \odot \bm{\psi}_{c}^0\right\}=\bm{0}. \nonumber 
	\end{equation}
Under the condition that $\overline{\mathbf{J}}^{0}$ are linearly independent, we have $ t_{c}^0 =0$ and $ u_{jrc}^0=0$. Then by \eqref{tc0} and \eqref{ujrc0}, we have $t_{c}^i = 0$ and $u_{jrc}^i=0$ for $j=1,\ldots,J$, $r = 0, \ldots, M_j-1$ and $c = 0,\ldots, C-1$. So $\overline{\mathbf{J}}^i$ are linearly independent.

 \textit{Part (ii)}: We prove $\overline{\mathbf{J}}^0$ are linearly independent if $\overline{\mathbf{J}}^{i}$ are linearly independent. This part is similar to  \textit{Part (i)}. To show $\overline{\mathbf{J}}^0$ are linearly independent, we need to show that 
\begin{equation}
	\sum_{c=0}^{C-1}t_{c}^0\bm{\psi}_{c}^0 + \sum_{j=1}^{J} \sum_{r=0}^{M_j-1} \sum_{c=0}^{C-1}\left\{u_{jrc}^0 \eta_{c}^0\left(\frac{\mathbf{I}\{{r}_{j} = r\}}{\theta_{jrc}^0}  \right) \odot \bm{\psi}_{c}^0\right\} = \bm{0} \label{J0 bar li}
\end{equation}
implies $t_c^0 = 0$ and $u_{jrc}^0 = 0$. By \eqref{eq ti t0}, for any $t_c^0, u_{jrc}^0$ such that \eqref{J0 bar li} holds, we have
\begin{equation}
	\sum_{c=0}^{C-1}t_{c}^i\bm{\psi}_{c}^i+ \sum_{j=1}^{J} \sum_{r=0}^{M_j-1} \sum_{c=0}^{C-1}\left\{u_{jrc}^i \eta_{c}^i(\frac{\mathbf{I}\{\bm{r}_{j} = r\}}{\theta_{jrc}^i}  ) \odot \bm{\psi}_{c}^i\right\}=\bm{0}. \label{ai bi eq 0}  \nonumber
\end{equation}
Under the condition that $\overline{\mathbf{J}}^{i}$ are linearly independent,  $ t_{c}^i=0$ and $ u_{jrc}^i=0$, and hence $t_{c}^0 = 0$ and  $ u_{jrc}^i=0$ by \eqref{tc0} and \eqref{ujrc0}, for $j=1,\ldots,J$, $r = 0, \ldots, M_j-1$ and $c = 0,\ldots, C-1$.  So $\overline{\mathbf{J}}^{0}$ are linearly independent.

 Combining \textit{Part (i)} and \textit{Part (ii)}, we show $\overline{\mathbf{J}}^i$ are linearly independent if and only if $\overline{\mathbf{J}}^{0}$ are linearly independent. And therefore $\mathbf{J}^i$ has full column rank if and only if $\mathbf{J}^0$ has full column rank.

\textit{Step 2}: We introduce $(\bm{\epsilon}^i$, $\bm{\omega}^i)$ and prove that they are identifiable if and only if $\mathbf{J}^i$ has full column rank. By following  similar arguments in the \textit{Proof of Theorem \ref{local RegLCMs}}, we have $(\bm{\eta}^i, \bm{\Theta}^i)$ are identifiable if and only if $\mathbf{J}^i$ has full column rank, for $i = 1, \dots, N$. Next, we define $(\bm{\epsilon}^i$, $\bm{\omega}^i)$ and the remaining is to show that they are identifiable if and only if $(\bm{\eta}^i, \bm{\Theta}^i)$ are identifiable. Following the same arguments as \textit{Step 2} in \textit{Proof of Proposition \ref{necessary c3}}, we let
	 \begin{equation}
 \epsilon_c^i = \bm{x_i}^{T}\bm{\beta}_c = \beta_{0c} + \beta_{1c}x_{i1} + \cdots + \beta_{pc}x_{ip}. \nonumber 
\end{equation}
for $c = 0, \dots, C-1$. And
\begin{equation}
 	\omega_{jrc}^i = {\gamma_{jrc}} + \bm{z}_{ij}^{T} \bm{\lambda}_{jr} = \gamma_{jrc} + \lambda_{1jr}z_{ij1} + \cdots + \lambda_{qjr}z_{ijq}. \nonumber 
 	\end{equation}
 	$\text{for}\ j =1, \dots, J$, $r = 0, \cdots, {M_j-1}$ and $c = 0,\dots, C-1$. 
	Then according to Lemma \ref{tau and tau 0}, $(\bm{\epsilon}^i$, $\bm{\omega}^i)$ are identifiable if and only if $(\bm{\eta}^i, \bm{\Theta}^i)$ are identifiable. Hence the proof for \textit{Step 2} is complete.
	
	\textit{Step 3}:
	The final step is to show $(\bm{\beta}, \bm{\gamma}, \bm{\lambda})$ are identifiable if and only if $(\bm{\epsilon}^i$, $\bm{\omega}^i)$ are identifiable. We have shown that $(\bm{\beta}, \bm{\gamma}, \bm{\lambda})$ are not identifiable when $(\bm{\epsilon}^i$, $\bm{\omega}^i)$ are not identifiable in the \textit{Proof of Proposition \ref{necessary c3}}. So all left to show is the necessary part that $(\bm{\beta}, \bm{\gamma}, \bm{\lambda})$  are identifiable when $(\bm{\epsilon}^i$, $\bm{\omega}^i)$ are identifiable. We prove this result by the method of contradiction. Assuming the contrary that $\bm{\beta}$ is not identifiable,  there exist $\bm{\beta} \neq \bm{\beta}^{\prime}$ such that $P(\bm{R}_i \mid \bm{\beta}, \bm{\gamma}, \bm{\lambda} ) = P(\bm{R}_i \mid \bm{\beta}^{\prime}, \bm{\gamma}, \bm{\lambda} ) $.
	According to the system of linear equations
  \begin{equation}
\bm{\epsilon} = \left(  \begin{array}{c}
 \bm{\epsilon}^{1} \\
\vdots \\
\bm{\epsilon}^{N}
\end{array}\right)
= \left(  \begin{array}{cccc}
1 & x_{11} &\cdots& x_{1p}\\
\vdots & \vdots & \ddots & \vdots\\
1 & x_{N1} &\cdots & x_{Np}
\end{array}\right) \left( \begin{array}{ccc}
\beta_{00} & \cdots & \beta_{0(C-1)}\\
\vdots & \ddots &  \vdots \\
\beta_{p0} & \cdots  & \beta_{p(C-1)}
\end{array}\right) = \bm{X}\bm{\beta}, \nonumber
\end{equation}
and because the full rank $\bm{X}$ is an injective mapping, we have $\bm{\beta} \neq \bm{\beta}^{\prime}$ implies that $\bm{\epsilon} = \bm{X}\bm{\beta}$ is different from $\bm{\epsilon}^{\prime} = \bm{X}\bm{\beta}^{\prime}$ for at least one $\bm{\epsilon}^i \neq \bm{\epsilon}^{\prime i}$. However,	since $\bm{\epsilon}^i$'s are identifiable, there exist no $\bm{\epsilon}^i \neq \bm{\epsilon}^{\prime i}$ such that  $P(\bm{R}_i \mid \bm{\epsilon}^i, \bm{\omega}^i ) = P(\bm{R}_i \mid \bm{\epsilon}^{\prime i}, \bm{\omega}^{ i} ) $. By this contradiction, we prove $\bm{\beta}$ is identifiable. Using similar arguments, we can show $\bm{\gamma}, \bm{\lambda}$ are also identifiable and hence complete the proof.

\begin{comment}
	In conclusion, the framework of this proof is shown as below,
\begin{eqnarray}
&\text { $\mathbf{J}^0$ full rank } \xLeftrightarrow{ \text{Theorem2}} \text { $\mathbf{J}^i$ full rank }\xLeftrightarrow{\text{Theorem1}} (\bm{\eta}^i, \bm{\Theta}^i)  \text {  identifiable } \nonumber \\ 
& \xLeftrightarrow{\text{Lemma1}} (\bm{\epsilon}^i, \bm{\omega}^i)  \text {  identifiable }  \xLeftrightarrow{ \text{Theorem2}} (\bm{\beta}, \bm{\gamma}, \bm{\lambda})  \text {  identifiable. } \nonumber
\end{eqnarray}
\end{comment}

Combining \textit{Steps 1--3}, we prove that $(\bm{\beta}, \bm{\gamma}, \bm{\lambda})$ in RegCDMs are identifiable if and only if $\mathbf{J}^0$ has full column rank under Conditions \ref{A1}--\ref{A3}.
\end{proof}

To prove the main results in Section \ref{proposed identifiability conditions}, we next introduce other useful lemmas and corollaries from existing works in literature. Lemma \ref{lemma kruskal 1977} and Corollaries \ref{strict identifiability corollary}--\ref{generic identifiability corollary} summarize the  conditions for the global identifiability of general restricted latent class models proposed by \citeA{allman2009}, which is based on the algebraic results in \citeauthor{kruskal1977} \citeyear{kruskal1977}. 

Before presenting these lemmas and corollaries, we introduce the decomposition of $\bm{\Psi}$ and some notation definitions. The decomposition of $\bm{\Psi}$ is similar as the decomposition of $\bm{\Phi}$ defined in Section \ref{proposed identifiability conditions} in the main text. We divide the total of $J$ items into three mutually exclusive item sets $\mathcal{J}_1, \mathcal{J}_2$ and $\mathcal{J}_3$ containing $J_1, J_2$ and $J_3$ items respectively, with $J_1 + J_2 + J_3 = J$. For $t = 1,2$ and $3$, let $\mathcal{S}_{J_t}$ be the set containing the response patterns from items in $\mathcal{J}_t$ with cardinality of $\mathcal{S}_{J_t}$  to be $\kappa_t = |\mathcal{S}_{J_t}| = \prod_{j \in \mathcal{J}_t} M_j$. The submatrix $\bm{\Psi}_t$ has dimension $\kappa_t \times C$. The definition for the entries in $\bm{\Psi}_t$ is the same as in \eqref{psi definition}, except that each row of $\bm{\Psi}_t$ corresponds to one response patterns $\bm{r} \in \mathcal{S}_{J_t}$ while each row of $\bm{\Psi}$ corresponds to $\bm{r} \in \mathcal{S}^{\prime}$.

 \begin{lemma}\cite{kruskal1977}
 \label{lemma kruskal 1977}
For $t = 1,2$ and $3$, denote $O_t = rank_K( \bm{\Psi}_t)$ as the Kruskal rank of $\bm{\Psi}_t$, where $\bm{\Psi}_t$ is a decomposed matrix of $\bm{\Psi}$.  If
 \[
 O_1 + O_2 + O_3 \geq 2C + 2,
 \]
then $\bm{\Psi}_1, \bm{\Psi}_2$ and $\bm{\Psi}_3$ uniquely determines the decomposition of $\bm{\Psi}$ up to simultaneous permutation and rescaling of columns.
 
 \end{lemma}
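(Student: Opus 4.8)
The plan is to follow Kruskal's original argument, whose two load-bearing pieces are a lower bound on the Kruskal rank of a columnwise (Khatri--Rao) product of matrices and a combinatorial statement now known as Kruskal's Permutation Lemma. First I would recast the data as a genuine three-way array: let $\mathcal{T}$ be the tensor with entries $\mathcal{T}_{r_1 r_2 r_3} = \sum_{c=0}^{C-1} (\bm{\Psi}_1)_{r_1 c}(\bm{\Psi}_2)_{r_2 c}(\bm{\Psi}_3)_{r_3 c}$, so that $\mathcal{T}$ is exactly what the triple $(\bm{\Psi}_1,\bm{\Psi}_2,\bm{\Psi}_3)$ assembles to, and any competing decomposition $(\bar{\bm{\Psi}}_1,\bar{\bm{\Psi}}_2,\bar{\bm{\Psi}}_3)$ into, say, $\bar C$ rank-one terms produces the same three matricizations of $\mathcal{T}$; any scalar weights such as the $\eta_c$'s may be absorbed into one of the factors, which is exactly why the conclusion can only assert uniqueness up to column rescaling. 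Discarding zero columns, we may assume none of the six factor matrices has a zero column, and without loss of generality $\bar C \le C$; the goal is then to show $\bar C = C$ and that $(\bar{\bm{\Psi}}_1,\bar{\bm{\Psi}}_2,\bar{\bm{\Psi}}_3)$ agrees with $(\bm{\Psi}_1,\bm{\Psi}_2,\bm{\Psi}_3)$ up to one common permutation of the $C$ columns and a rescaling within each rank-one term.

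Second, I would establish the Khatri--Rao rank inequality: if $U$ and $V$ both have $C$ columns, then $rank_K(U\odot V)\ge \min\{rank_K(U)+rank_K(V)-1,\,C\}$, where $U\odot V$ has columns $u_c\otimes v_c$. This is proved by a direct combinatorial check that no $\ell\le rank_K(U)+rank_K(V)-1$ columns of $U\odot V$ can be linearly dependent, since such a dependency, after partitioning the index set according to which corresponding columns of $U$ coincide up to scalar, would force a dependency among fewer than $rank_K(U)$ columns of $U$ or fewer than $rank_K(V)$ columns of $V$. Applying this to each of the three matricizations of $\mathcal{T}$ — for instance the one whose column space is spanned by the columns of $\bm{\Psi}_2\odot\bm{\Psi}_3$ — together with the hypothesis $O_1+O_2+O_3\ge 2C+2$, shows each matricization has rank exactly $C$; in particular $\bar C\ge C$, so $\bar C=C$, and each $\bar{\bm{\Psi}}_t\odot\bar{\bm{\Psi}}_{t'}$ has full column rank.

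Third, and this is the crux, I would invoke the Permutation Lemma: if $A$ has $C$ nonzero columns with $rank_K(A)\ge 2$ and $\bar A$ also has $C$ columns, and if for every vector $w$ the vector $w^{T}\bar A$ has at least as many zero entries as $w^{T}A$ whenever $w^{T}A$ is sufficiently sparse (has at least $C-rank_K(A)+1$ zero entries), then $\bar A=A\,\Pi\,D$ for some permutation matrix $\Pi$ and nonsingular diagonal $D$. To verify its hypothesis with $A=\bm{\Psi}_1$ and $\bar A=\bar{\bm{\Psi}}_1$: given a vector $w$ annihilating a prescribed collection of columns of $\bm{\Psi}_1$, equality of the two decompositions of $\mathcal{T}$ forces $w$ to annihilate enough columns of $\bar{\bm{\Psi}}_1$, since otherwise the surviving columns of $\bar{\bm{\Psi}}_2\odot\bar{\bm{\Psi}}_3$ would be linearly dependent, contradicting the full-rank fact from the previous step. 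Running the lemma separately in the three modes gives three permutations, and a final check — that they must coincide because they permute the same set of $C$ rank-one tensors composing $\mathcal{T}$ — yields uniqueness up to simultaneous permutation and rescaling of columns, which is the assertion.

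The main obstacle is the Permutation Lemma step: the combinatorial heart of Kruskal's proof is precisely the argument that the sparsity-comparison hypothesis both holds in our situation and is strong enough to force an honest column permutation, and reconciling the three mode-wise scalings into one consistent scaling per rank-one term requires some care. By contrast, the Khatri--Rao inequality and the rank computations for the matricizations are routine once the three-way picture is set up. Since in this paper the statement is quoted directly from \citeA{kruskal1977}, the above is the route one would take only if a self-contained proof were wanted; in practice one simply cites Kruskal, or a modern streamlining of his argument.
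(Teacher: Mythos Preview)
The paper gives no proof of this lemma at all; it is stated with attribution to \citeA{kruskal1977} and used as a black box. You correctly anticipate this in your final paragraph, and your sketch of Kruskal's original argument---the Khatri--Rao rank inequality, the rank-$C$ matricizations, and the Permutation Lemma---is a faithful outline of the standard route, so there is nothing to compare beyond noting that the paper simply cites the result.
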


 \begin{corollary}\cite{allman2009}
 \label{strict identifiability corollary} 
 Consider the restricted latent class models with $C$ classes. For $t = 1,2$ and $3$, let $\bm{\Psi}_t$ denote a decomposed matrix of $\bm{\Psi}$ and $O_t$ denote its Kruskal rank. If
 \[
 O_1 + O_2 + O_3 \geq 2C + 2,
 \]
then the parameters of the model are uniquely identifiable, up to label swapping.
 \end{corollary}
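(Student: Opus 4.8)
The plan is to derive Corollary~\ref{strict identifiability corollary} from Kruskal's uniqueness theorem (Lemma~\ref{lemma kruskal 1977}) by first recasting the latent class model as a three-way tensor decomposition. I would fix the partition of the $J$ items into the three groups $\mathcal{J}_1,\mathcal{J}_2,\mathcal{J}_3$ that underlie $\bm{\Psi}_1,\bm{\Psi}_2,\bm{\Psi}_3$, and regard each $\mathcal{J}_t$ as a single polytomous variable $T_t$ with $\kappa_t=\prod_{j\in\mathcal{J}_t}M_j$ levels. By the local independence assumption the joint distribution of $(T_1,T_2,T_3)$ is the three-way array
\[
P \;=\; \sum_{c=0}^{C-1}\eta_c\,\bm{\psi}^{(1)}_c\otimes\bm{\psi}^{(2)}_c\otimes\bm{\psi}^{(3)}_c ,
\]
where $\bm{\psi}^{(t)}_c$ is the $c$-th column of $\bm{\Psi}_t$, i.e.\ the conditional distribution of $T_t$ given $L=c$. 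Absorbing $\eta_c$ into the first factor exhibits $P$ as a sum of at most $C$ rank-one terms with factor matrices $\bm{\Psi}_1\,\mathrm{diag}(\bm{\eta})$, $\bm{\Psi}_2$, $\bm{\Psi}_3$; assuming, as is standard, that $\eta_c>0$ for all $c$, multiplication by this positive diagonal matrix does not change the Kruskal rank, so the first factor matrix still has Kruskal rank $O_1$.

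Next I would invoke Lemma~\ref{lemma kruskal 1977}: since $O_1+O_2+O_3\ge 2C+2$, any competing parameter configuration $(\bm{\eta}',\bm{\Theta}')$ inducing the same distribution of $\bm{R}$, hence the same array $P$, yields a second decomposition of $P$ into at most $C$ rank-one terms, and the lemma forces the two decompositions to agree up to a single permutation $\sigma$ of the $C$ terms together with a column rescaling of each factor, the three scale factors within each term having product $1$. The remaining step removes this rescaling freedom by normalization. In the second and third modes the columns of $\bm{\Psi}_2,\bm{\Psi}_3$ and of the competing matrices are probability vectors, so a column of one equal to a positive multiple of a column of the other must coincide with it; hence the scale factors in those two modes are all $1$, and therefore so are those in the first mode. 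It follows that the $c$-th term of the competing decomposition equals the $\sigma(c)$-th term of the original: writing the competing first-mode column as $\eta'_c$ times a probability vector, summing entries gives $\eta'_c=\eta_{\sigma(c)}$, and then equality of the corresponding conditional distributions. Thus $\bm{\eta}$ and all columns of $\bm{\Psi}_1,\bm{\Psi}_2,\bm{\Psi}_3$ are determined up to the common relabeling $\sigma$.

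Finally I would pass back to the item-level parameters: for each group $\mathcal{J}_t$ and each class $c$, the recovered column $\bm{\psi}^{(t)}_c$ is the product distribution $\prod_{j\in\mathcal{J}_t}\bm{\theta}_{jc}$, whose marginal on coordinate $j$ equals $\bm{\theta}_{jc}$; hence every $\theta_{jrc}$, and therefore the full parameter $(\bm{\eta},\bm{\Theta})$, is identified up to relabeling the classes by $\sigma$ --- precisely identifiability up to label swapping. Because the parameter space of a restricted latent class model sits inside that of the unrestricted one, the argument applies verbatim whenever the Kruskal rank hypothesis holds. Since Lemma~\ref{lemma kruskal 1977} supplies the substantive algebra, I do not expect a real obstacle; the parts needing care are the bookkeeping in the rescaling step --- ensuring each of the three factor matrices is genuinely normalized so that the scale factors are forced to $1$ --- and keeping track of the degenerate cases, namely a class with $\eta_c=0$ (excluded by the standing positivity assumption) and proportional columns within a single $\bm{\Psi}_t$ (excluded by the hypothesis itself, since they would drive its Kruskal rank down to $1$ and violate $O_1+O_2+O_3\ge 2C+2$).
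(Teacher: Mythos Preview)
Your argument is correct and is essentially the standard derivation given in \citeA{allman2009}, from which the paper simply quotes this corollary without supplying its own proof. The paper treats the statement as an imported tool---invoking it in the proof of Lemma~\ref{psi and tau}---so there is no in-paper proof to compare against; your write-up fills that gap faithfully, including the tensor reformulation, the absorption of $\bm{\eta}$ into one factor, the normalization step that kills the rescaling freedom, and the recovery of the item-level $\bm{\theta}_{jc}$ as marginals of the product distributions.
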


\begin{corollary}\cite{allman2009}
 \label{generic identifiability corollary}
 Continue with the setting in Corollary \ref{strict identifiability corollary}. For $t = 1,2,3$, let $\bm{\Psi}_t$ denote a decomposed matrix of $\bm{\Psi}$ and $\kappa_t$ denote its row dimension. If
 \[
 \min\{C, \kappa_1\} + \min\{C, \kappa_2\} + \min\{C, \kappa_3\} \geq 2C + 2,
 \]
 Then the parameters of the restricted latent class models are generically identifiable up to label swapping.
\end{corollary}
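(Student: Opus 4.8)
The plan is to deduce this corollary from Corollary~\ref{strict identifiability corollary} (the Kruskal-rank sufficient condition for strict identifiability) together with the elementary fact that, outside a Lebesgue-null set of parameters, each decomposed matrix attains its maximal possible Kruskal rank. Concretely, for each $t=1,2,3$ the matrix $\bm{\Psi}_t$ is $\kappa_t\times C$, so its Kruskal rank $O_t$ is always at most $k_t := \min\{C,\kappa_t\}$; the crux is to show that $\{O_t < k_t\}$ is contained in a \emph{proper} algebraic subvariety of the model's parameter space, hence has measure zero.

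First I would make the genericity step precise. Every entry of $\bm{\Psi}_t$ is a polynomial (indeed a monomial) in the conditional response probabilities $\theta_{jrc}$, so every $k_t\times k_t$ minor of every column-subset of $\bm{\Psi}_t$ is a polynomial in the parameters. By definition $O_t < k_t$ iff some $k_t$-subset of the columns of $\bm{\Psi}_t$ is linearly dependent, i.e. iff all $k_t\times k_t$ minors of that subset vanish; thus $\{O_t<k_t\}$ is a finite union, over the $\binom{C}{k_t}$ column-subsets, of common zero loci of finitely many polynomials, and so is Zariski-closed. It remains to check that this closed set is proper, i.e. that there is at least one parameter point at which $O_t=k_t$. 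For the unrestricted latent class model this is immediate: choose the $C$ conditional distributions on $\mathcal{S}_{J_t}$ in general position, so that any $k_t$ of them are linearly independent. For a restricted model one must exhibit such a witness \emph{inside} the constrained parameter space, and this is where the model structure (e.g. the $Q$-matrix for CDMs) enters. Given a single witness, each $\{O_t<k_t\}$ is a proper subvariety and hence Lebesgue-null, and so is the union over $t=1,2,3$, together with the (also measure-zero, linear) locus $\{\eta_c=0 \text{ for some } c\}$, which we exclude so that all mixing weights are genuinely positive.

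Finally I would assemble the pieces. Off the exceptional measure-zero set just described we have $O_t=k_t$ for each $t$, hence $O_1+O_2+O_3 = \min\{C,\kappa_1\}+\min\{C,\kappa_2\}+\min\{C,\kappa_3\}\ge 2C+2$ by hypothesis, and all $\eta_c>0$; so Corollary~\ref{strict identifiability corollary} (equivalently Lemma~\ref{lemma kruskal 1977}, applied to the three-way array formed from $\bm{\Psi}_1,\bm{\Psi}_2,\bm{\Psi}_3$ with the $\eta_c$ absorbed into one factor) yields that the decomposition of the joint response array, and hence $(\bm{\eta},\bm{\Theta})$, is unique up to simultaneous permutation of the $C$ columns, i.e. identifiable up to label swapping. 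Since this holds off a measure-zero set, the model is generically identifiable. The main obstacle is the witness step in the middle paragraph: verifying that the maximal Kruskal rank $k_t$ is actually achieved somewhere in the \emph{restricted} parameter space rather than only in the ambient unrestricted space. This is trivial for the full latent class model but genuinely uses the combinatorial structure of the model in the restricted case, and it is precisely the point at which the later CDM-specific propositions (via constraints on the $Q$-matrix) do their work.
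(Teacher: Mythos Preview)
The paper does not give its own proof of this corollary: it is stated as a cited result from \citeA{allman2009} and used as a black box in the proofs of Lemma~\ref{psi and tau} and Theorem~\ref{generic RegLCMs}. Your sketch is essentially the standard argument from that reference---show that each $\bm{\Psi}_t$ has polynomial entries, that $\{O_t<\min(C,\kappa_t)\}$ is a Zariski-closed set, exhibit a witness where the maximal Kruskal rank is achieved so that this set is proper and hence Lebesgue-null, then invoke Corollary~\ref{strict identifiability corollary} off the null set.

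Your caveat about the witness step is exactly right, and in fact the paper makes the same point explicitly after Theorem~\ref{generic RegLCMs}: the corollary as imported from \citeA{allman2009} applies to the \emph{unrestricted} latent class model (where a generic-position witness is trivial to produce), and the restricted parameter space of a CDM may sit entirely inside the nonidentifiable measure-zero set of the ambient unrestricted model. That is why the paper cannot use this corollary for RegCDMs and instead proves Proposition~\ref{generic CDMs} separately via $Q$-matrix conditions. So despite the phrase ``restricted latent class models'' carried over from Corollary~\ref{strict identifiability corollary}, the result as used here is really the unrestricted one, and your identification of the witness step as the obstruction in the restricted case is precisely the content that the later CDM-specific results supply.
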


Combining all these results as well as Proposition 2 in \citeauthor{huang2004} \citeyear{huang2004}, we present Lemma \ref{psi and tau}, which is the key in the proof of Theorem \ref{covariate strict identifiability regLCMs}.

\begin{lemma}
\label{psi and tau}
 For the polytomous-response RegLCMs, $(\bm{\eta}^i, \bm{\Theta}^i)$ are strictly identifiable if Conditions \ref{A1} ,\ref{A2} and \ref{lemma 3.3a} hold, and are generically identifiable if Conditions \ref{A1}, \ref{A2} and \ref{lemma 3.3b} hold.
\begin{enumerate}[label=($B$\arabic*)]
   % \item \label{lemma 3.1} $S^J > C J(S-1) + C$;
    
   % \item \label{lemma 3.2} All $\bm{\beta}$, $\bm{\gamma}$, $\bm{\lambda}$ are free parameters and all $\bm{x}_i$, $\bm{z}_i$ are finite; %\textcolor{blue}{plural form? $\bm{\lambda}s/ \lambda_{dc}'s/ ..$};
	\setcounter{enumi}{2}
	\item The matrix $\bm{\Phi}$ can be decomposed into $\bm{\Phi}_1,$ $\bm{\Phi}_2,$ $\bm{\Phi}_3$, with Kruskal rank of each $\bm{\Phi}_t$ to be $I_t$ and the dimension of each $\bm{\Phi}_t$ to be $\kappa_t \times C$. We have either
    \begin{enumerate}[label=(B3.\alph*)]
    \item \label{lemma 3.3a} $I_1 + I_2 + I_3 \geq 2C + 2$; or
    \item \label{lemma 3.3b}  $\min\{C, \kappa_1\} + \min\{C, \kappa_2\} + \min\{C, \kappa_3\} \geq 2C + 2$.
    \end{enumerate}
\end{enumerate}

\end{lemma}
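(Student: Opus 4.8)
The plan is to reduce the statement to the identifiability results of \citeA{allman2009} (Corollaries~\ref{strict identifiability corollary} and~\ref{generic identifiability corollary}), which are phrased for latent class models \emph{without} covariates, by showing that for \emph{every} subject $i$ the decomposed marginal probability matrix $\bm{\Psi}_t^i$ --- built from the block $\mathcal{J}_t$ with entries $\psi_{\bm{r}c}^i = \prod_{j\in\mathcal{J}_t}\theta_{jr_jc}^i$ as in \eqref{psi definition} --- has exactly the same Kruskal rank and the same row dimension as the block $\bm{\Phi}_t$ of $\bm{\Phi}$. Thus $\bm{\Phi}$ can be traded for the genuine marginal probability matrix of an ordinary $C$-class latent class model, one for each $i$, to which the Kruskal-type machinery applies.

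First I would record the key factorization. Fix a subject $i$ and a block $\mathcal{J}_t$. Writing out $\phi_{\bm{r}c}=\prod_{j\in\mathcal{J}_t} e^{\gamma_{jr_jc}}/(1+\sum_s e^{\gamma_{jsc}})$ and $\psi_{\bm{r}c}^i=\prod_{j\in\mathcal{J}_t}\theta_{jr_jc}^i$ and cancelling, exactly as in \emph{Step 1} of the \emph{Proof of Proposition~\ref{necessary c3}} but with the product restricted to the items of $\mathcal{J}_t$, one obtains $\phi_{\bm{r}c}= Y_{t,\bm{r}}^i\, k_{t,c}^i\, \psi_{\bm{r}c}^i$, where $Y_{t,\bm{r}}^i=\prod_{j\in\mathcal{J}_t} e^{-(\lambda_{1jr_j}z_{ij1}+\cdots+\lambda_{qjr_j}z_{ijq})}$ depends on the response pattern but not on $c$, and $k_{t,c}^i$ depends on $c$ but not on the response pattern. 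Equivalently $\bm{\Phi}_t = D_t^i\,\bm{\Psi}_t^i\, E_t^i$ with $D_t^i=\mathrm{diag}(Y_{t,\bm{r}}^i)$ and $E_t^i=\mathrm{diag}(k_{t,c}^i)$, and by Condition~\ref{A2} every entry of $Y_{t,\bm{r}}^i$ and $k_{t,c}^i$ is finite and strictly positive, so $D_t^i$ and $E_t^i$ are invertible.

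Next I would use that left multiplication by an invertible matrix preserves linear independence of every subset of columns, while right multiplication by an invertible diagonal matrix merely rescales columns individually; hence $rank_K(\bm{\Phi}_t)=rank_K(\bm{\Psi}_t^i)=I_t$ for all $i$, and the row dimension of $\bm{\Psi}_t^i$ is $\kappa_t$. Then I would invoke \citeA{allman2009}: for fixed covariates the RegLCM of subject $i$ is an (unrestricted) $C$-class latent class model whose conditional response matrices are encoded by $\bm{\Psi}^i$ and whose class weights are $\bm{\eta}^i$, all proper probabilities in the open simplex by Condition~\ref{A2}, with Condition~\ref{A1} supplying the necessary counting constraint. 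If Condition~\ref{lemma 3.3a} holds then $I_1+I_2+I_3\ge 2C+2$, so Corollary~\ref{strict identifiability corollary} (equivalently Lemma~\ref{lemma kruskal 1977}) gives that $(\bm{\eta}^i,\bm{\Theta}^i)$ is strictly identifiable, up to label swapping; if Condition~\ref{lemma 3.3b} holds then $\min\{C,\kappa_1\}+\min\{C,\kappa_2\}+\min\{C,\kappa_3\}\ge 2C+2$, so Corollary~\ref{generic identifiability corollary} gives generic identifiability. Since the argument is uniform in $i$, this proves the lemma (the label-swapping ambiguity is immaterial here and is pinned down later when one passes to the constrained coefficients $(\bm{\beta},\bm{\gamma},\bm{\lambda})$ via the reference-class convention).

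The main obstacle I anticipate is the bookkeeping in the middle two steps: one must verify that the block-restricted cancellation really yields a correction that factors as $D_t^i\,\bm{\Psi}_t^i\,E_t^i$ with both outer factors \emph{diagonal} and invertible, and then be careful about precisely which matrix operations preserve Kruskal rank --- left multiplication by an arbitrary invertible matrix is fine, but the right factor must be an invertible \emph{diagonal} matrix, not an arbitrary invertible one. Once this equality of Kruskal ranks (and of row dimensions) is established, the remainder is a direct appeal to Kruskal's theorem and the Allman corollaries, together with the finiteness and positivity afforded by Conditions~\ref{A1}--\ref{A2}.
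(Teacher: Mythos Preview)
Your proposal is correct and follows essentially the same approach as the paper: the paper establishes the same row/column scaling relationship element-wise, writing $\sum_c a_c\,\bm{\psi}_{tc}^i = \bigl(\sum_c b_c\,\bm{\phi}_{tc}\bigr)\odot\bm{Y}_t^i$ with $b_c$ a nonzero multiple of $a_c$, and from this deduces only the one-sided inequality $rank_K(\bm{\Psi}_t^i)\ge I_t$ (which already suffices), whereas your cleaner matrix factorization $\bm{\Phi}_t = D_t^i\,\bm{\Psi}_t^i\,E_t^i$ with invertible diagonal factors gives the same conclusion (in fact equality) more transparently. Both then finish by invoking Corollaries~\ref{strict identifiability corollary} and~\ref{generic identifiability corollary}.
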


The proof of Lemma \ref{psi and tau} is provided in Section B.
%\newtheorem{remark}{Remark}
%\begin{remark}
%	$\bm{\phi}_{i1}, \cdots \bm{\phi}_{iC}$ are linearly independent implies the linear independence of $\psi_{i1},\cdots,\psi_{iC}$.
%\end{remark}

\begin{proof}[Proof of Theorem \ref{covariate strict identifiability regLCMs}]
\label{proof of theorem1}

From Condition \ref{C4NEW}, the Kruskal rank $I_t$ of $\bm{\Phi}_t$ satisfy the arithmetic condition of Condition \ref{lemma 3.3a} in Lemma \ref{psi and tau}. As we assumed in Theorem \ref{covariate strict identifiability regLCMs}, Conditions \ref{A1} and \ref{A2} also hold. According to Lemma \ref{psi and tau},  RegLCMs are strictly identifiable at $(\bm{\eta}^i$, $\bm{\Theta}^i)$ for $i = 1, \dots, N$. Following the similar arguments in \textit{Steps 2--3} from the \textit{Proof of Theorem \ref{local RegCDMs}}, we show that $(\bm{\beta}, \bm{\gamma}, \bm{\lambda})$ in RegLCMs are identifiable given $(\bm{\eta}^i, \bm{\Theta}^i)$ are identifiable under Condition \ref{A3}. Hence we complete the proof.
\begin{comment}
Lemma \ref{tau and tau 0} relates the identifiability of $(\bm{\eta}$, $\bm{\Theta})$ to the identifiability of $(\bm{\beta}, \bm{\gamma}, \bm{\lambda})$. Recall that given covariate $(\bm{x},\bm{z})$, $(\bm{\eta}, \bm{\Theta})$ is functionally dependent on the linear functions $\bm{x}^{T} \bm{\beta}$ and $\bm{\gamma}_{jc} + \bm{z}_{j}^{T} \bm{\lambda}_j$. Consider $(\bm{\epsilon}$, $\bm{\omega})$ as the transformed parameters of $(\bm{\eta}$, $\bm{\Theta})$, that is, for any $c = 0, \ldots, C-1$ we write
  \begin{equation}
	\eta_c = \frac{\exp(\epsilon_c)}{1 + \sum_{s=1}^{C-1}\exp(\epsilon_s) } \nonumber 
\end{equation}where
 \begin{eqnarray}
 \epsilon_c &=& \bm{x}^{T}\bm{\beta}_c = \beta_{0c} + \beta_{1c}x_{1} + \cdots + \beta_{pc}x_{p}. \nonumber 
\end{eqnarray}
Similarly for any $j = 1,\dots, J$, $r = 0, \ldots, M_j -1$ and $c = 0, \dots, C-1$, we write
\begin{equation}
	\theta_{jrc} = \frac {\exp(\omega_{jrc})}{ 1 + \sum_{s = 1}^{M_j - 1} \exp(\omega_{jsc})}. \nonumber
\end{equation}
where
\begin{equation}
 	\omega_{jrc} = {\gamma_{jrc}} + \bm{z}_{j}^{T} \bm{\lambda}_{jr} = \gamma_{jrc} + \lambda_{1jr}z_{j1} + \cdots + \lambda_{qjr}z_{jq}. \nonumber 
 	\end{equation}
Then according to Lemma \ref{tau and tau 0}, $(\bm{\eta}, \bm{\Theta})$ are identifiable implies that all $\epsilon_c$ and $\omega_{jrc}$ are identifiable. Therefore with \ref{A3} that $\bm{X}$ and $\bm{Z_j}$'s have full column rank, all $\epsilon_c$ and $\omega_{jrc}$ are identifiable implies that $(\bm{\beta}, \bm{\gamma}, \bm{\lambda})$ are strictly identifiable. 
\end{comment}
\end{proof}

\begin{proof}[Proof of Proposition \ref{covariate strict identifiability CDMs}]
	As mentioned in Section \ref{proposed identifiability conditions}, \ref{C4star} is the sufficient   condition for the identifiability of general restricted latent class models with binary responses according to Theorem 1 in \citeA{xu2017}.  This condition is further extended to restricted latent class models with polytomous responses by Theorem 2 in \citeauthor{Culpepper2019} \citeyear{Culpepper2019}. So for RegCDMs, $(\bm{\eta}^i$, $\bm{\Theta}^i)$  are strictly identifiable given Condition \ref{C4star} for  $i = 1,\dots, N$. Then based on the the similar arguments in \textit{Steps 2--3} from the \textit{Proof of Theorem \ref{local RegCDMs}}, $(\bm{\beta}, \bm{\gamma}, \bm{\lambda})$ in RegCDMs are identifiable given $(\bm{\eta}^i, \bm{\Theta}^i)$ are identifiable.
	\end{proof}
\begin{proof}[Proof of Theorem \ref{generic RegLCMs}]
	For $t = 1,2$ and $3$, the decomposed matrix $\bm{\Phi}_t$ and the decomposed matrix $\bm{\Psi}_t$ have the same row dimension $\kappa_t$. So given Condition \ref{C4prime}, Condition \ref{lemma 3.3b} in Lemma \ref{psi and tau} holds. According to Lemma \ref{psi and tau},  RegLCMs are generically identifiable at $(\bm{\eta}^i$, $\bm{\Theta}^i)$ for $i = 1, \dots, N$.
 Based on the similar arguments in \textit{Steps 2--3} from the \textit{Proof of Theorem \ref{local RegCDMs}}, $(\bm{\beta}, \bm{\gamma}, \bm{\lambda})$ in RegLCMs are generically identifiable given $(\bm{\eta}^i, \bm{\Theta}^i)$ are generically identifiable.
\end{proof}

\begin{proof}[Proof of Proposition \ref{generic CDMs}]
	In Proposition 5.1(b) of \citeauthor{gu2018partial} \citeyear{gu2018partial}, Condition \ref{C4prime2} is  sufficient for the generic identifiability of CDMs. So for RegCDMs, $(\bm{\eta}^i$, $\bm{\Theta}^i)$ are generically identifiable under Condition \ref{C4prime2} for $i = 1, \dots, N$.	Based on the the similar arguments in \textit{Steps 2--3} from the \textit{Proof of Theorem \ref{local RegCDMs}}, $(\bm{\beta}, \bm{\gamma}, \bm{\lambda})$ in RegCDMs are generically identifiable given $(\bm{\eta}^i, \bm{\Theta}^i)$ are generically identifiable.
\end{proof}

\section{Proofs of Lemmas}
\label{Proofs of Lemmas}
\begin{proof}[Proof of Lemma \ref{tau and tau 0}]
For notational convenience, we use $\bm{\eta}$, $\bm{\Theta}$, $\bm{\epsilon}$ and $\bm{\omega}$ to denote the parameters $\bm{\eta}^i$, $\bm{\Theta}^i$, $\bm{\epsilon}^i$ and $\bm{\omega}^i$ of a general subject $i$. According to the definition of identifiability, $(\bm{\eta},\bm{\Theta})$ are identifiable means that there exist no $(\bm{\eta},\bm{\Theta})$ $\neq$ $(\bm{\eta}^{\prime}$, $\bm{\Theta}^{\prime})$ such that $P(\bm{R} = \bm{r} \mid\bm{\eta},\bm{\Theta}) = P(\bm{R} = \bm{r} \mid\bm{\eta}^{\prime}, \bm{\Theta}^{\prime})$. To prove Lemma \ref{tau and tau 0} that $(\bm{\eta}$, $\bm{\Theta})$ are identifiable if and only if $(\bm{\epsilon}$, $\bm{\omega})$ are identifiable, we need to show that the transformation from $(\bm{\eta}$, $\bm{\Theta})$ to $(\bm{\epsilon}$, $\bm{\omega})$ is bijective. We next illustrate this bijective mapping from $\bm{\eta}$ to $\bm{\epsilon}$ holds by showing $({\eta_0}, \cdots, {\eta_{C-1}}) = (\eta_0^{\prime}, \cdots, \eta_{C-1}^{\prime})$ if and only if $({\epsilon_0}, \cdots, {\epsilon_{C-1}}) = (\epsilon_0^{\prime}, \cdots, \epsilon_{C-1}^{\prime})$. 

First, we show that $({\eta_0}, \cdots, {\eta_{C-1}}) = (\eta_0^{\prime}, \cdots, \eta_{C-1}^{\prime})$ implies $({\epsilon_0}, \cdots, {\epsilon_{C-1}}) = (\epsilon_0^{\prime}, \cdots, \epsilon_{C-1}^{\prime})$. For $c = 0, \ldots, C-1$, under the condition that
\[
  \eta_c =  \frac{e^{\epsilon_c}}{1 +\sum_{s = 1}^{C-1} e^{\epsilon_s} } =\frac{e^{\epsilon_c^{\prime}}}{1 +\sum_{s = 1}^{C-1} e^{\epsilon_s^{\prime}} } = \eta_c^{\prime},
\]
we can write
\begin{equation}
e^\delta = \frac{e^{\epsilon_0}}{e^{\epsilon_0^{\prime}}} = \cdots = \frac{e^{\epsilon_c}}{e^{\epsilon_c^{\prime}}} = \cdots =\frac{e^{\epsilon_{C-1}}}{e^{\epsilon_{C-1}^{\prime}}} = \frac{1 +\sum_{s = 1}^{C-1} e^{\epsilon_s} }{1 +\sum_{s = 1}^{C-1} e^{\epsilon_s^{\prime}}}, \nonumber
\end{equation}
where $e^\delta$ denotes the common ratio among all ${e^{\epsilon_c}}/{e^{\epsilon_c^{\prime}}}$. Hence
\begin{equation}
\delta = \epsilon_c - \epsilon_c^{\prime} , \quad  c = 0,\cdots,C-1. \label{all epsilon equal}
\end{equation}
Substituting every $\epsilon_c^{\prime}$ with $\epsilon_c - \delta$ into the equation $\eta_0 = \eta_0^{\prime}$, we have
\begin{equation}
\frac{e^{\epsilon_{0}}}{1+\sum_{s=1}^{C-1} e^{\epsilon_{s}}}=\frac{e^{\epsilon_{0}-\delta}}{1+\sum_{s=1}^{C-1} e^{\epsilon_{s}-\delta}}, \nonumber
\end{equation}
Further simplifying the above equation gives
\begin{equation}
\frac{1}{1+\sum_{s=1}^{C-1} e^{\epsilon_{s}}} = \frac{1}{e^{\delta}+\sum_{s=1}^{C-1} e^{\epsilon_{s}}},  \nonumber
\end{equation}
and then we have
\begin{equation}
e^{\delta}+\sum_{s=1}^{C-1} e^{\epsilon_{s}} \nonumber = 1 + \sum_{s=1}^{C-1} e^{\epsilon_{s}}, \nonumber
\end{equation}
which has unique solution $\delta = 0 $. Taking $\delta = 0 $ back into  \eqref{all epsilon equal}, we have $\epsilon_c = \epsilon_c^{\prime}$ for all $c = 0, \ldots, C-1$. Therefore $\bm{\epsilon} = (\epsilon_0, \cdots, \epsilon_{C-1}) $ is equivalent to $\bm{\epsilon^{\prime}} = (\epsilon_0^{\prime}, \cdots, \epsilon_{C-1}^{\prime})$.  

Next we prove $({\epsilon_0}, \cdots, {\epsilon_{C-1}}) = (\epsilon_0^{\prime}, \cdots, \epsilon_{C-1}^{\prime})$ implies $({\eta_0}, \cdots, {\eta_{C-1}}) = (\eta_0^{\prime}, \cdots, \eta_{C-1}^{\prime})$. This part is straightforward as $({\epsilon_0}, \cdots, {\epsilon_{C-1}}) = (\epsilon_0^{\prime}, \cdots, \epsilon_{C-1}^{\prime})$ implies that for any $c = 0, \ldots, C-1$, we have
\begin{equation}
	 \frac{\exp(\epsilon_{c})}{1 + \sum_{s = 1}^{C-1} \exp(\epsilon_{s})} =  \frac{\exp(\epsilon_{c}^{\prime})}{1 + \sum_{s = 1}^{C-1} \exp(\epsilon_{s}^{\prime})}. \nonumber
\end{equation}
Equivalently, we show $\eta_c = \eta_c^{\prime}$ for any $c = 0, \ldots, C-1$. So $({\eta_0}, \cdots, {\eta_{C-1}}) = (\eta_0^{\prime}, \cdots, \eta_{C-1}^{\prime})$. Combining the above arguments, we prove $\bm{\eta} = \bm{\eta^{\prime}} \ $if and only if $\bm{\epsilon} = \bm{\epsilon^{\prime}}$.

Similar arguments can be applied to show $\bm{\Theta} = \bm{\Theta^{\prime}}$ if and only if $\bm{\omega} = \bm{\omega^{\prime}}$.  Hence $(\bm{\eta}$, $\bm{\Theta})$ are identifiable if and only if $(\bm{\epsilon}$, $\bm{\omega})$ are identifiable.
\end{proof}

	\begin{proof}[Proof of Lemma \ref{lemma_in_thm2_step1}]
	We  prove the the first part, that is, $\mathbf{J}^0$ has full column rank if and only if $\overline{\mathbf{J}}^{0}$
		 are linearly independent. The second part regarding $\mathbf{J}^i$ can be similarly proved.
		  
		  To show the linear independence of $\mathbf{J}^0$ or $\overline{\mathbf{J}}^{0}$, we need to establish the relationship between the two linear combinations as follows.
		  For any linear combinations of the columns in $\mathbf{J}^0$ with coefficients $h_c^0$'s and $l_{jrc}^0$'s, there exist $a_c^0$'s and $b_{jrc}^0$'s such that the following equation holds.
\begin{eqnarray}
&& \sum_{c=1}^{C-1}h_{c}^0(\bm{\psi}_{c}^0- \bm{\psi}_0^0) + \sum_{j=1}^{J} \sum_{r=1}^{M_j-1} \sum_{c=0}^{C-1}\left\{l_{jrc}^0 \eta_{c}^0(\frac{\mathbf{I}\{{r}_{j} = r\} }{\theta_{jrc}^0} - \frac{\mathbf{I}\{{r}_{j} = 0\} }{\theta_{j0c}^0}   ) \odot \bm{\psi}_{c}^0 \right\} \label{li J0} \\
&& = 	\sum_{c=0}^{C-1}a_{c}^0\bm{\psi}_{c}^0+ \sum_{j=1}^{J} \sum_{r=0}^{M_j-1} \sum_{c=0}^{C-1}\left\{b_{jrc}^0 \eta_{c}^0(\frac{\mathbf{I}\{{r}_{j} = r\} }{\theta_{jrc}^0}  ) \odot \bm{\psi}_{c}^0 \right\}, 	\label{trans li J0}
\end{eqnarray}
where \begin{equation}
\label{a to h}
	a_c^0 =
  \begin{cases}
    h_c^0, & \text{if $c \neq 0$},  \\
     - (h_1^0 + \dots + h_{C-1}^0), & \text{if $c = 0$}, \\
  \end{cases}   \\
\end{equation}
and for any $j = 1,\dots, J$, $c= 0, \dots, C-1$,
\begin{equation}
\label{b to l}
  b_{jrc}^0 =
  \begin{cases}
    l_{jrc}^0, & \text{if $r \neq 0, $} \\
     - (l_{j1c}^0 + \dots + l_{j(M_j-1)c}^0), & \text{if $r = 0$}.  \\
  \end{cases}  
\end{equation}
With the above relationship established, we next show that $\mathbf{J}^0$ has full column rank if and only if $\overline{\mathbf{J}}^{0}$ are linearly independent. When   $\overline{\mathbf{J}}^{0} $ are linearly independent, \eqref{trans li J0} = $\bm{0}$ implies $a_c^0 = 0$ and $b_{jrc}^0 = 0$, which further implies $h_c^0 = 0$ and $l_{jrc}^0 = 0$ by \eqref{a to h} and \eqref{b to l}. So \eqref{li J0} = $\bm{0}$ implies $h_c^0 = 0$ and $l_{jrc}^0 = 0$. Hence, $\mathbf{J}^0$ has full column ranks. Similarly, when $\mathbf{J}^0$ has full column ranks, \eqref{li J0} = $\bm{0}$ implies $h_c^0 = 0$ and $l_{jrc}^0 = 0$ which further implies $a_c^0 = 0$ and $b_{jrc}^0 = 0$ by \eqref{a to h} and \eqref{b to l}. So \eqref{trans li J0} = $\bm{0}$ implies $a_c^0 = 0$ and $b_{jrc}^0 = 0$. Hence, $\overline{\mathbf{J}}^{0}$ are linearly independent.	\end{proof}

\begin{proof}[Proof of Lemma \ref{psi and tau}]
This proof is motivated from the \textit{Proof of Proposition 2} in \citeauthor{huang2004} \citeyear{huang2004}. Before presenting the proof, we set up a few notations. In Section \ref{proposed identifiability conditions}, $\bm{\Phi}$ can be decomposed into $\bm{\Phi}_1, \bm{\Phi}_2$ and $\bm{\Phi}_3$, where each $\bm{\Phi}_t$ has Kruskal rank $I_t$ and row dimension $\kappa_t$. 
%When covariates are involved in models, according to the definition of $\Psi$ in \eqref{psi definition}, we have $\Psi$ dependent on $\bm{\Theta}^i$ and thus varies among subjects. We denote $\Psi$ of subject $i$ as $\Psi^i$ for $i = 1, \dots, N$. 
And in Appendix \ref{Proofs of Propositions and Theorems}, $\bm{\Psi}^i$ can be decomposed into $\bm{\Psi}_1^i, \bm{\Psi}_2^i$ and $\bm{\Psi}_3^i$, where each $\bm{\Psi}_t^i$ has Kruskal rank $O_t^i$ and the same row dimension $\kappa_t$ as $\bm{\Phi}_t$. Denote the columns in $\bm{\Phi}_t$ to be $\bm{\phi}_{t0}, \cdots, \bm{\phi}_{t(C-1)}$ and the columns in $\bm{\Psi}_t^i$ to be $\bm{\psi}_{t0}^i, \cdots, \bm{\psi}_{t(C-1)}^i$. Conditions \ref{A1} and \ref{A2} are shown to be necessary in Section \ref{Models and Existing Works} and assumed to hold. To prove Condition \ref{lemma 3.3a} is sufficient for the strict identifiability of $(\bm{\eta}^i$, $\bm{\Theta}^i)$, we first need to show that for $t = 1,2$ and $3$, given $\bm{\Phi}_t$ has Kruskal rank $I_t$, the equation $O_t^i \geq I_t$ holds, so that $I_1 + I_2 + I_3 \geq 2C + 2$ from Condition \ref{lemma 3.3a} implies  $O_1^i + O_2^i + O_3^i \geq 2C + 2$. Then based on Corollary  \ref{strict identifiability corollary} that $(\bm{\eta}^i$, $\bm{\Theta}^i)$ are strictly identifiable under the condition that $O_1^i + O_2^i + O_3^i \geq 2C + 2$, we complete the proof of strict identifiability.

The remaining part is to show $O_t^i \geq I_t$ for $t = 1,2$ and $3$ and for $i = 1, \dots, N$. Without loss of generality, we only show $O_1^i \geq I_1$, then $O_2^i \geq I_2$ and $O_3^i \geq I_3$ can be similarly proved.  Under the condition that any set of $I_1$ columns in $\bm{\Phi}_1$ are linearly independent, $\bm{\phi}_{1\sigma(1)}, \cdots ,\bm{\phi}_{1\sigma({I_1})}$ are linearly independent for any permutation $\sigma$ on $\{1,\dots, I_1\}$ such that $ \{\sigma(1), \ \sigma(2), \cdots, \ \sigma(I_1)\} \subseteq \{0,\cdots, C-1\}$. To show $O_t^i \geq I_t$, we need $\bm{\psi}_{1\sigma(1)}^i, \cdots ,\bm{\psi}_{1\sigma({I_1})}^i$ to be linearly independent for any permutation set $ \{\sigma(1), \ \sigma(2), \cdots, \ \sigma(I_1)\}$. The linear combinations of $\bm{\phi}_{1\sigma(1)}, \cdots ,\bm{\phi}_{1\sigma({I_1})}$ can be associated with the linear combinations of $\bm{\psi}_{1\sigma(1)}^i, \cdots ,\bm{\psi}_{1\sigma({I_1})}^i$ as follows. For any permutation $\sigma$ and $a_{\sigma(c)}$, there exists $b_{\sigma(c)}$ and $\bm{Y_1}^i$ such that 
\begin{equation}
\label{ac acstar}
	\sum_{c=1}^{I_1} a_{\sigma(c)} \bm{\psi}_{1\sigma(c)}^i = (\sum_{c=1}^{I_1} b_{\sigma(c)}\bm{\phi}_{1\sigma(c)})\odot \bm{Y_1}^i      
\end{equation}
where \begin{eqnarray}
	 \bm{Y_1}^i &=& \left( \prod_{j \in \mathcal{J}_1} \text{exp}{(\lambda_{1jr_j}z_{ij1} + \cdots + \lambda_{qjr_j}z_{ijq}}) : \bm{r} = (r_1,\dots, r_J) \in \mathcal{S}_{J_1}\right)_{\kappa_1 \times 1}, \nonumber \\
b_{\sigma(c)} &=& a_{\sigma(c)} \prod_{j \in \mathcal{J}_1} \frac{1 + \sum_{s = 1}^{M_j-1 } e^{\gamma_{js\sigma(c)}}}{ 1 + \sum_{s = 1}^{M_j - 1}\exp (\gamma_{js\sigma(c)} + \lambda_{1js}z_{ij1} + \cdots + \lambda_{qjs}z_{ijq})}. \label{bsigma asigma}
\end{eqnarray}
To show $\bm{\psi}_{1\sigma(1)}^i, \cdots ,\bm{\psi}_{1\sigma({I_1})}^i$ to be linearly independent, we need to show $\sum_{c=1}^{I_1} a_{\sigma(c)} \bm{\psi}_{1\sigma(c)}^i = \bm{0}$ implies $a_{\sigma(c)} =0$ for any $\sigma$. Based on \eqref{ac acstar}, we have $\sum_{c=1}^{I_1} a_{\sigma(c)} \bm{\psi}_{1\sigma(c)}^i = \bm{0}$ implies $\sum_{c=1}^{I_1} b_{\sigma(c)}\bm{\phi}_{1\sigma(c)} = \bm{0}$.  Under the condition that $\bm{\phi}_{1\sigma(1)}, \cdots ,\bm{\phi}_{1\sigma({I_1})}$ are linear independent, $\sum_{c=1}^{I_1} b_{\sigma(c)}\bm{\phi}_{1\sigma(c)} = \bm{0}$ implies $b_{\sigma(1)} = \cdots =  b_{\sigma(I_1)} = 0$. And by \eqref{bsigma asigma}, $a_{\sigma(1)} = \cdots = a_{\sigma(I_1)} =0$. Hence $\bm{\psi}_{1\sigma(1)}^i, \cdots ,\bm{\psi}_{1\sigma({I_1})}^i$ are linearly independent for any $\sigma$. Hence we show $O_1^i \geq I_1$ for $i = 1, \ldots, N$ and complete the proof for strict identifiability.

For Condition \ref{lemma 3.3b}, because each $\bm{\Psi}_t^i$ has row dimension $\kappa_t$ the same as $\Phi_t$ does and we have $\min\{C, \kappa_1\} + \min\{C, \kappa_2\} + \min\{C, \kappa_3\} \geq 2C + 2$, according to Corollary \ref{generic identifiability corollary},  $(\bm{\eta}^i$, $\bm{\Theta}^i)$ are generically identifiable under Condition \ref{lemma 3.3b} for $i = 1, \ldots, N$. 
\end{proof}

\end{document}